%
\documentclass[12pt]{article}
\pdfoutput=1
\usepackage[square,comma,numbers,sort&compress]{natbib}
\usepackage{graphicx,epstopdf,amssymb,amsfonts,amsmath,amsthm,array,
mathrsfs,amscd,enumitem}
\usepackage{tikz}
\usetikzlibrary{shapes.misc}
\tikzset{cross/.style={cross out, draw=black, minimum size=2*(#1-\pgflinewidth), inner sep=0pt, outer sep=0pt},
cross/.default={3pt}}
\usetikzlibrary{decorations.pathmorphing}
\usepackage{float}
\usepackage{xcolor}
\usepackage{appendix}
\usepackage{fnpct}
\usepackage{lmodern}
\usepackage[T1]{fontenc}
\usepackage[utf8]{inputenc}
\usepackage{hyperref}
\DeclareGraphicsRule{.tif}{png}{.png}{`convert #1 `basename #1 .tif`.png}

\oddsidemargin 0in
\textwidth 6in\topmargin 0in\headheight 0in
\textheight 8.5in\parskip .75ex
\voffset -.75cm\hoffset .5cm
\newcommand{\pbs}[1]{\let\temp=\\#1\let\\=\temp}
\numberwithin{equation}{section}
%
\def\be{\begin{equation}}\def\ee{\end{equation}}
%

\providecommand{\abs}[1]{\lvert#1\rvert}
\def\cvp{\raise 2pt\hbox{,}}

\def\re{\mathop{\text{Re}}\nolimits}  

 \def\d{{\rm d}} 
\def\la{\lambda}\def\La{\Lambda}
 \def\uone{\text{U}(1)}


\DeclareMathOperator{\diff}{Diff}

\DeclareMathOperator{\Vol}{\text{Vol}}\def\mani{\mathscr M}\def\disk{\mathscr D}

\def\Htwo{\text{H}^{2}}\def\Sone{\text{S}^{1}}

\def\PslR{\text{PSL}(2,\mathbb R)}

\def\Seff{S_{\text{eff}}}

\def\PslR{\text{PSL}(2,\mathbb R)}

\def\Htwo{\text{H}^{2}}

\def\la{\lambda}
\def\La{\Lambda}

\def\mani{\mathscr M}\def\disk{\mathscr D}
\def\Sdil{S_{\text{dil}}}

\def\uone{\text{U}(1)}

\def\SigB{\Sigma_{\text B}}\def\XiB{\Xi_{\text B}}

\newtheorem{proposition}{Proposition}[section]

\def\plb#1#2#3{{\it Phys.\ Lett.\ }{\bf B #1} (#2) #3}
\def\npb#1#2#3{{\it Nucl.\ Phys.\ }{\bf B #1} (#2) #3}

\def\prl#1#2#3{{\it Phys.\ Rev.\ Lett.\ }{\bf #1} (#2) #3}
\def\jhep#1#2#3{{\it J. High Energy Phys.\ }{\bf #1} (#2) #3}
\def\prd#1#2#3{{\it Phys.\ Rev.\ }{\bf D #1} (#2) #3}

\def\cmp#1#2#3{{\it Comm.\ Math.\ Phys.\ }{\bf #1} (#2) #3}

\def\jmp#1#2#3{{\it J.\ Math.\ Phys.\ }{\bf #1} (#2) #3}

\def\imath#1#2#3{{\it Invent math }{\bf #1} (#2) #3}

\def\jpa#1#2#3{{\it J.\ Phys.\ }{\bf A #1} (#2) #3}

\begin{document}
%
%
{\pagestyle{empty}
\parskip 0in
\

\vfill
\begin{center}
{\LARGE Finite cut-off JT and Liouville quantum gravities}

\bigskip

{\LARGE on the disk at one loop}


\vspace{0.4in}

Soumyadeep C{\scshape haudhuri} and Frank F{\scshape errari}

\medskip
{\it Service de Physique Th\'eorique et Math\'ematique\\
Universit\'e Libre de Bruxelles (ULB) and International Solvay Institutes\\
Campus de la Plaine, CP 231, B-1050 Bruxelles, Belgique}

\smallskip
{\tt chaudhurisoumyadeep@gmail.com, frank.ferrari@ulb.be}
\end{center}
\vfill\noindent

Within the path integral formalism, we compute the disk partition functions of two dimensional Liouville and JT quantum gravity theories coupled to a matter CFT of central charge $c$, with cosmological constant $\La$, in the limit $c\rightarrow -\infty$, $|\La|\rightarrow\infty$, for fixed $\La/c$ and fixed and finite disk boundary length $\ell$, to leading and first subleading order in the $1/|c|$ expansion. In the case of Liouville theory, we find perfect agreement with the asymptotic expansion of the known exact FZZT partition function. In the case of JT gravity, we obtain the first explicit results for the partition functions at finite cut-off, in the three versions (negative, zero and positive curvature) of the model. Our findings are in agreement with predictions from the recent proposal for a microscopic definition of JT gravity, including the $c\rightarrow -\infty$ expansion of the Hausdorff dimension of the boundary. In the negative curvature case, we also provide evidence for the emergence of an effective Schwarzian description at length scales much greater than the curvature length scale.

\vfill

\medskip
%
%
\newpage\pagestyle{plain}
\baselineskip 16pt
\setcounter{footnote}{0}

}

\tableofcontents

\section{Introduction}

\subsection{Generalities}

Recently, a general framework has been proposed for the study of finite cut-off JT quantum gravity from first principles \cite{Fer1,Fer2,Fer3}, from both a discretized, statistical physics point of view and a continuum, field theoretic point of view. This framework provides a rigorous microscopic definition of the model and thus allows, in principle, to study the full quantum theory for any finite values of the parameters. 

JT quantum gravity may be defined in negative, zero or positive curvature and we shall deal with the three cases in parallel. We limit ourselves to Euclidean signature and to the disk topology. We consider the case in which JT gravity is coupled to an arbitrary matter CFT of central charge $c$. We focus on the computation of the partition functions, denoted by $Z^{0}$, $Z^{-}$ and $Z^{+}$ in the zero, negative and positive curvature theories, respectively, using the path integral formalism.

According to \cite{Fer1}, the boundary in JT gravity is a fractal curve and thus has no standard geometrical length. The naive length is replaced by a renormalized quantum length parameter $\beta$,\footnote{The parameter $\beta$ is a microscopic parameter of the theories at finite cut-off and must not be confused with another parameter, often denoted by $\beta$ as well in the literature, that appears in the so-called Schwarzian limit of the negative curvature model. To avoid confusion, we shall denote by $\beta_{\text S}$ the Schwarzian limit parameter in the following.} that has anomalous Hausdorff dimension $d_{\text H}=1/\nu$, such that $\beta^{\nu}$ has the dimension of length. The critical exponent $\nu$ is predicted to be \cite{Fer1}
\be\label{critexponents} \nu = \frac{1}{2}\Biggl[1+\sqrt{\frac{c}{c-24}}\Biggr]\, .\ee
The parameter $\beta$ is very similar to the diffusion time $\beta$ for a Brownian process. The critical exponent $\nu=1/2$ for pure, $c=0$, JT gravity, actually coincides with the one for Brownian curves, even though the JT gravity boundaries are not arbitrary random closed loops \cite{Fer1,Fer2}. The zero curvature theory, with cosmological constant $\La$, thus has a unique dimensionless parameter $\La\beta^{2\nu}$ on top of the central charge $c$. The partition function at fixed boundary quantum length is predicted in \cite{Fer1} to take the general form
\be\label{Zzerexact} Z^{0} = \beta^{-1-2\nu\vartheta}g^{0}\bigl(\La\beta^{2\nu},c\bigr)
\ee
for some non-trivial function $g^{0}$ that has a smooth small $\La$ limit and a critical exponent
\be\label{varthetaexp}\vartheta = 2-\frac{c}{12}\,\cdotp\ee
In non-zero curvature, the curvature scale $L$ allows to build a second dimensionless parameter $\beta/L^{1/\nu}$. The partition function then takes the general form
\be\label{Zpmerexact} Z^{\pm} = \beta^{-1-2\nu\vartheta}g^{\pm}\bigl(\La\beta^{2\nu},\beta/L^{1/\nu},c\bigr)\ee
for functions $g^{+}$ and $g^{-}$ that must be such that
\be\label{flatlimgen} \lim_{L\rightarrow +\infty} g^{\pm}\bigl(\La\beta^{2\nu},\beta/L^{1/\nu},c\bigr) = g^{0}\bigl(\La\beta^{2\nu},c\bigr)\, .\ee

The vast majority of the published works on JT gravity focuses on the negative curvature case in the so-called Schwarzian limit, see e.g.\ \cite{JTappli1} and \cite{Fer2} for a review. The limit is defined as
\be\label{Schlimit} \La L^{2}\rightarrow -\infty\, ,\quad \frac{\ell_{\text S}}{L}\rightarrow\infty\, ,\quad \frac{2\ell_{\text S}}{|\La|L^{3}} = \beta_{\text S}\quad\text{fixed,}\ee
assuming that a long-distance effective smooth boundary length parameter $\ell_{\text S}$ exists. In this limit, it is conjectured that an effective description of the theory, in terms of the so-called reparameterization ansatz and the Schwarzian action, becomes valid. A microscopic definition of the model is then not needed to derive its leading order properties. This effective description is similar to the IR Schwarzian description in the SYK or tensor models \cite{JTappli2}. It should emerge when $\ell_{\text S}\gg L$. The microscopic properties, like the fractal structure of the boundary, become relevant below the curvature length scale $L$. The limit $\ell_{\text S}/L\rightarrow\infty$ is usually called the infinite cut-off limit, consistently with the holographic interpretation. Note that, strictly speaking, the smooth length parameter $\ell_{\text S}$ itself emerges only in this limit and should be expressed in terms of the microscopic parameters $\beta$, $L$ and $\La$ \cite{Fer2}.

\subsection{The $c\rightarrow -\infty$ limit}

One of the main motivation for the present work is to use some of the ideas presented in \cite{Fer1,Fer2,Fer3}, combined with an interesting semiclassical approximation scheme explained below, to derive, from first principles and for the first time, explicit results in negative curvature JT gravity away from the Schwarzian limit. We shall also use the same set of ideas and tools to study the positive and zero curvature models. 

Instrumental in our analysis is an old idea of Zamolodchikov \cite{Zamolod}, that was applied in the past to derive the first non-trivial quantum results for the Liouville quantum gravity theory on the sphere. Zamolodchikov's idea was discussed further recently in the context of Liouville gravity \cite{WittLiousemicl,Muhlmann,StanLiousecl} and also to study generalisations of the Liouville model that occur when non-conformal matter is coupled to gravity \cite{FerK1,FerK2}. In our context, it amounts to studying the limit
\be\label{Zamolodlimit} c\rightarrow -\infty\, ,\quad |\La|\rightarrow +\infty\, ,\quad \frac{\La}{|c|} = \frac{2\mu}{3}\quad \text{and}\quad \beta\quad \text{fixed.}\footnote{The factor of $2/3$ is a convenient convention in the definition of the parameter $\mu$, see below.}\ee
In this limit, it turns out that the path integral is dominated by a smooth classical geometry, at least for some range of the parameter $\mu$. We can then, in principle, compute all the observables in the model in a systematic loopwise $1/|c|$ expansion, using the saddle-point method. Consistent with this idea, the exact result \eqref{critexponents} admits an expansion
\be\label{nuexp} \nu = \sum_{k\geq 0}\nu_{k}|c|^{-k} = 1 - \frac{6}{|c|} +\frac{108}{|c|^{2}}  + \cdots\, ,\ee
the term proportional to $|c|^{-k}$ being a $k$-loop contribution. Note that the leading order value $\nu=1$ corresponds to a smooth boundary, consistently with the fact that the limit $c\rightarrow -\infty$ is semiclassical. In this limit we can thus identify $\beta$ with a genuine geometrical length at leading order. For this reason, we shall use the notation
\be\label{betaeqell} \beta = \ell\, .\ee
The loop corrections to the exponent $\nu$ will give $\ell$ an anomalous dimension. This anomalous dimension is itself the signature of the fractal nature of the boundary in the microscopic description of the models. Our one-loop computation below will confirm this expectation and we will find a perfect match with the $-6/|c|$ term in \eqref{nuexp}, see in particular the discussion in Sec.\ \ref{finalSec}. More generally, the logarithm of the partition functions are expected to have a large $|c|$ expansion of the form
\be\label{Zsemiclexpansion} \ln Z^{\eta} = \sum_{k\geq 0}|c|^{1-k}f^{\eta}_{k}\, .\ee
The main new result of our paper is to compute the leading and first subleading terms in this expansion, see Eqs.\ \eqref{fzeromin}, \eqref{fzerozero}, \eqref{fzeroplus} and \eqref{fzeroplusbis} for the tree-level contributions and Eqs.\ \eqref{Zzerofinal}, \eqref{Zminusfinal} and \eqref{Zplusfinal} for the one-loop contributions, the dimensionless parameters $x$ and $y$ being defined in \eqref{xydefJTmin}. Combining the one-loop results with the exact statements \eqref{critexponents} and \eqref{Zzerexact}, one can even extract predictions at any loop order, see Sec.\ \ref{finalSec}.

Interestingly, the usual Liouville quantum gravity theory on the disk can also be studied in the limit \eqref{Zamolodlimit}. The path integral turns out to be dominated by the same classical geometry relevant to the negative curvature JT model. As for the JT models, the $1/|c|$ expansion can thus be studied systematically, at least in principle, generalising the sphere computation of \cite{Zamolod}. In particular, one expects an expansion 
\be\label{ZLsemicl} \ln Z_{\text L} = \sum_{k\geq 0}|c|^{1-k} f_{k}\ee
for the Liouville disk partition function $Z_{\text L}$, of the same form as for the JT theory, Eq.\ \eqref{Zsemiclexpansion}.

Although our main interest is in JT, we have also studied the expansion \eqref{ZLsemicl} in the Liouville theory, having in mind two main motivations. First, rigorous path integral computations in Liouville theory are very few \cite{WittLiousemicl,Muhlmann,StanLiousecl} and the calculation of the one-loop contribution $f_{1}$ was never done before. This calculation is indeed much more involved on the disk than on the sphere, both for technical and conceptual reasons. But one must overcome essentially the same difficulties to do the JT gravity calculations and thus one can get $f_{1}$ for Liouville with very little additional effort if one has gotten $f_{1}^{\pm}$ for JT. Actually, the calculations in Liouville and JT can be naturally presented in parallel, and this is what we shall do below. Second, an exact formula for the partition function of Liouville gravity on the disk was proposed by FZZT in \cite{FZZT}. The ideas used by FZZT to obtain their results, based on general assumptions in the operator formalism, generalising earlier works by DOZZ \cite{DOZZ} which apply to the sphere, are independent of the path integral approach that we use in our work. Comparing their prediction with our explicit tree-level and one-loop computations then allows to check our methods in a very non-trivial way. Explicitly, in terms of the dimensionless parameter\footnote{Liouville theory is defined only for positive or zero cosmological constant.}
\be\label{defxpara} x = \frac{\ell\sqrt{\mu}}{2\pi}\,\cvp\ee
FZZT predicts
\be\label{exactWLioudiskc}  Z_{\text L} = w \ell^{\gamma-3}x^{1-\gamma} K_{1-\gamma}\Biggl[\sqrt{\frac{\pi|c|}{6\sin\frac{\pi}{1-\gamma}}}\, x\Biggr]\, ,\ee
where $K$ is the modified Bessel function of the second kind and $\gamma$ the so-called string susceptibility exponent, given by the famous KPZ formula
\be\label{gammaKPZ} \gamma = \frac{1}{12}\Bigl(c-1-\sqrt{(1-c)(25-c)}\Bigr)\, .\ee
The constant $w$ is an overall $\ell$-independent and non-universal, scheme-dependent, proportionality factor. We conventionally fix it by imposing that
\be\label{normalW} Z_{\text L}(\ell,0) = \Bigl(\frac{\ell}{2\pi}\Bigr)^{\gamma-3}\, ,\ee
which yields
\be\label{wpropL} w = \frac{8\pi^{3-\gamma}}{\Gamma(\gamma-1)}\biggl(\frac{6\sin\frac{\pi}{1-\gamma}}{\pi|c|}\biggr)^{\frac{\gamma-1}{2}}\, .\ee
In the limit \eqref{Zamolodlimit}, the exact result \eqref{exactWLioudiskc} yields, as expected, a semiclassical expansion of the form \eqref{ZLsemicl}, with explicit predictions for the $k$-loop contributions. At tree-level and one-loop, one straightforwardly gets, from the asymptotics of the Bessel function,
\begin{align}\label{ZLloopcoeff0} f_{0} & = -\frac{1}{6}\ln\frac{\ell}{2\pi} +\frac{1}{6}\ln\frac{1+\sqrt{1+x^{2}}}{2}-\frac{1}{6}\bigl(\sqrt{1+x^{2}}-1\bigr)\\
\label{ZLloopcoeff1} f_{1} & = -\frac{25}{6}\ln\frac{\ell}{2\pi} + \frac{13}{6}\ln\Bigl[\frac{1}{2}\bigl(1+\sqrt{1+x^{2}}\bigr)\Bigr]-\frac{13}{12}\bigl(\sqrt{1+x^{2}}-1\bigr) - \frac{1}{4}\ln\bigl(1+x^{2}\bigr)\, .
\end{align}
Our direct semiclassical evaluation of the path integral below will reproduce very precisely the above coefficients $f_{0}$ and $f_{1}$, see Sec.\ \ref{saddleLSec} and \ref{LoneSec}.

\subsection{Summary of the key findings}

At the technical level, computing the one-loop contributions requires the explicit evaluation of bulk and boundary determinants, which we do using $\zeta$-function techniques. The calculation of the bulk determinants in a finite size geometry turns out to be particularly involved. In order to improve the clarity of our presentation, we have decided to present this part of the calculation in a separate paper \cite{detpaper} whose results are summarised in Sec.\ \ref{detSec}. Also for ease of reading, we have given the calculations of the boundary determinants in the Appendix \ref{bddetApp}.

The main subtlety, however, is conceptual. After going to conformal gauge, the metric degrees of freedom are encoded in the conformal factor, the logarithm of which is called the Liouville field. The crucial point, emphasized in \cite{Fer1,Fer3}, is that \emph{the bulk and the boundary Liouville fields must be treated as independent integration variables in the path integral.} In other words, the Liouville field has free boundary conditions.\footnote{A discussion of the case of the Liouville theory on the disk in the grand canonical ensemble, with boundary cosmological constant $\la$ and in the limit $\la/|c|\rightarrow -\infty$, appears in \cite{StanLiousecl} (the limit $\la/|c|\rightarrow -\infty$ in the grand canonical ensemble is similar to the limit $\ell\rightarrow\infty$ in the fixed $\ell$ ensemble on which we focus  in our work). As far as we understand, the crucial boundary degrees of freedom were not taken into account in \cite{StanLiousecl}.} We show in this paper how to use explicitly these new boundary conditions. In particular, we check that they yield results that are consistent with the known exact formulas in the case of the Liouville theory.

The most important and novel application is to JT gravity. On the one hand, our results confirm that the UV structure of the models, which is independent of the choice of bulk curvature, is governed by an effective boundary action derived from the Liouville action. This yields in particular the expected $1/|c|$ expansion for the Hausdorff dimension of the boundary whose exact form was predicted in \cite{Fer1}. On the other hand, in the case of the negative curvature theory, our results are also consistent with the idea that the Schwarzian description emerges as an IR effective description on length scales much larger than the curvature length scale. Let us emphasise that the Schwarzian theory does not describe correctly the short scale structure of the JT gravity boundaries. This is a fundamentally new result that was presented in \cite{Fer1} and that we check here explicitly by a direct semi-classical path integral calculation.

\subsection{Plan of the paper}

The plan of the paper is as follows. In Section \ref{defbasicSec}, we present the models, introducing the relevant classical geometries and actions and treating the basics of the path integral formulation, including a discussion of conformal gauge, the separation between the bulk and boundary degrees of freedom, the role of the Liouville action and the gauge fixing of the $\PslR$ group of disk automorphisms. We are also putting in place some essential ingredients of the one-loop calculations. In Section \ref{clSec}, we determine the classical geometries that dominate in the limit \eqref{Zamolodlimit} by solving the relevant field equations in Liouville gravity and in the various versions of JT gravity. This yields the tree-level coefficients $f_{0}^{0}$, $f_{0}^{\pm}$ and $f_{0}$ in the expansions \eqref{Zsemiclexpansion} and \eqref{ZLsemicl}. In Section \ref{quadSec}, we compute the quadratic fluctuations around the classical backgrounds. These fluctuations determine the one-loop quantum corrections. We also discuss the stability of the various backgrounds. In Section \eqref{oneloopSec}, we perform the relevant Gaussian integrations. The bulk fluctuations produce a finite-size determinant $\det_{\text D} (\Delta + M^{2})$, where $\Delta$ is the positive Laplacian with Dirichlet boundary condition on the round disk of circumference $\ell$ embedded in hyperbolic space, the two-sphere or the Euclidean plane and $M^{2}$ is fixed to the particular values $+2/L^{2}$, $-2/L^{2}$ or zero, respectively. The determinants for arbitrary values of $M^{2}$ are studied in \cite{detpaper}. Remarkably, they can be evaluated explicitly in terms of elementary functions and the Euler $\Gamma$ function for the special mass parameters that are relevant for our purposes. The boundary fluctuations yield a class of functional determinants that we compute in Appendix \ref{bddetApp}. Putting all together (Faddeev-Popov factor for the $\PslR$ gauge fixing, contributions of the gauge-fixed modes, on-shell Liouville action contribution from reparameterization ghosts, bulk determinant and boundary determinant) yields the one-loop coefficients $f_{1}^{0}$, $f_{1}^{\pm}$ and $f_{1}$. Finally, in Section \ref{finalSec}, we discuss our results in the light of the exact predictions \eqref{critexponents} and \eqref{Zzerexact} and comment on the naive $\La\rightarrow -\infty$, fixed $\ell$, semiclassical limit of the models and on the Schwarzian limit. In particular, we show explicitly that the usual Schwarzian action is only an effective description valid on length scales much larger than the curvature scale $L$, whereas the correct UV description is governed by a completely different action that comes from the boundary Hilbert-Liouville term, as discussed in \cite{Fer1}. We also find nice evidence that the effective Schwarzian description emerges from the microscopic theory, if one uses a renormalized version of the naive scaling used to define the Schwarzian limit, consistently with the general ideas presented in \cite{Fer2}.

\section{\label{defbasicSec}Definition of the models and path integrals}

The models we are studying are formally defined by giving a base manifold, a classical action and a path integral measure. In subsections \ref{baseSec}, \ref{SclSec}, \ref{pathintSec}, we follow the discussion in \cite{Fer1,Fer3}, in which much more details can be found, providing a synthetic and self-contained account of the points relevant to the present paper. In subsection \ref{loopSec}, we specialise to the one-loop approximation we are interested in, reducing the problem to the computation of the classical saddles and the evaluation of the Gaussian path integrals around the saddles, which will be done explicitly in subsequent Sections.

\subsection{\label{baseSec}Base manifold}

In this paper, the base manifold is the disk
\be\label{diskdef} \disk = \bigl\{ z\in\mathbb C \bigm| \abs{z} < 1\bigr\}\, .\ee
The boundary is $\Sone=\partial\disk=\{z\in\mathbb C\, |\, \abs{z} =1\}$ and the closed disk is denoted by $\bar\disk = \disk\cup\partial\disk$. We often use polar coordinates $(\rho,\theta)$ or Cartesian coordinates $(x^{a})$ defined by
\be\label{polardef} z = \rho e^{i\theta} = x^{1}+ix^{2}\, .\ee
The group of disk automorphisms, which are bijective holomorphic maps from $\disk$ to $\disk$, is $\PslR$, acting as
\be\label{pslRdef} z' = e^{i\alpha}\frac{z-z_{0}}{1-\bar z_{0}z}\,\cvp\quad |z_{0}|<1\, ,\ (z,z')\in\bar\disk^{2}\, .\ee
Note that these transformations extend smoothly to $\bar\disk$.

The canonical flat round metric of perimeter $2\pi$ on $\disk$ is denoted by
\be\label{deltametdef} \delta = |\d z|^{2} = \d\rho^{2} + \rho^{2}\d\theta^{2}\, .\ee
The similar metric of perimeter $\ell$ is
\be\label{deltaldef} \delta_{\ell} = \Bigl(\frac{\ell}{2\pi}\Bigr)^{2}\, \delta\, .\ee
The canonical negative curvature round metric of perimeter $\ell$ and Ricci scalar $R=-2/L^{2}$ is
\be\label{delmindef} \delta^{-}_{\ell} = \frac{4L^{2}r_{0}^{2}|\d z|^{2}}{(1-r_{0}^{2}|z|^{2})^{2}}\ee
with 
\be\label{ellminrelneg} \frac{\ell}{L} = \frac{4\pi r_{0}}{1-r_{0}^{2}}\ee
or equivalently
\be\label{ellminusrel} r_{0} = \frac{2\pi L}{\ell}\Biggl[-1+\sqrt{1+\frac{\ell^{2}}{4\pi^{2}L^{2}}}\Biggr]\, .\ee
Note that $0\leq r_{0}<1$. The area $A^{-}$ and extrinsic curvature of the boundary $k^{-}$ are
\be\label{Akmin} A^{-} = 4\pi L^{2}\frac{r_{0}^{2}}{1-r_{0}^{2}} = L\ell r_{0}\, ,\quad k^{-} = \frac{2\pi}{\ell}\frac{1+r_{0}^{2}}{1-r_{0}^{2}} = \frac{2\pi}{\ell}\sqrt{1+\frac{\ell^{2}}{4\pi^{2}L^{2}}}\, .\ee
The metric \eqref{delmindef} may be seen as arising from the natural embedding of the disk into hyperbolic space.

The two canonical positive curvature round metrics of perimeter $\ell$ and Ricci scalar $R=+2/L^{2}$ are of the form
\be\label{delplusdef} \delta^{+}_{\ell} = \frac{4L^{2}r_{0}^{2}|\d z|^{2}}{(1+r_{0}^{2}|z|^{2})^{2}}\ee
with 
\be\label{ellminrelpos} \frac{\ell}{L} = \frac{4\pi r_{0}}{1+r_{0}^{2}}\,\cdotp\ee
Note that $\ell$ is bounded above by $2\pi L$. The area $A^{+}$ and extrinsic curvature of the boundary $k^{+}$ are
\be\label{Akplus} A^{+} = 4\pi L^{2}\frac{r_{0}^{2}}{1+r_{0}^{2}} = L\ell r_{0}\, ,\quad k^{+} = \frac{2\pi}{\ell}\frac{1-r_{0}^{2}}{1+r_{0}^{2}}\, \cdotp\ee
There are two possible solutions for $r_{0}$, which coincide if and only if $\ell = 2\pi L$,
\be\label{ellplusrel} r_{0}^{>} = \frac{2\pi L}{\ell}\Biggl[1 + \sqrt{1-\frac{\ell^{2}}{4\pi^{2}L^{2}}}\Biggr]\, ,\quad r_{0}^{<} = \frac{2\pi L}{\ell}\Biggl[1 - \sqrt{1-\frac{\ell^{2}}{4\pi^{2}L^{2}}}\Biggr]\, ,\ee
that we shall call the ``large'' and ``small'' disks. Note that $r_{0}^{>}r_{0}^{<}=1$. The solution $r_{0}^{>}$ corresponds to the large disk, having the greatest area. These metrics may be seen as arising from embedding the disk into the round two-sphere of radius $L$. The embeddings may be chosen in such a way that the two embedded disks share the same $\Sone$ boundary; the two metric correspond to ``filling'' one side of the sphere or the other, see Fig.\ \ref{posdisFig}.

\begin{figure}
\centerline{\includegraphics[width=6in]{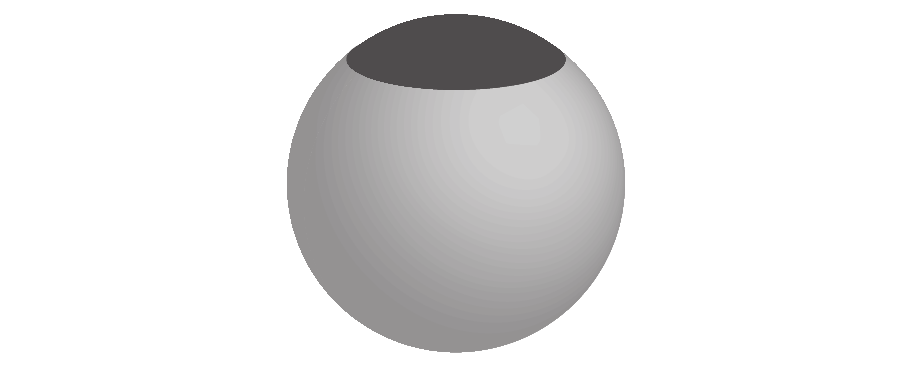}}
\caption{The two distinct positive curvature round disk metrics corresponding to the same boundary length $\ell < 2\pi L$, depicted as embeddings of the disk into the round two-sphere. The ``large'' disk, with parameter $r_{0}^{>}$, is shaded in light grey, the ``small'' disk, with parameter $r_{0}^{<}$, in dark grey.\label{posdisFig}}
\end{figure}

Finally, let us note that the Gauss-Bonnet formula on the disk
\be\label{GaussBonnet} \int_{\disk}\!\d^{2}x\sqrt{g}\, R + 2\oint_{\partial\disk}\!\d s\, k=4\pi\ee
has the following consequences, which are valid for any constant curvature metric on $\disk$ (and thus also for the canonical examples that we have just mentioned). In zero curvature, the integral of the extrinsic curvature on the boundary is fixed to $2\pi$,
\be\label{kintzerocurv} \oint_{\partial\disk}k\d s = 2\pi\, .\ee
In positive ($\eta=+1$) and negative ($\eta=-1$) curvature, this integral is directly related to the area $A$ according to
\be\label{kintarea} \oint_{\partial\disk}k\d s = 2\pi-\frac{\eta}{L^{2}} A\, .\ee
\subsection{\label{SclSec}Classical actions and counterterms}

\subsubsection{Classical actions}

\paragraph{Einstein-Hilbert term} The Gauss-Bonnet formula \eqref{GaussBonnet} implies that the Einstein-Hilbert term is purely topological in two dimensions. It has no dynamical role. It can be generated as a counterterm, see below.

\paragraph{The cosmological constant term} This is
\be\label{Scosmodef} S_{\La} = \frac{\La}{16\pi}\int_{\disk}\d^{2}x\,\sqrt{g} = \frac{\La}{16\pi} A[g]\, .\ee
In Liouville gravity, classical stability implies that $\La$ must be positive, because there exist metrics of fixed boundary length that have an arbitrary large area. The condition $\La\geq 0$ is also necessary and sufficient for the stability of the quantum theory.\footnote{Classical stability and quantum stability are not necessarily equivalent, because in the quantum theory one must take into account the measure, i.e.\ the entropy of the configurations one integrates over in the path integral.} In JT gravity in negative and zero curvature, classical stability is ensured for any $\La\in\mathbb R$ thanks to the isoperimetric inequalities, which put an upper bound on the value of the area for a given boundary length,
\begin{align}\label{isoineq1} \frac{A}{2\pi L^{2}} & \leq -1 +\sqrt{1+\Bigl(\frac{\ell}{2\pi L}\Bigr)^{2}}\quad\text{if $R=-\frac{2}{L^{2}}$}\, \cvp\\
\label{isoineq2} A & \leq\frac{\ell^{2}}{4\pi}\quad \text{if $R=0$}\, .
\end{align}
In JT gravity in positive curvature, the theory is classically non-perturbatively unstable if $\La<0$, as explained in \cite{Fer2}. We shall discuss perturbative stability in the context of the Zamolodchikov limit \eqref{Zamolodlimit} in Section \ref{StabSec}.

\paragraph{The dilaton, or Lagrange multiplier, term}

In JT gravity, the constraint of constant bulk curvature is imposed by a Lagrange mutiplier term in the action. The Lagrange multiplier field is also called in this context the ``dilaton.'' The associated term is traditionally written as
\be\label{Sdil} \Sdil = -\frac{1}{16\pi}\Biggl[\int_{\disk}\!\d^{2}x\sqrt{g}\,\Phi\Bigl(R -\frac{2\eta}{L^{2}}\Bigr) + 2\oint_{\partial\disk}\!\d s\, \Phi k\Biggr]\, .\ee
Note that the role of $\Phi$ is to impose $R=2\eta/L^{2}$ in the bulk, but the boundary extrinsic curvature $k$ must be allowed to fluctuate. As a consequence, $\Phi$ does not have boundary degrees of freedom and one imposes Dirichlet boundary conditions
\be\label{Phibc} \Phi_{|\partial\disk} = \Phi_{\text b} = \text{constant.}\ee
The boundary term in \eqref{Sdil} ensures that the variational principle is well-defined with this boundary condition.\footnote{See e.g.\ \cite{bcJTpapers} for a detailed discussion of the boundary conditions and the variational principle in JT gravity.}

\noindent\emph{Remark}: the aforementioned distinction between bulk and boundary degrees of freedom, which, following \cite{Fer1,Fer2,Fer3}, will play a central role in our analysis, may naively seem immaterial in the continuum formulation that we use in the present paper. However, it becomes very clear in a regularized lattice version of the models, as described in \cite{Fer2}. For instance, the bulk curvature $R$ is entirely localized on bulk vertices, i.e.\ vertices that are in the interior of the discretized disk, whereas the boundary vertices are associated with the extrinsic curvature.

The dilaton term \eqref{Sdil} can be conveniently rewritten as follows. Taking into account that $R=2\eta/L^{2}$ is imposed in the bulk after integrating out $\Phi$ and using \eqref{Phibc} and the Gauss-Bonnet formula \eqref{GaussBonnet}, the dilaton action \eqref{Sdil} can be replaced by an equivalent action
\be\label{Sdilprime} \Sdil' = -\frac{1}{16\pi}\int_{\disk}\!\d^{2}x\sqrt{g}\,\Phi\Bigl(R -\frac{2\eta}{L^{2}}\Bigr) + \frac{\eta\Phi_{\text b}}{8\pi L^{2}}A - \frac{\Phi_{\text b}}{4}\ee
in the path integral. When $\eta\not = 0$, we thus see that the cosmological constant action \eqref{Scosmodef} is redundant if the dilaton action is taken into account, with the identification
\be\label{philambdarel} \La = \frac{2\eta\Phi_{\text b}}{L^{2}}\,\cdotp\ee

\paragraph{The classical action for Liouville gravity coupled to a matter CFT}

This is simply
\be\label{SclLiou} S^{\text L}_{\text{cl}} = S_{\La} + S_{\text{CFT}}\, ,\ee
where $S_{\text{CFT}}$ is the action for the matter CFT coupled to gravity.

\paragraph{The classical action for JT gravity coupled to a matter CFT}

In non-zero curvature, one may use the action \eqref{Sdil} plus the matter CFT action. In zero curvature, one must then also add by hand the cosmological constant term \eqref{Scosmodef}. Equivalently, and more conveniently, we can use the action \eqref{Sdilprime}, supplemented with the cosmological constant term in zero curvature. In all cases, and discarding the trivial constant $-\Phi_{\text b}/4$ in \eqref{Sdilprime},\footnote{This constant is not physical because it can be absorbed in a counterterm, see below.} we can work with
\be\label{SJT} S^{\text{JT}}_{\text{cl}} = -\frac{1}{16\pi}\int_{\disk}\!\d^{2}x\sqrt{g}\,\Phi\Bigl(R -\frac{2\eta}{L^{2}}\Bigr) + S_{\La} + S_{\text{CFT}}\, .\ee

\subsubsection{\label{ctSec}Counterterms}

As usual, the list of possible counterterms is established by looking at all the possible local terms that respect diffeomorphism invariance and are associated with a coupling of positive mass dimension.

In the Liouville gravity theory, the only possible counterterms are the bulk terms $\int\d^{2}x\sqrt{g}R$, $\int\d^{2}x\sqrt{g} = A$ and the boundary terms $\int\d s = \ell$ and $\int\d s k$. Thanks to the Gauss-Bonnet formula, the term $\int\d^{2}x\sqrt{g}R$ can be replaced by a constant and by the boundary term $\int\d s k$. We do not need to discuss additional counterterms associated with the matter CFT; the treatment of the path integral over the matter CFT, which is well-known, will be recalled below. The counterterm action is thus of the form
\be\label{SLct} S^{\text L}_{\text{c.t.}} = \alpha_{1} + \alpha_{2}\ell + \alpha_{3}A + \alpha_{4}\int_{\partial\disk}\d s\, k \, ,\ee
where the coefficients $\alpha_{j}$ may depend on the short-distance lattice cut-off $\ell_{0}$, the cosmological constant $\La$ and the CFT central charge $c$.

Finite pieces in the counterterm coefficients $\alpha_{j}$ correspond to undetermined parameters in the theory. The coefficient $\alpha_{1}$ is fixed by imposing the normalization condition \eqref{normalW}. The coefficient $\alpha_{2}$ yields an arbitrary boundary cosmological constant factor $e^{-\la\ell}$, for some finite $\la$, in the partition function. The coefficient $\alpha_{3}$ renormalises the bulk cosmological constant $\La$. Finally, at the one-loop order that we consider in the present paper, the effect of $\alpha_{4}$ can be absorbed in $\alpha_{1}$ and $\alpha_{3}$, see below.

In JT gravity, all the counterterms of the Liouville theory are of course allowed as well, but $\alpha_{4}$ can always be absorbed in $\alpha_{1}$ and $\alpha_{3}$ thanks to the constant bulk curvature constraint. However, based on power counting, more general counterterms of the form
\be\label{countertermJTgen}\int\!\d^{2}x\sqrt{g}\, g^{\mu\nu}\partial_{\mu}\Phi\partial_{\nu}\Phi\, ,\quad
 \int\!\d^{2}x\sqrt{g}\, V(\Phi)\, ,\quad \int\!\d^{2}x\sqrt{g}\, W(\Phi)R\, ,\ee
for arbitrary polynomials $V$ and $W$, may also a priori be generated. If a kinetic term for $\Phi$ or non-linear terms in $V$ and $W$ were generated by quantum corrections, this would imply that $\Phi$ is no longer a Lagrange multiplier field and thus that the constraint of constant bulk curvature cannot be enforced at the quantum level. In other words, JT gravity, defined as a theory of random constant curvature metrics, would not make sense quantum mechanically. This, of course, goes against all evidence, in particular the fact that a microscopic, UV-complete definition of the models exists \cite{Fer1,Fer2,Fer3}. The only new allowed counterterms are thus $\int\d^{2}x\sqrt{g}\Phi$ and $\int\d^{2}x\sqrt{g} \Phi R$. In the flat theory, the second term is absent and the first forbidden, since it would modify the $R=0$ constraint to $R\not = 0$. In the curved theories, the two terms are equivalent since $R$ is fixed. They correspond to a renormalization of $L$. The counterterm action in the JT models will thus take the general form
\be\label{SJTct} S^{\text{JT}}_{\text{c.t.}} = \alpha_{1} + \alpha_{2}\ell + \alpha_{3}A + |\eta|\alpha_{4}\int_{\disk}\d^{2}x\sqrt{g}\,\Phi \, .\ee

\subsection{\label{pathintSec}Path integrals}

\subsubsection{Formal definition}

The partition functions we wish to compute have a formal path integral representation of the form
\be\label{Zformal1} Z = \frac{1}{\Vol(\diff(\disk))}\int_{\mani_{\ell}} Dg\, Z_{\text{CFT}}[g]\, e^{-S_{\La}[g]}\, .\ee
The action $S_{\La}$ is the cosmological constant term given in \eqref{Scosmodef}. The CFT partition function $Z_{\text{CFT}}[g]$ is obtained after integrating out the matter CFT degrees of freedom in a fixed background metric $g$. In the case of the Liouville theory, the space $\mani_{\ell}$ is the full space of metrics on the disk $\disk$, with a fixed boundary length $\ell$. In the case of the JT models, it is the space of constant bulk curvature metrics, $R=2\eta/L^{2}$, with a fixed boundary length $\ell$. The measure $D g$ is a formal ultralocal integration measure over these spaces. The prefactor $1/\Vol(\diff(\disk))$ indicates formally that the gauge redundancy associated with the group of disk diffeomorphisms must be fixed.

It is important to note that, in the full quantum theory, the very notion of geometric bulk area or boundary length must be revisited, since the typical quantum geometry is not smooth. For instance, it is emphasized in \cite{Fer1,Fer2,Fer3} that, in JT gravity, the boundary is a fractal, of non-trivial Hausdorff dimension $d_{\text H}=1/\nu$ given by \eqref{critexponents}, and thus its geometric length $\ell$ is ill-defined. The length parameter is replaced by a quantum length parameter $\beta$ which is such that $\beta^{\nu}$ has the dimension of length. As already discussed in the introduction, it does make sense to talk about the smooth boundary length $\ell$ in the framework of the $c\rightarrow -\infty$ expansion, since the limit is dominated by a smooth geometry for which $\nu=1$, consistently with the expansion \eqref{nuexp} at leading order. One-loop corrections are expected to produce an anomalous dimension for $\ell$ and this is what we shall find, see the discussion in Sec.\ \ref{finalSec}.

\subsubsection{Conformal gauge}

To make \eqref{Zformal1} more precise, and as advocated in \cite{Fer1,Fer3} in the case of JT gravity, we go to conformal gauge, in which the metric takes the form
\be\label{confgaugedef} g = e^{2\Sigma}\delta\, ,\ee
where $\delta$ is the round flat metric defined in Eq.\ \eqref{deltametdef}. This can be done without loss of generality, because any metric on the disk can be put in the form \eqref{confgaugedef} by acting with a disk diffeomorphism. We call $\Sigma:\bar\disk\rightarrow\mathbb R$ the ``Liouville field'' and we note
\be\label{sigbddef} \sigma = \Sigma_{|\partial\disk}\ee
its boundary value. 

The crucial point, emphasized in \cite{Fer1,Fer3}, is that the Liouville field is not constrained by boundary conditions on $\partial\disk$. One says that $\Sigma$ has ``free'' boundary conditions.\footnote{This terminology might be a bit misleading, since free boundary conditions really means no boundary condition at all.} Actually, in the case of JT gravity in negative or zero curvature, the degrees of freedom are entirely described by the boundary Liouville field $\sigma$ \cite{Fer1,Fer3}. In positive curvature, this is still true semi-classically around a given classical solution, which is the case of interest for the present paper. Constraining the boundary Liouville field in any way (Neumann, Dirichlet or any other way) would be totally inconsistent.\footnote{In the case of the Liouville theory, for which both the bulk and the boundary Liouville fields fluctuate, a subtle coincidence makes the free and the Neumann boundary conditions equivalent, see e.g.\ the discussion in \cite{Fer3}. But we insist that, from first principles, the correct boundary conditions are the free boundary conditions, for both the usual Liouville model and JT gravity.} The bulk and boundary degrees of freedom in the Liouville field must thus be carefully treated as independent variables. Following \cite{Fer1,Fer3}, we perform the explicit separation between the bulk and the boundary by writing 
\be\label{sepBb} \Sigma = \SigB + \Sigma_{\sigma}\, ,\ee
where the bulk piece $\SigB$ satisfies Dirichlet boundary conditions,
\be\label{SigBbc} \Sigma_{\text{B}|\partial\disk} = 0\, ,\ee
and the piece $\Sigma_{\sigma}$ coincides, by definition, with $\sigma$ on the boundary, and is extended into the bulk by using a convenient prescription that fixes the bulk values in terms of the boundary values,
\be\label{Sigbbc} \Sigma_{\sigma|\partial\disk} = \sigma\, ,\quad \Sigma_{\sigma}(x) \ \text{depends only on the boundary $\sigma$ for all $x\in\disk$}\, .\ee
At the conceptual level, the precise prescription for defining $\Sigma_{\sigma}$ is, to a large extent irrelevant, but, in practice, it is chosen to simplify the analysis of the particular problem at hand, see \cite{Fer1,Fer3} and below. 

The gauge fixing \eqref{confgaugedef} has the effect to break the diffeomorphism gauge redundancy down to the $\PslR$ group of automorphisms of the disk, Eq.\ \eqref{pslRdef}, which corresponds to the three-dimensional subgroup of $\diff(\disk)$ that maps a metric in conformal gauge to another metric in conformal gauge.\footnote{Strictly speaking, there is an additional $\mathbb Z_{2}$ redundancy $z\mapsto\bar z$ that remains unbroken in conformal gauge and which produces an overall factor of 2 in the partition function. But overall multiplication constants in $Z$ are unphysical, because they can be absorbed in counterterms; thus, we do not indicate them explicitly.} In conformal gauge, the formal path integral \eqref{Zformal1} reduces to
\be\label{Zformalcg} Z = \frac{1}{\Vol(\PslR)}\int_{\mathscr L_{\text B}\times\mathscr L_{\text b}} D\SigB D\sigma\, 
\delta\Bigl(\int_{0}^{2\pi}e^{\sigma}\d\theta - \ell\Bigr)\, J[g]
Z_{\text{gh}}[g] Z_{\text{CFT}}[g]\, e^{-S_{\La}[g]}\, .\ee
The factor $Z_{\text{gh}}[g]$ is the Faddeev-Popov determinant which, as is well-known, corresponds to the partition function of a ghost CFT of central charge $c_{\text{gh}}=-26$.  The $\delta$-function enforces explicitly the constraint of fixed boundary length
\be\label{fixedlength} \int_{\partial\disk}\d s = \int_{0}^{2\pi}e^{\sigma}\d\theta = \ell\, ,\ee
since we work in this ensemble in the present paper.\footnote{As explained in \cite{Fer3}, imposing the constraint of fixed length in this precise way and not, for instance, using $\smash{\delta\bigl(\frac{1}{\ell}\int_{0}^{2\pi}e^{\sigma}\d\theta - 1\bigr)}$, corresponds to counting the metrics with no marked point on the boundary.} The space $\mathscr L_{\text B}\times\mathscr L_{\text b}$ over which we integrate is the full space of Liouville fields on the closed disk $\bar\disk$, which decomposes into a bulk piece $\mathscr L_{\text B}$ and a boundary piece $\mathscr L_{\text b}$, following \eqref{sepBb}. The measures $D\SigB$ and $D\sigma$ are formal reparameterization invariant and background-independent ultralocal measures that we can identify with the volume forms associated with the metrics 
\begin{align}\label{LBmetricspace} ||\delta\SigB||^{2} & = \int_{\disk}\d^{2}x\sqrt{g}\, (\delta\SigB)^{2} = \int_{0}^{1}\d\rho\, \rho\int_{0}^{2\pi}\d\theta\,  e^{2\Sigma}  (\delta\SigB)^{2}\\
\label{Lbmetricspace} ||\delta\sigma||^{2} & = \int_{\partial\disk}\d s\, (\delta\sigma)^{2} = \int_{0}^{2\pi}\d\theta\, e^{\sigma}  (\delta\sigma)^{2}
\end{align}
on the spaces $\mathscr L_{\text B}$ and $\mathscr L_{\text b}$ respectively. The contribution
\be\label{Jgdef} J[g] = \delta\Bigl(R - \frac{2\eta}{L^{2}}\Bigr)\ee
is an additional piece that enters in the case of JT gravity, enforcing the constraint of constant bulk curvature and coming from integrating out the dilaton field $\Phi$.

\subsubsection{The Liouville action}

The Liouville action on the disk is defined by
\be\label{SLdef} S_{\text L}[\Sigma] = \int_{\disk}\partial_{a}\Sigma\partial_{a}\Sigma\, \d^{2} x + 2\int_{\partial\disk} \sigma\, \d\theta\, .\ee
The metric dependence of the disk partition function $\mathscr Z[g]$ of an arbitrary CFT of central charge $\mathscr C$ in the gauge \eqref{confgaugedef} is given by
\be\label{ZCFTLiou}  \mathscr Z[g] = e^{\frac{\mathscr C}{24\pi}S_{\text L}[\Sigma]}\mathscr Z[\delta]\, .\ee
The combined ghost plus matter system entering in \eqref{Zformalcg} has total central charge $\mathscr C = c-26$. Discarding the overall constant $Z_{\text{gh}}[\delta]Z_{\text{CFT}}[\delta]$, we can thus rewrite \eqref{Zformalcg} as
\be\label{Zformalcg2} Z = \frac{1}{\Vol(\PslR)}\int_{\mathscr L_{\text B}\times\mathscr L_{\text b}} D\SigB D\sigma\, 
\delta\Bigl(\int_{0}^{2\pi}e^{\sigma}\d\theta - \ell\Bigr)\, J[g]\, e^{-S_{\La}[\Sigma]-\frac{26-c}{24\pi}S_{\text L}[\Sigma]}\, .\ee

\subsubsection{Gauge fixing $\PslR$}

The $\PslR$ group of automorphisms of the disk being non-compact, it is necessary to impose additional gauge conditions, that supplement the conformal gauge condition \eqref{confgaugedef}, to make sense of \eqref{Zformalcg2}. As explained in detail in \cite{Fer3}, a convenient set of gauge conditions fixing the non-compact part of $\PslR$ is\footnote{These gauge conditions are actually globally defined, in particular, they don't have Gribov ambiguities.}
\be\label{PsLgc} \int_{0}^{\frac{2\pi}{3}}e^{\sigma}\d\theta =  \int_{\frac{2\pi}{3}}^{\frac{4\pi}{3}}e^{\sigma}\d\theta = \frac{\ell}{3}\, \cdotp\ee
The associated Faddeev-Popov determinant is readily calculated to be $\exp(\sigma(\frac{2\pi}{3}) + \sigma(\frac{4\pi}{3}))$ \cite{Fer3}. Rearranging the $\delta$-functions in a nice symmetric way, we obtain
\begin{multline}\label{Zformalcg3} Z = \int_{\mathscr L_{\text B}\times\mathscr L_{\text b}} D\SigB D\sigma\, 
\delta\Bigl(\int_{0}^{\frac{2\pi}{3}}e^{\sigma}\d\theta - \frac{\ell}{3}\Bigr)\delta\Bigl(\int_{\frac{2\pi}{3}}^{\frac{4\pi}{3}}e^{\sigma}\d\theta - \frac{\ell}{3}\Bigr)\delta\Bigl(\int_{\frac{4\pi}{3}}^{2\pi}e^{\sigma}\d\theta - \frac{\ell}{3}\Bigr)\\
e^{\sigma(\frac{2\pi}{3}) + \sigma(\frac{4\pi}{3})} J[g]\, e^{-S_{\La}[\Sigma]-\frac{26-c}{24\pi}S_{\text L}[\Sigma]}\, .
\end{multline}

\subsubsection{The path integral for Liouville gravity}

The path integral formula for the Liouville partition function $Z_{\text L}$ is simply given by Eq.\ \eqref{Zformalcg3} with $J[g]=1$.

\subsubsection{\label{JTpathsubSec}The path integral for JT gravity}

In the case of JT gravity, the insertion of the functional $\delta$-function \eqref{Jgdef} in the path integral enforces the constant bulk curvature constraint which, in conformal gauge, reads
\be\label{Curvconscg} R - \frac{2\eta}{L^{2}} = 2e^{-2\Sigma}\Delta_{\delta}\Sigma - \frac{2\eta}{L^{2}} = 0\, ,\ee
where $\Delta_{\delta} = -\partial_{a}\partial_{a}$ is the positive Laplacian for the metric $\delta = |dz|^{2}$. The Liouville field thus satisfies, in the bulk, the so-called Liouville equation
\be\label{LioueqSig} \Delta_{\delta}\Sigma = \frac{\eta}{L^{2}}e^{2\Sigma}\, .\ee
As reviewed in \cite{Fer1,Fer3}, in the case of zero or negative curvature, this equation implies that the bulk Liouville field $\Sigma$ is uniquely expressed in terms of the boundary Liouville field $\sigma$,
\be\label{Sigsigrel} \Sigma (x) = \Sigma[\sigma](x)\, .\ee
Note that in zero curvature, the Liouville equation is simply the Laplace equation, which means that $\Sigma$ is a harmonic function. It is then well-known that $\Sigma$ is uniquely specified by its boundary value $\sigma$, and the relation \eqref{Sigsigrel} can be written down explicitly using the Poisson kernel. This is explained and fully exploited in \cite{Fer1,Fer3}. In negative curvature, the Liouville equation is non-linear and thus much more complicated than the Laplace equation. However, the crucial fact that its solution is uniquely specified in terms of the boundary data is still valid \cite{Lioumath}, even though a closed-form formula expressing $\Sigma$ in terms of $\sigma$ no longer exists. This fundamental result shows that, in zero and negative curvature JT gravity, the degrees of freedom are entirely encoded in $\sigma $ alone, a fundamental feature already stressed above. In the case of positive curvature, the situation is more subtle. For instance, as we exemplify in Section \ref{clSec}, the uniqueness of the solution for given boundary value $\sigma$ is violated; we shall find two solutions (called the small and the large disks) corresponding to the same constant boundary value $\sigma$. However, this subtlety is irrelevant for the present paper, because we are performing the $c\rightarrow -\infty$ semi-classical expansion around a particular classical solution, and for this expansion the existence of other saddles does not play any role.

This being understood, it is clear \cite{Fer1,Fer3} that the most natural choice of the decomposition \eqref{sepBb} in JT gravity is obtained by setting
\be\label{SigbdefJT} \Sigma_{\sigma} = \Sigma[\sigma]\, .\ee
The $\delta$-function in \eqref{Jgdef} then imposes $\SigB=0$. The path integral over $\SigB$ is then evaluated straightforwardly by expanding the argument of the $\delta$-function to linear order in $\SigB$,
\be\label{XilinJg} R - \frac{2\eta}{L^{2}} = 2\Bigl(\Delta_{\sigma}-\frac{2\eta}{L^{2}}\Bigr)\SigB + O(\SigB^{2})\, ,\ee
where $\Delta_{\sigma}$ is the positive Laplacian for the constant curvature metric with boundary Liouville field $\sigma$. We thus get
\be\label{deltafuncfin} J[g] = \frac{\delta(\SigB)}{2|\det_{\text D}(\Delta_{\sigma} - \frac{2\eta}{L^{2}})|} \ee
where $\det_{\text D}(\Delta_{\sigma} - \frac{2\eta}{L^{2}})$ denotes the functional determinant of $\Delta_{\sigma} - \frac{2\eta}{L^{2}}$ with Dirichlet boundary conditions, consistently with \eqref{SigBbc}. Integrating out $\SigB$ in \eqref{Zformalcg3} thus yields a simplified formula for the JT gravity partition function in terms of a path integral over the boundary Liouville field only,
\begin{multline}\label{ZpathJT} Z^{(\eta)} = \int_{\mathscr L_{\text b}} D\sigma\, 
\delta\Bigl(\int_{0}^{\frac{2\pi}{3}}e^{\sigma}\d\theta - \frac{\ell}{3}\Bigr)\delta\Bigl(\int_{\frac{2\pi}{3}}^{\frac{4\pi}{3}}e^{\sigma}\d\theta - \frac{\ell}{3}\Bigr)\delta\Bigl(\int_{\frac{4\pi}{3}}^{2\pi}e^{\sigma}\d\theta - \frac{\ell}{3}\Bigr)\\
e^{\sigma(\frac{2\pi}{3}) + \sigma(\frac{4\pi}{3})} \, 
\frac{e^{-S_{\La}[\Sigma_{\sigma}]-\frac{26-c}{24\pi}S_{\text L}[\Sigma_{\sigma}]}}{|\det_{\text D}(\Delta_{\sigma} - \frac{2\eta}{L^{2}})|}\,\cdotp
\end{multline}
\subsection{\label{loopSec}One-loop path integrals}

\subsubsection{The case of Liouville gravity}

We now focus on a one-loop calculation in the limit \eqref{Zamolodlimit}. In this limit, it is natural to group together the terms proportional to $|c|$ in the action, writing
\be\label{Seff1} S_{\La}+\frac{26-c}{24\pi}S_{\text L} = \frac{|c|}{24\pi} \Seff + \frac{13}{12\pi}S_{\text L}\ee
for an effective action
\be\label{Seffdef} \Seff[\Sigma] =S_{\text L}[\Sigma] + \mu\int_{\disk}e^{2\Sigma}\,\d^{2}x = \int_{\disk}\bigl(\partial_{a}\Sigma\partial_{a}\Sigma + \mu e^{2\Sigma}\bigr)\, \d^{2} x + 2\int_{\partial\disk} \sigma\, \d\theta\, .\ee
Recall that $\mu$ was defined in \eqref{Zamolodlimit}. The path integral is dominated by a saddle, computed in Section \ref{clSec}, that we note $\Sigma_{*}$. More generally, we note with a $*$ any quantity evaluated on the saddle. We expand around the saddle by writing
\be\label{Sigsadexp} \Sigma = \Sigma_{*} + \frac{\Xi}{\sqrt{|c|}}\ee
which, using the decomposition \eqref{sepBb}, is equivalent to expanding
\be\label{Sigsadexp2} \SigB = \Sigma_{\text{B}*} + \frac{\XiB}{\sqrt{|c|}}\,\cvp\quad \sigma = \sigma_{*}+ \frac{\xi}{\sqrt{|c|}}\, \cdotp\ee

There are then two equivalent ways to deal with the $\delta$-functions in the path integral \eqref{Zformalcg3}. A first way amounts to using the integral representation $\delta(x) =\frac{1}{2\pi} \int e^{i\lambda x}\d\lambda$ in terms of a Lagrange multiplier $\lambda$. One introduces a Lagrange multiplier for each $\delta$-function, adds the associated terms to the effective action \eqref{Seffdef} and, together with the fields $\SigB$ and $\sigma$, expands the Lagrange multipliers around their saddle point values. The advantage of this approach is that, by construction, the expansion of the effective action does not contain linear terms. It can also be used systematically to any loop order.

A second way, which is a bit more straightforward at one-loop, is to deal directly with the $\delta$ functions, without introducing Lagrange multipliers. The expansion of the effective action \eqref{Seffdef}, to the order we need for a one-loop calculation, has the form
\be\label{Seffquad} \Seff[\Sigma] = S_{\text{eff}*} + \frac{1}{|c|^{1/2}}\Seff^{(1)}[\xi] + \frac{1}{|c|}\Seff^{(2)}[\XiB,\xi] + O\bigl(1/|c|^{3/2}\bigr)\, ,\ee
where the linear term $\Seff^{(1)}$ turns out to depend on the boundary field fluctuations $\xi$ only.\footnote{This is easy to check, see Section \ref{oneloopSec}.} The arguments of the $\delta$ functions are themselves expanded as
\be\label{expdelta}
\int_{\frac{2\pi k}{3}}^{\frac{2\pi(k+1)}{3}}e^{\sigma}\d\theta - \frac{\ell}{3}= \frac{1}{\sqrt{|c|}}\Biggl[\int_{\frac{k\ell}{3}}^{\frac{(k+1)\ell}{3}}\xi\, ds_* + \frac{1}{2\sqrt{|c|}}
\int_{\frac{k\ell}{3}}^{\frac{(k+1)\ell}{3}}\xi^{2}\, ds_* + O\bigl(1/|c|\bigr)\Biggr]\, .\ee
At one-loop, the $\delta$-function constraints thus ensure that we can make the substitutions 
\be\label{substidelta} \int_{\frac{k\ell}{3}}^{\frac{(k+1)\ell}{3}}\xi\, ds_* \longmapsto -\frac{1}{2|c|^{1/2}}\int_{\frac{k\ell}{3}}^{\frac{(k+1)\ell}{3}}\xi^{2}\, ds_*\ee
in the effective action \eqref{Seffquad} without changing the value of the path integral. This allows to kill all the linear terms in the expansion. At the one-loop order, we are left with a purely quadratic action that we denote by $\tilde S^{(2)}_{\text{eff}}$. The quadratic or higher order terms in the $\delta$-functions can then be dropped. 

Moreover, still at one-loop, the non-linear metrics \eqref{LBmetricspace}, \eqref{Lbmetricspace} can be replaced by the usual linear metrics
\be\label{oneloopmet} ||\delta\XiB||_{*}^{2} = \frac{1}{|c|}\int_{\disk}\d^{2}x\sqrt{g_{*}}\, (\delta\XiB)^{2}\, ,\quad ||\delta\xi||_{*}^{2} = \frac{1}{|c|}\int_{\partial\disk}\d s_{*}\, (\delta\xi)^{2}\, ,\ee
which yield the measures $D_{*}\XiB$ and $D_{*}\xi$. Putting everything together in \eqref{Zformalcg3}, we finally get
\be\label{ZLoneloop1} \ln Z_{\text L} = \ln Z_{\text{L},\,\text{tree}} + \ln Z_{\text{L},\,\text{1-loop}} + O(1/|c|)\ee
with
\be\label{Ztree1} Z_{\text{L},\,\text{tree}} = e^{-\frac{|c|}{24\pi} S_{\text{eff}*}}\ee
and
\begin{multline}\label{Zloop1} Z_{\text{L},\,\text{1-loop}} = e^{-\frac{13}{12\pi}S_{\text L *}+\sigma_{*}(\frac{2\pi}{3}) + \sigma_{*}(\frac{4\pi}{3})}\\ \int D_{*}\XiB D_{*}\xi\,
\delta\Bigl(\int_{0}^{\frac{\ell}{3}}\xi\, \d s_{*}\Bigr)\delta\Bigl(\int_{\frac{\ell}{3}}^{\frac{2\ell}{3}}\xi\, \d s_{*}\Bigr)\delta\Bigl(\int_{\frac{2\ell}{3}}^{\ell}\xi\,\d s_{*}\Bigr)\, e^{-\frac{1}{24\pi}\tilde{S}^{(2)}_{\text{eff}}[\XiB,\xi]}\, .
\end{multline}

\subsubsection{The case of JT gravity}

The same analysis can be repeated for JT gravity, starting from \eqref{ZpathJT}. The main difference with the Liouville gravity case is that we only expand the boundary field
\be\label{sigexp} \sigma = \sigma_{*}+\frac{\xi}{\sqrt{|c|}}\, \cvp\ee
the bulk fluctuations being fixed in terms of $\xi$ as explained in subsection \eqref{JTpathsubSec}. We obtain
\be\label{ZJToneloop1} \ln Z^{(\eta)} = \ln Z^{(\eta)}_{\text{tree}} + \ln Z^{(\eta)}_{\text{1-loop}} + O(1/|c|)\ee
with
\be\label{Ztree2} Z^{(\eta)}_{\text{tree}} = e^{-\frac{|c|}{24\pi} S_{\text{eff}*}}\ee
and
\begin{multline}\label{ZJTloop1} Z^{(\eta)}_{\text{1-loop}} = \frac{e^{-\frac{13}{12\pi}S_{\text L *}+\sigma_{*}(\frac{2\pi}{3}) + \sigma_{*}(\frac{4\pi}{3})}}{|\det_{\text D}(\Delta_{*}-\frac{2\eta}{L^{2}})|}
\\ \int D_{*}\xi\,
\delta\Bigl(\int_{0}^{\frac{\ell}{3}}\xi\, \d s_{*}\Bigr)\delta\Bigl(\int_{\frac{\ell}{3}}^{\frac{2\ell}{3}}\xi\, \d s_{*}\Bigr)\delta\Bigl(\int_{\frac{2\ell}{3}}^{\ell}\xi\,\d s_{*}\Bigr) e^{-\frac{1}{24\pi}\tilde{S}^{(2)}_{\text{eff}}[\xi]}\, .
\end{multline}

The similarity between \eqref{Zloop1} and \eqref{ZJTloop1} is startling, but there is also a fundamental difference: in Liouville gravity, we have to integrate over both bulk and boundary metric fluctuations, whereas in the case of JT only the boundary degrees of freedom fluctuate.

We now turn to the explicit evaluation of the path integrals \eqref{Zloop1} and \eqref{ZJTloop1}. The saddle-point solutions and the associated tree-level contributions are given in the next Section. The computation of the quadratic fluctuations $\tilde{S}^{(2)}_{\text{eff}}$ is presented in Section \ref{oneloopSec}. The calculation of the associated Gaussian path integrals, which amounts to evaluating some functional determinants, is detailed in Section \ref{detSec}, using in particular results that are published separately \cite{detpaper}.

\section{\label{clSec}Leading order: classical solutions}

\subsection{Equations of motion}

From the discussion of the previous section, we know that, at leading order in the $c\rightarrow -\infty$ limit, the path integral is dominated by the minimum of the effective action \eqref{Seffdef}, taking into account the constraint of fixed boundary length
\be\label{clcons1} \int_{0}^{2\pi}e^{\sigma}\d\theta = \ell\ee
and, in the case of JT gravity, the additional constraint of constant bulk curvature
\be\label{clcons2} R =- 2e^{-2\Sigma}\partial_{a}\partial_{a}\Sigma  = \frac{2\eta}{L^{2}}\, \cvp \quad \text{$\eta = -1$, $1$ or $0$}\, .\ee
To impose these constraints at the classical level, it is convenient to introduce Lagrange multipliers $\zeta$ and $\varphi$. We thus use the classical action
\be\label{Sclanadef} S = \Seff[\Sigma] + \zeta\Bigl(\int_{0}^{2\pi}e^{\sigma}\d\theta - \ell\Bigr) + \vartheta\int_{\disk}\d^{2}x\, \varphi\Bigl(\partial_{a}\partial_{a}\Sigma + \frac{\eta}{L^{2}}e^{2\Sigma}\Bigr)\, ,\ee
where $\vartheta =0$ in the case of the Liouville theory and $\vartheta =1$ in the case of the JT theory. The Lagrange multiplier field $\varphi$ satisfies Dirichlet boundary conditions
\be\label{varphiDirich} \varphi_{|\partial\disk}=0\, ,\ee
since it imposes the constant curvature constraint in the bulk only. The Liouville field $\Sigma$ has free boundary conditions, as already emphasized several times above. The variation of \eqref{Sclanadef} with respect to $\Sigma$ reads
\begin{multline}\label{varScl} \delta S = \int_{\disk}\d^{2}x\,\biggl[-2\partial_{a} \partial_{a}\Sigma + 2\Bigl(\mu + \frac{\vartheta\eta}{L^{2}}\varphi\Bigr) e^{2\Sigma} + \vartheta\partial_{a}\partial_{a}\varphi\biggr]\delta\Sigma \\ +
\delta\zeta \Bigl(\int_{0}^{2\pi}e^{\sigma}\d\theta - \ell\Bigr) + 
\vartheta\int_{\disk}\d^{2}x\, \delta\varphi\Bigl(\partial_{a}\partial_{a}\Sigma + \frac{\eta}{L^{2}}e^{2\Sigma}\Bigr) \\ +
\int_{0}^{2\pi}\d\theta\, \biggl[\Bigl(2\frac{\partial\Sigma}{\partial\rho}(\rho=1,\theta) - \vartheta\frac{\partial\varphi}{\partial\rho}(\rho=1,\theta)+\zeta e^{\sigma}+2\Bigr)\delta\sigma + \vartheta\varphi\frac{\partial\delta\Sigma}{\partial\rho}(\rho=1,\theta)\biggr]\, .
\end{multline}
From this we derive the field equations for the saddle point solution, which is denoted with a $*$ consistently with our previous conventions,
\begin{align}\label{eom1} &\partial_{a}\partial_{a}\Sigma_{*} = \frac{1}{2}\vartheta\partial_{a}\partial_{a}\varphi_{*} + \Bigl(\mu + \frac{\vartheta\eta}{L^{2}}\varphi_{*}\Bigr) e^{2\Sigma_{*}}\, ,\\
\label{eom2} &\frac{\partial\Sigma_{*}}{\partial\rho}(\rho=1,\theta) + 1 = \frac{1}{2} \vartheta\frac{\partial\varphi_{*}}{\partial\rho}(\rho=1,\theta)-\frac{1}{2}\zeta_{*} e^{\sigma_{*}}\, ,
\end{align}
together with the constraint \eqref{clcons1} and, for JT, the additional constraint \eqref{clcons2}. Note that Eq.\ \eqref{eom2} comes from the boundary term in \eqref{varScl}, which does not automatically vanish because the Liouville field has free boundary conditions. The fact that boundary terms in the variation of the action yield additional equations of motion may seem unfamiliar. Traditionally, one imposes boundary conditions precisely in such a way that the boundary terms identically vanish. However, this is a general feature of the variational principle in conformal gauge and it is perfectly consistent. A detailed discussion of this point is provided in \cite{Fer3}.

The analysis of the equations of motion \eqref{eom1} and \eqref{eom2} are simplified by rewritting them in a more geometric way. Note that the bulk curvature $R$ of the metric $g=e^{2\Sigma}\delta$ is expressed in terms of $\Sigma$ as in Eq.\ \eqref{clcons2}. Similarly, the extrinsic curvature of the disk boundary for the same metric $g$ is given by
\be\label{kSigform} k = e^{-\sigma}\Bigl(\frac{\partial\Sigma}{\partial\rho}(\rho=1,\theta) + 1\Bigr)\, .\ee
Note also that the positive Laplacian in the metric $g$ is 
\be\label{posLapla} \Delta = -e^{-2\Sigma}\partial_{a}\partial_{a}\ee
and the normal derivative on the boundary
\be\label{normalder} \frac{\partial}{\partial n} = e^{-\sigma}\frac{\partial}{\partial\rho}\,\cdotp\ee
Equipped with these formulas, we find that \eqref{eom1} and \eqref{eom2} are equivalent to
\begin{align}\label{eomgeom1} R_{*} & = -2\mu + \vartheta\Bigl(\Delta_{*} - \frac{2\eta}{L^{2}}\Bigr)\varphi_{*}\, ,\\\label{eomgeom2}
k_{*} & = -\frac{1}{2}\zeta_{*} + \frac{1}{2}\vartheta\frac{\partial\varphi_{*}}{\partial n_{*}}\,\cdotp
\end{align}

Finally, let us mention that the equations of motion in conformal gauge always have a family of solutions related by the $\PslR$ automorphisms of the disk. This redundancy is fixed by further imposing the gauge conditions \eqref{PsLgc}.

Now let us turn to the analysis of the different cases we are interested in.

\subsection{\label{saddleLSec}Solution for Liouville gravity}

Recall that $\mu\geq 0$ in Liouville gravity. We choose $\mu>0$, since the case $\mu=0$ can be obtained by taking the limit $\mu\rightarrow 0^{+}$. Eq.\ \eqref{eomgeom1}, $R_{*}=-2\mu$,  implies that the solution has constant negative curvature. Locally, it is given by a patch of hyperbolic space $\Htwo$. Eq.\ \eqref{eomgeom2} implies that $k_{*}=-\frac{1}{2}\zeta_{*}$ is also a constant. The disk boundary can thus be seen as a curve of constant extrinsic curvature in $\Htwo$ whose interior must be the disk. Together with the conditions \eqref{PsLgc}, this identifies uniquely the metric to be $g=\delta_{\ell}^{-}$ given in Eq.\ \eqref{delmindef}, with $L^{2}= 1/\mu$. Note that one may also use a more pedestrian approach to solve the equations of motion, starting from a radial ansatz $\Sigma= \Sigma(\rho)$ and directly solving the resulting differential equations. Explicitly, using the parameter $x$ defined in \eqref{defxpara}, we have
\be\label{Liousaddlesol} \Sigma_{*} = \ln\frac{2r_{0}}{\sqrt{\mu}} - \ln \bigl(1-r_{0}^{2}\rho^{2}\bigr)\ee
for
\be\label{r0saddleL} r_{0} = \frac{1}{x}\bigl(-1+\sqrt{1+x^{2}}\bigr)\, .\ee

It is then straightforward to evaluate, on the saddle point solution, the area, the Liouville action,
\be\label{Lsaddleactions1}  A_{*} = \frac{4\pi}{\mu}\frac{r_{0}^{2}}{1-r_{0}^{2}} = \frac{\ell r_{0}}{\sqrt{\mu}}\,\cvp \quad S_{\text L *} = \ell\sqrt{\mu}\, r_{0} + 4\pi\ln\frac{2r_{0}}{\sqrt{\mu}} \, \cvp\ee
and the effective action
\begin{multline}
\label{Lsaddleactions2} 
S_{\text{eff}*} = S_{\text L*} + \mu A_{*} =2\ell\sqrt{\mu}\, r_{0} + 4\pi\ln\frac{2r_{0}}{\sqrt{\mu}}  \\= 4\pi\ln\frac{\ell}{2\pi}+4\pi\Bigl[\sqrt{1+x^{2}}-1-\ln\frac{1+\sqrt{1+x^{2}}}{2}\Bigr]\, .
\end{multline}
Using \eqref{Ztree1}, this yields the tree-level partition function $\ln Z_{\text{L},\,\text{tree}}(\ell,\mu) = -\frac{|c|}{24\pi}S_{\text{eff}*} = |c|f_{0}$, the tree-level coefficient $f_{0}$ being precisely as in \eqref{ZLloopcoeff0}.

\subsection{\label{clsolJTSec}Solutions for JT gravity}

The constraint \eqref{clcons2} implies that, locally, the solution is given by a patch of hyperbolic space, Euclidean space or the round two-sphere, if $\eta=-1$, $\eta=0$ or $\eta=+1$, respectively. The equations of motion \eqref{eomgeom1} and \eqref{eomgeom2} are then equivalent to
\begin{align}\label{eomJT1} & \Bigl(\Delta_{*}-\frac{2\eta}{L^{2}}\Bigr)\varphi_{*} = 2\mu + \frac{2\eta}{L^{2}}\, \cvp\\\label{eomJT2}
&R_{*}=\frac{2\eta}{L^{2}}\, \cvp\quad  k_{*} = -\frac{1}{2}\zeta_{*} + \frac{1}{2}\frac{\partial\varphi_{*}}{\partial n_{*}}\,\cvp
\end{align}
together with the boundary length constraint \eqref{clcons1}.

Let us note that, if we had not included the Liouville action term in the effective action \eqref{Seffdef}, the equations of motion would take the simplified form
\begin{align}\label{JTnotL1} & \Bigl(\Delta_{*}-\frac{2\eta}{L^{2}}\Bigr)\varphi_{*} = 2\mu\, ,\\\label{JTnotL2}
&R_{*}=\frac{2\eta}{L^{2}}\, \cvp\quad \frac{\partial\varphi_{*}}{\partial n_{*}} = \zeta_{*}\, .
\end{align}
One can straightforwardly check that, as expected, these equations also follow from varying the traditional dilaton action \eqref{Sdil} in conformal gauge, with the identification
\be\label{varphiPhirel} \Phi = \frac{|c|}{3}\bigl(\eta\mu L^{2} +\varphi\bigr)\, .\ee
The solutions to the JT classical equations of motion are well-known, even though their analysis in conformal gauge, which is presented in \cite{Fer3}, may be unfamiliar. They imply that $\smash{\frac{\epsilon^{\mu\nu}}{\sqrt{g_{*}}}\partial_{\nu}\varphi}$ is a Killing vector of the metric.\footnote{In conformal gauge, this is obtained by combining Eq.\ \eqref{JTnotL1} and the fact that $\varphi_{*}$ satisfies both Dirichlet, Eq.\ \eqref{varphiDirich}, and Neumann, second equation in \eqref{JTnotL2}, boundary conditions, see \cite{Fer3} for detailed explanations.} The form of the Killing vectors in constant curvature are of course well-known and, together with the Dirichlet boundary condition \eqref{varphiDirich} and the fact that the boundary must be topologically $\Sone$, this fixes the classical metrics to be $\delta_{\ell}^{-}$, $\delta_{\ell}$ or one of the two $\delta_{\ell}^{+}$, as given by Eq.\ \eqref{delmindef}, \eqref{deltaldef} and \eqref{delplusdef}, when $\eta=-1$, $0$ or $+1$, respectively. In particular, in positive curvature, a smooth classical solution exists if and only if $\ell\leq 2\pi L$.

When the Liouville action is taken into account, the Neumann constraint $\partial\varphi_{*}/\partial n_{*} = \text{constant}$ on the boundary is replaced by the second equation in \eqref{eomJT2}. This more general equation allows a priori for a normal derivative varying along the boundary, and thus also a varying extrinsic curvature. Nevertheless, it is clear that the classical solutions for the metric in  the JT theory will remain solutions of our more general set of equations of motion. Indeed, if we \emph{assume} that $\partial\varphi_{*}/\partial n_{*}$ is a constant, then clearly the two sets of equations of motion \eqref{eomJT1}, \eqref{eomJT2} and \eqref{JTnotL1}, \eqref{JTnotL2} are equivalent, up to trivial constant shifts in $\varphi_{*}$ and $\zeta_{*}$.

The fact that the equations \eqref{eomJT1} and \eqref{eomJT2} may have more general solutions,  breaking the $\uone$ symmetry of the round metrics $\delta_{\ell}$ or $\delta_{\ell}^{\pm}$, is an interesting possibility. In particular, one may expect that such solutions become relevant in the positive curvature case when $\ell>2\pi L$.\footnote{The breaking of the $\uone$ invariance in this context is reminiscent of a Gross-Witten phase transition  \cite{Fer3}.} However, for the purpose of the present work, we shall assume that the relevant solutions are the symmetric ones. In particular, we limit ourselves to the case $\ell\leq 2\pi L$ in the positive curvature theory. A discussion of the local and global stability of the symmetric solutions will be presented below, especially in Section \ref{oneloopSec}.

To summarize, the saddle-point solutions and saddle point values of the action are given as follows.

\subsubsection{Negative curvature} The metric being $g_{*}=\delta_{\ell}^{-}$ as in Eq.\ \eqref{delmindef}, we get
\begin{align}\label{JTminsaddlesol1} \Sigma_{*}^{-} & = \ln(2Lr_{0}) - \ln \bigl(1-r_{0}^{2}\rho^{2}\bigr)\\\label{JTminsaddlesol2} \varphi_{*}^{-} & = \frac{2(\mu L^{2} -1)r_{0}^{2}}{1+r_{0}^{2}}\frac{1-\rho^{2}}{1-r_{0}^{2}\rho^{2}}\,\cvp
\end{align}
with $r_{0}$ given in \eqref{ellminusrel}. This yields 
\be\label{SLsadJTmin}S_{\text L *}^{-} = \frac{\ell}{L}\, r_{0} + 4\pi\ln(2Lr_{0})\, .\ee
It is convenient to introduce the dimensionless parameters
\be\label{xydefJTmin} x = \frac{\ell}{2\pi L}\,\cvp \quad y = \mu L^{2}\, .\ee
Using the formula \eqref{Akmin} for the area, we get $S_{\text{eff}*}^{-} = S_{\text L*}^{-}+\mu A_{*}^{-}$, which yields the tree-level partition function $\ln Z^{-}_{\text{tree}} = -\frac{|c|}{24\pi}S_{\text{eff}*}^{-} = |c|f^{-}_{0}$, with the tree-level coefficient $f^{-}_{0}$ given by
\be\label{fzeromin} f_{0}^{-} =-\frac{1}{6}\ln\frac{\ell}{2\pi}  + \frac{1}{6}\ln\frac{1+\sqrt{1+x^{2}}}{2}- \frac{1}{12}(1+y)\bigl(\sqrt{1+x^{2}}-1\bigr)\, .\ee

\subsubsection{Zero curvature}

The metric is the flat disk \eqref{deltaldef}, which yields
\begin{align}\label{JTzerosaddlesol} &\Sigma_{*}^{0}  = \ln\frac{\ell}{2\pi}\,\cvp\quad \varphi_{*}^{0}  = \frac{\mu \ell^{2}}{8\pi^{2}}(1-\rho^{2})\, ,\\\label{SLsadzero}
& S_{\text L*}^{0} = 4\pi\ln\frac{\ell}{2\pi}\, \cvp\quad A_{*}^{0} = \frac{\ell^{2}}{4\pi}
\end{align}
and thus $\ln Z_{\text{tree}}^{0} = -\frac{|c|}{24\pi}S_{\text{eff}*}^{0} = -\frac{|c|}{24\pi}(S_{\text L*}^{0}+\mu A_{*}^{0}) = |c|f_{0}^{0}$ for a tree-level coefficient $f_{0}^{0}$ given by
\be\label{fzerozero} f_{0}^{0} = -\frac{1}{6}\ln\frac{\ell}{2\pi}-\frac{\mu\ell^{2}}{96\pi^{2}}\,\cdotp\ee
\begin{figure}
\centerline{\includegraphics[width=4in]{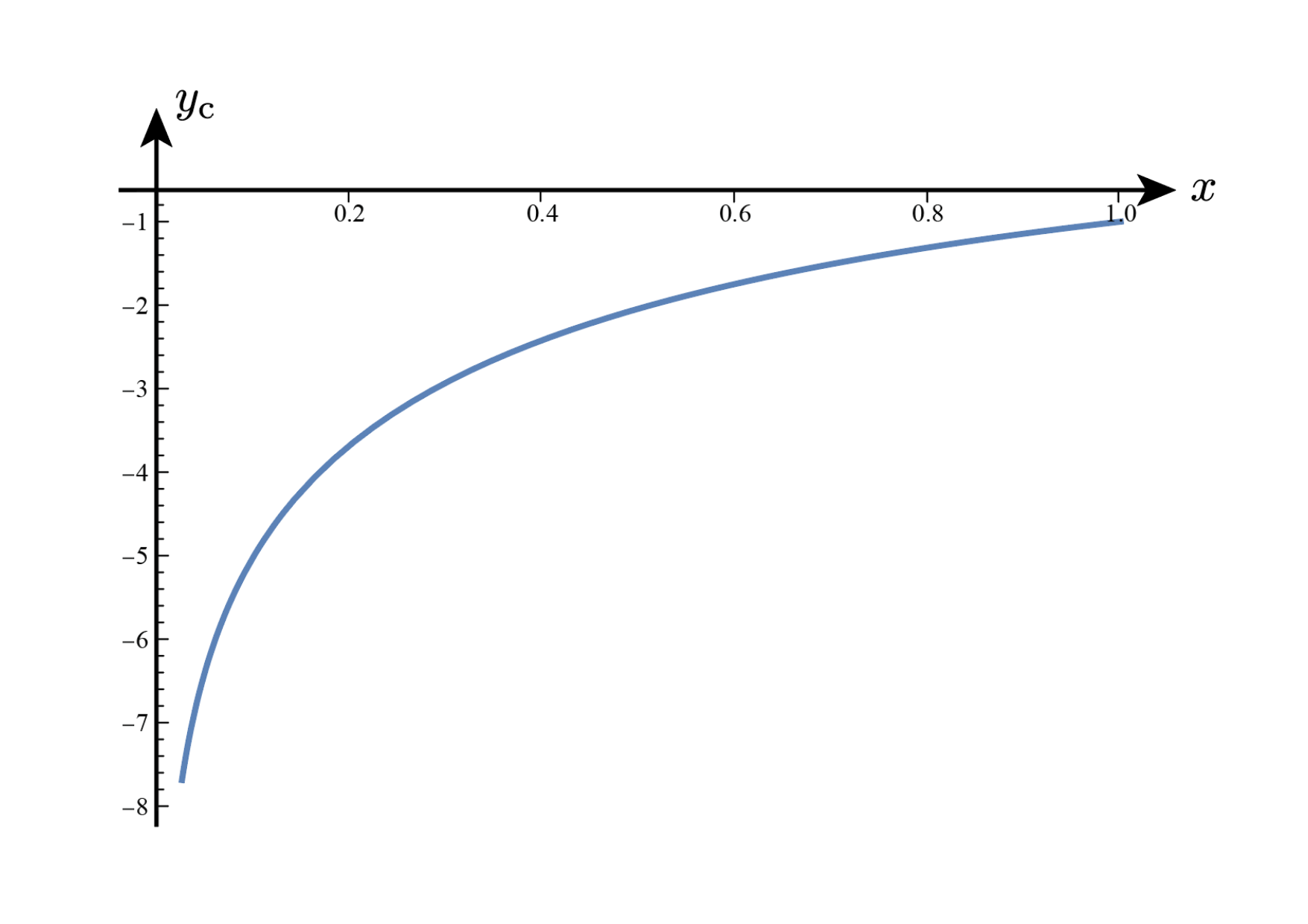}}
\caption{The critical line $y_{\text c}(x)$, defined by Eq.\ \eqref{smalldiskdom}, separating the region where the small disk dominates (above the curve) and the region where the large disk dominates (below the curve).\label{ycFig}}
\end{figure}

\subsubsection{Positive curvature} We find two distinct solutions in this case, corresponding to the large and small disks depicted in Fig.\ \ref{posdisFig}. The metric is of the form $g_{*}=\delta_{\ell}^{+}$ as in Eq.\ \eqref{delplusdef}, i.e.
\begin{align}\label{JTplussaddlesol1} \Sigma_{*}^{+} & = \ln(2Lr_{0}) - \ln \bigl(1+r_{0}^{2}\rho^{2}\bigr)\\\label{JTplussaddlesol2} \varphi_{*}^{+} & = \frac{2(\mu L^{2} +1)r_{0}^{2}}{1-r_{0}^{2}}\frac{1-\rho^{2}}{1+r_{0}^{2}\rho^{2}}\,\cvp
\end{align}
with $r_{0}$ given by one of the two formulas in \eqref{ellplusrel}. This yields 
\be\label{SLsadJTplus}S_{\text L *}^{+} = -\frac{\ell}{L}\, r_{0} + 4\pi\ln(2Lr_{0})\, .\ee
Using the formula \eqref{Akplus} for the area, we get $S_{\text{eff}*}^{+} = S_{\text L*}^{+}+\mu A_{*}^{+}$; in terms of the parameters $x$ and $y$ defined in \eqref{xydefJTmin}, this is
\begin{align}\label{Seffplussadbig} S_{\text{eff}*}^{+>} = 4\pi\ln L + 2\pi(y-1)\bigl(1+\sqrt{1-x^{2}}\bigr) - 4\pi\ln\frac{1-\sqrt{1-x^{2}}}{2x}\\
\label{Seffplussadsmall} S_{\text{eff}*}^{+<} = 4\pi\ln L + 2\pi(y-1)\bigl(1-\sqrt{1-x^{2}}\bigr) - 4\pi\ln\frac{1+\sqrt{1-x^{2}}}{2x}
\end{align}
for the large and small disk solutions, respectively. 

The relevant solution is the one which has the smallest effective action. One finds that the small disk dominates if and only if
\be\label{smalldiskdom} y>y_{\text c}(x) =1-\frac{1}{\sqrt{1-x^{2}}}\ln\frac{1+\sqrt{1-x^{2}}}{1-\sqrt{1-x^{2}}}\,\cvp\ee
the crossover between the small and the large disk occurring when the strict inequality in the above equation is replaced by an equality. We have plotted in Fig.\ \ref{ycFig} the critical line $y_{\text c}$ as a function of $x$. The interpretation is quite simple. If the only term present in the effective action were the area term, then the small disk would be favoured when $y>0$ and the large disk would be favoured when $y<0$. The plot shows that the Liouville piece in the effective action tends to stabilise the small disk, especially for small boundary length. Note that, if $y\geq y_{\text c}(1)=-1$, then the small disk always dominates, for all allowed boundary lengths. Stability issues are further discussed in the next section. 

When the small disk dominates, i.e.\ when $y>y_{\text c}$, the tree-level partition function is given by $\ln Z^{+}_{\text{tree}} = -\frac{|c|}{24\pi}S_{\text{eff}*}^{+<} = |c|f^{+}_{0}$, with a tree-level coefficient $f^{+}_{0}$ given by
\be\label{fzeroplus} f_{0}^{+} =-\frac{1}{6}\ln\frac{\ell}{2\pi}  + \frac{1}{6}\ln\frac{1+\sqrt{1-x^{2}}}{2}- \frac{1}{12}(y-1)\bigl(1-\sqrt{1-x^{2}}\bigr)\, ,\quad y>y_{\text c}\, .\ee
Similarly, when the large disk dominates, i.e.\ when $y<y_{\text c}$, the same tree level coefficient is given by
\be\label{fzeroplusbis} f_{0}^{+}=-\frac{1}{6}\ln\frac{\ell}{2\pi}  + \frac{1}{6}\ln\frac{1-\sqrt{1-x^{2}}}{2}- \frac{1}{12}(y-1)\bigl(1+\sqrt{1-x^{2}}\bigr)\, ,\quad y<y_{\text c}\, .\ee
\section{\label{quadSec}One-loop order: quadratic fluctuations}
\subsection{\label{quadratic action:Liouville}The quadratic action $\widetilde{S}^{(2)}_{\text{eff}}$ for Liouville gravity}

We expand the effective action \eqref{Seffdef} to quadratic order, using \eqref{Sigsadexp}, the field equation \eqref{eom1} and the explicit solution \eqref{Liousaddlesol}. This yields, in the notations of Eq.\ \eqref{Seffquad}, after using an integration by part,
\begin{align}\label{seff1L} \Seff^{(1)} & = \frac{2(1+r_0^2)}{1-r_0^2}\int_{\partial\disk}\xi\, \d \theta \, ,\\
\label{seff2L} \Seff^{(2)} & = \int_{\disk}\Big(\partial_{a}\Xi\partial_{a}\Xi +2\mu e^{2\Sigma_*}\Xi^2\bigr)\, \d^{2} x\ .
\end{align}
By performing the substitution indicated in \eqref{substidelta}, we get the purely quadratic action
\be\label{SefftilL} {\tilde S}_{\text{eff}}^{(2)} = \int_{\disk}\Big(\partial_{a}\Xi\partial_{a}\Xi +2\mu e^{2\Sigma_*}\Xi^2\bigr)\, \d^{2} x-\frac{1+r_0^2}{1-r_0^2}\int_{\partial\disk}\xi^2\, \d \theta\, .\ee

We now carry out the explicit splitting between the bulk and boundary degrees of freedom, in accordance with Eq.\ \eqref{sepBb}. We choose to define $\Sigma_{\sigma}$ by the conditions
\be\label{SsdefLiou} \partial_{a}\partial_{a}\Sigma_{\sigma} = \mu e^{2\Sigma_{\sigma}}\, ,\quad \Sigma_{\sigma|\partial\disk}=\sigma\, .\ee
As required, this prescription fixes uniquely $\Sigma_{\sigma}$ in terms of the boundary field $\sigma$. The advantage of this particular choice is that the bulk and the boundary degrees of freedom decouple in the action \eqref{SefftilL}. Indeed, expanding $\sigma$ as in \eqref{Sigsadexp2} yields a corresponding expansion
\be\label{Sigsexp} \Sigma_{\sigma} = \Sigma_{\sigma*} + \frac{\Xi_{\sigma}}{\sqrt{|c|}}\ee
for $\Sigma_{\sigma}$. At leading order at large $|c|$, which is all what we need for our one-loop calculation, the conditions \eqref{SsdefLiou} are then equivalent to
\be\label{Ssexpcond} \bigl(\Delta_{*}+2\mu\bigr)\Xi_{\sigma}=\bigl(-e^{-2\Sigma_{*}}\partial_{a}\partial_{a}+2\mu\bigr)\Xi_{\sigma} = 0\, ,\quad \Xi_{\sigma|\partial\disk} = \xi\, .\ee
Plugging $\Xi = \Xi_{\text B} + \Xi_{\sigma}$ in \eqref{SefftilL}, and using \eqref{Ssexpcond}, one then finds that the terms proportional to $\Xi_{\text B}\Xi_{\sigma}$ exactly cancel. We are left with
\be\label{SeffLBbsep} {\tilde S}_{\text{eff}}^{(2)}[\Xi_{\text B},\xi] = {\tilde S}_{\text{eff,B}}^{(2)}[\Xi_{\text B}]+{\tilde S}_{\text{eff,b}}^{(2)}[\xi]\, ,\ee
for purely bulk and boundary quadratic actions given by
\begin{align} \label{SeffLB} {\tilde S}_{\text{eff,B}}^{(2)}[\Xi_{\text B}] & = \int_{\disk}\d^{2}x\sqrt{g_{*}}\,\Xi_B \bigl( \Delta_*+2\mu \bigr)\Xi_B\, ,\\ 
\label{SeffLb}{\tilde S}_{\text{eff,b}}^{(2)}[\xi] & = \int_{\partial\disk} \xi (\partial_{\rho}\Xi_{\sigma})_{|\partial\disk} -\frac{1+r_0^2}{1-r_0^2}\int_{\partial\disk} \xi^2\, \d \theta\, .
\end{align}
Note that, by definition, and consistently with \eqref{SigBbc}, $\Xi_{\text B}$ satisfies Dirichlet boundary conditions,
\be\label{XiBDirichlet} \Xi_{\text B |\partial\disk} = 0\, .\ee
The boundary action can be evaluated explicitly as follows. We expand in Fourier modes
\begin{equation}
\xi(\theta)=\sum_{n\in\mathbb Z}\xi_{n} e^{in\theta}\, ,\quad
\Xi_{\sigma}(\rho,\theta)=\sum_{n\in\mathbb Z}\Xi_{\sigma, n}(\rho) e^{in\theta}\, .
\end{equation}
Since the fields are real, we have
\be\label{realitycond} (\xi_{n})^{*} = \xi_{-n}\, ,\quad (\Xi_{\sigma, n})^{*} = \Xi_{\sigma, -n}\, .\ee
The conditions \eqref{Ssexpcond} yields
\begin{equation}\label{equation for Fourier modes: Liouville}
\biggl[-\frac{(1-r_0^2 \rho^2)^2}{4 r_0^2}\Bigl(\frac{\d^2}{\d\rho^2}+\frac{1}{\rho}\frac{\d}{\d\rho}-\frac{n^2}{\rho^2}\Bigr)+2\biggr]\Xi_{\sigma, n}(\rho) =0\, ,\quad \Xi_{\sigma, n}(1) =\xi_{n}\, .
\end{equation}
This can be solved straightforwardly, imposing in particular the regularity of the solution at the origin $\rho=0$,
\begin{equation}\label{XisignLsol}
\Xi_{\sigma, n}(\rho)= \frac{1-r_0^2}{1+r_0^2+|n|(1- r_0^2)}\frac{1+r_0^2 \rho^2+|n|(1- r_0^2 \rho^2)}{1-r_0^2 \rho^2}\rho^{|n|}\xi_{n}.
\end{equation}
Substituting in Eq.\ \eqref{SeffLb}, we get
\begin{equation}\label{tSeffLb}
{\tilde S}_{\text{eff,b}}^{(2)}[\xi] =-2\pi\frac{1-r_{0}^{2}}{1+r_{0}^{2}}|\xi_{0}|^{2}+
4\pi \sum_{n\geq 2} \frac{(n^2-1)(1-r_0^2 )}{1+r_0^2+n(1-r_0^2)}|\xi_{n}|^{2}\, .\end{equation}
\subsection{\label{quadratic action:JT}The quadratic action $\widetilde{S}^{(2)}_{\text{eff}}$ for JT gravity}

As explained in Sec.\ \ref{JTpathsubSec}, in the case of JT gravity, the bulk field $\Sigma_{\text B}$ is set to zero and the Liouville field is entirely determined in terms of its boundary value, $\Sigma = \Sigma_{\sigma}$, by the condition \eqref{LioueqSig}. To compute the quadratic action, we thus proceed in two steps. We start by expanding as usual $\Sigma = \Sigma_{*}+|c|^{-1/2}\Xi$ and $\sigma = \sigma_{*}+|c|^{-1/2}\xi$, see Eq.\ \eqref{Sigsadexp} and \eqref{Sigsadexp2}, we express $\Xi$ in terms of $\xi$ up to terms of order $|c|^{-1/2}$, which will be needed,
\be\label{XiJTxiexp} \Xi = \Xi[\xi] = \Xi^{(0)}[\xi] + \frac{\Xi^{(1)}[\xi]}{\sqrt{|c|}} + O\bigl(|c|^{-1}\bigr)\, .\ee
The expansion \eqref{Seffquad} of the effective action \eqref{Seffdef} then reads
\begin{align}\label{Seff1JT}
\Seff^{(1)}
  & =2 \int_{\disk}\bigl(\partial_{a}\Sigma_{*}\partial_{a}\Xi^{(0)} + \mu e^{2\Sigma_{*}}\Xi^{(0)}\bigr)\, \d^{2} x + 2\int_{\partial\disk} \xi\, \d\theta\, 
\\\label{Seff2JT} \Seff^{(2)} & =2 \int_{\disk}\bigl(\partial_{a}\Sigma_{*}\partial_{a}\Xi^{(1)} + \mu e^{2\Sigma_{*}}\Xi^{(1)}\bigr)\, \d^{2} x + \int_{\disk}\bigl(\partial_{a}\Xi^{(0)}\partial_{a}\Xi^{(0)} + 2\mu e^{2\Sigma_{*}} (\Xi^{(0)})^{2}\bigr)\, \d^{2} x
\end{align}
and can be explicitly evaluated.

\subsubsection{Computation of $\Xi[\xi]$}

Expanding Eq.\ \eqref{LioueqSig} to the desired order in the $1/|c|^{1/2}$ expansion yields the following differential equations for $\Xi^{(0)}$ and $\Xi^{(1)}$,
\begin{align}\label{JTXi0cond} &\Bigl(\Delta_{*} -\frac{2\eta}{L^{2}}\Bigr)\Xi^{(0)} = 0\, ,
\\ \label{JTXi1cond} & \Bigl(\Delta_{*} -\frac{2\eta}{L^{2}}\Bigr)\Xi^{(1)} = \frac{2\eta}{L^{2}}\bigl(\Xi^{(0)}\bigr)^{2}\, .
\end{align}
As usual, $\smash{\Delta_{*} = e^{-2\Sigma_{*}}\partial_{a}\partial_{a}}$ refers to the positive Laplacian on the classical background geometries around which we expand. The relevant classical solutions for $\Sigma_{*}$ were given in Sec.\ \ref{clsolJTSec}. The equations \eqref{JTXi0cond} and \eqref{JTXi1cond} must be solved with the boundary conditions
\be\label{bcXi01} \Xi^{(0)}_{|\partial\disk} = \xi\, ,\quad \Xi^{(1)}_{|\partial\disk} = 0\ee
and imposing regularity at $\rho=0$.

The first equation \eqref{JTXi0cond} is similar to \eqref{Ssexpcond}, which was solved in \eqref{XisignLsol}. Expanding in Fourier modes
\be\label{FourXiJT012} \xi(\theta) = \sum_{n\in\mathbb Z}\xi_{n}e^{in\theta}\, ,\quad
\Xi^{(0)}(\rho,\theta) = \sum_{n\in\mathbb Z}\Xi^{(0)}_{n}(\rho)e^{in\theta}\, ,\ee
we find
\be\label{relationBBields}
\Xi^{(0)}_{n}(\rho)= \frac{1+\eta r_{0}^{2}}{1-\eta r_{0}^{2} + |n|(1+\eta r_{0}^{2})}\frac{1-\eta r_{0}^{2}\rho^{2} + |n|(1+\eta r_{0}^{2}\rho^{2})}{1+\eta r_{0}^{2}\rho^{2}}\rho^{|n|}\xi_{n}\, .\ee
Similarly, expanding
\be\label{FourerXioneJT} \Xi^{(1)}(\rho,\theta) = \sum_{n\in\mathbb Z}\Xi^{(1)}_{n}(\rho)e^{in\theta}\, ,\ee
one finds from \eqref{JTXi1cond} that the Fourier modes $\Xi^{(1)}_{n}$ satisfy an ordinary linear second order differential equation that may be solved by elementary methods. In the case of the zero mode $\Xi^{(1)}_{0}$, which is, as we shall see below, the only function we need to evaluate explicitly, the differential equation reads
\be\label{diffeqFourXione}
\biggl[-\frac{(1+\eta r_0^2 \rho^2)^2}{4 r_0^2}\Bigl(\frac{\d^2}{\d\rho^2}+\frac{1}{\rho}\frac{\d}{\d\rho}\Bigr)-2\eta\biggr]\Xi^{(1)}_{0}(\rho) = 2\eta\sum_{n\in\mathbb Z}\bigl|\Xi^{(0)}_{n}(\rho)\bigr|^{2}\, .\ee
Its solution, smooth at the origin and satisfying the boundary condition \eqref{bcXi01}, $\Xi^{(1)}_{0}(1)=0$, is given by
\begin{multline}\label{Xionezerosol} \Xi^{(1)}_{0}(\rho) = -\frac{2\eta r_{0}^{2}(1+\eta r_{0}^{2})}{1+\eta r_{0}^{2}\rho^{2}}\\\sum_{n\in\mathbb Z}\biggl[\frac{1+\eta r_{0}^{2}}{1+\eta r_{0}^{2}\rho^{2}}\,\rho^{2|n|+2}- \frac{1-\eta r_{0}^{2}\rho^{2}}{1-\eta r_{0}^{2}}\biggr]
\frac{|\xi_{n}|^{2}}{\bigl[1-\eta r_{0}^{2}+|n|(1+\eta r_{0}^{2})\bigr]^{2}}\,\cdotp
\end{multline}
For future reference, let us note that this yields
\be\label{derXione} {\frac{\d\Xi^{(1)}_{0}}{\d\rho}} \bigl(\rho=1\bigr) = -\frac{4\eta r_{0}^{2}}{1-\eta^2 r_{0}^{4}}\sum_{n\in\mathbb Z}\frac{1+\eta^2 r_{0}^{4}+|n|(1- \eta^2 r_{0}^{4})}{\bigl[1-\eta r_{0}^{2}+|n|(1+\eta r_{0}^{2})\bigr]^{2}}\, |\xi_{n}|^{2}\, .\ee
For $\eta=0$, the right-hand side of the above equation vanishes and for $\eta=\pm 1$ one has $\eta^{2}=1$.

\subsubsection{Quadratic action in the flat case}

When $\eta=0$, $\Sigma_{*} = \ln\frac{\ell}{2\pi}=\text{constant}$, $\Xi^{(0)}_{n} = \rho^{|n|}\xi_{n}$ and $\Xi^{(1)}=0$. The derivation of the quadratic action is then particularly simple. From \eqref{Seff1JT} and \eqref{Seff2JT} we immediately get
\begin{align}\label{Seff1flat} \Seff^{(1)} & = 2\Bigl(1+\frac{\mu\ell^{2}}{8\pi^{2}}\Bigr)\int_{0}^{2\pi}\xi\,\d\theta\, ,\\
\label{Seff2flat} \Seff^{(2)} & =2\pi\sum_{n\in\mathbb Z}\biggl(|n| + \frac{\mu\ell^{2}}{4\pi^{2}}\frac{1}{|n|+1}\biggr)|\xi_{n}|^{2}\, .
\end{align}
Using the rule \eqref{substidelta} then yields
\be\label{tSeffflat} {\tilde S}_{\text{eff}}^{(2)} = -2\pi\Bigl(1-\frac{\mu\ell^{2}}{8\pi^{2}}\Bigr)|\xi_{0}|^{2} + 4\pi\sum_{n\geq 2}\frac{n-1}{n+1}\Bigl(n+1-\frac{\mu\ell^{2}}{8\pi^{2}}\Bigr)|\xi_{n}|^{2}\, .\ee

\subsubsection{Quadratic action in the curved cases}

Integrating by part and using $\partial_{a}\partial_{a}\Sigma_{*} = -\eta e^{2\Sigma_{*}}/L^{2}$, we first rewrite \eqref{Seff1JT} as
\be\label{Seff1c1} \Seff^{(1)} = 2\Bigl(\mu + \frac{\eta}{L^{2}}\Bigr)\int_{\disk}e^{2\Sigma_{*}}\Xi^{(0)}\d^{2}x + 2\int_{0}^{2\pi}\Bigl(1+\frac{\d\Sigma_{*}}{\d\rho}(\rho=1)\Bigr)\xi\,\d\theta\, .\ee
Using the classical solutions \eqref{JTminsaddlesol1}, \eqref{JTplussaddlesol1} and the solution \eqref{relationBBields} for $\Xi^{(0)}$ then yields
\be\label{Seff1c2} \Seff^{(1)} =\frac{2(r_{0}^{4}+2\mu L^{2}r_{0}^{2}+1)}{1-r_{0}^{4}}\int_{0}^{2\pi}\xi\,\d\theta\, .\ee
Similarly, integrating by part, using in particular $\Xi^{(1)}_{|\partial\disk}=0$ and the equations of motion \eqref{JTXi0cond}, \eqref{JTXi1cond}, allows to rewrite \eqref{Seff2JT} as a purely boundary term,
\be\label{Seff2c1} \Seff^{(2)} = \int_{0}^{2\pi}\biggl[\xi\frac{\partial\Xi^{(0)}}{\partial\rho} - \bigl(1+\eta\mu L^{2}\bigr)\frac{\partial\Xi^{(1)}}{\partial\rho}\biggr]\,\d\theta\, .\ee
By using \eqref{relationBBields} and \eqref{derXione}, this is straightforwardly evaluated. Using \eqref{Seff1c2} and the substitution \eqref{substidelta} then yields
\begin{multline}\label{tSeffcurved} {\tilde S}_{\text{eff}}^{(2)} =-2\pi \frac{(1+\eta r_{0}^{2})(1-2(\mu L^{2}+2\eta)r_{0}^{2}+r_{0}^{4})}{(1-\eta r_{0}^{2})^{3}}\,|\xi_{0}|^{2}\\
+4\pi \frac{1+\eta r_{0}^{2}}{1-\eta r_{0}^{2}}\sum_{n\geq 2}\frac{(1-r_{0}^{4})n-2(\mu L^{2}+2\eta)r_{0}^{2}+1+r_{0}^{4}}{\bigl[1-\eta r_{0}^{2} + n (1+\eta r_{0}^{2})\bigr]^{2}}\, (n^{2}-1) |\xi_{n}|^{2}\, .
\end{multline}

Note that one gets, from either \eqref{ellminusrel} or \eqref{ellplusrel} (for $r_{0}=r_{0}^{<}$), that $r_{0}\sim\frac{\ell}{4\pi L}$ in the flat space $L\rightarrow\infty$, fixed $\ell$, limit. Then, as expected, we find the flat space result \eqref{tSeffflat} as the limit of \eqref{tSeffcurved}, starting either from $\eta=-1$ or from $\eta=+1$ for the small disk solution.

\subsection{\label{StabSec}Stability analysis}

Let us analyse the local stability of the classical geometries around which we expand. Clearly, local stability is a necessary condition for the validity of our analysis. It is equivalent to the convergence of the Gaussian integrals over the quadratic fluctuations that we perform in the next Section to derive the one-loop partition functions.

A first remark is in order. By inspecting the form of the quadratic fluctuations, Eq.\ \eqref{tSeffLb}, \eqref{tSeffflat} or \eqref{tSeffcurved}, one notices that the boundary modes $\xi_{\pm 1}$ are absent and that the sign in front of the zero mode contribution $|\xi_{0}|^{2}$ may be negative. These flat directions and/or zero mode instabilities are unphysical and harmless, because they are due to the $\PslR$ gauge redundancy of conformal gauge on the disk. It is indeed straightforward to check (see the next Section) that the $\delta$-functions in the path integrals \eqref{Zloop1} or \eqref{ZJTloop1} set $\xi_{0}=0$ and determine $\xi_{\pm 1}$ in terms of the other modes.

\subsubsection{Liouville theory}

In the Liouville theory, we have limited our analysis above to the case of a non-negative cosmological constant, because the theory is classically unstable for $\mu<0$ and the known solution, Eq.\ \eqref{exactWLioudiskc}, makes sense only for $\mu\geq 0$, suggesting that the full quantum theory exists only for $\mu\geq 0$.\footnote{The fact that the theory is ill-defined for $\mu<0$ can also be understood by looking at the moments $\langle A^{k}\rangle$ of the area in the $\mu=0$ theory. An obvious necessary condition for the existence of the theory at $\mu<0$ is that these moments must be finite for all $k$. However, they are predicted to diverge for sufficiently large values of $k$, both in the matrix model approach and by the exact formula \eqref{exactWLioudiskc}.}

Nevertheless, let us briefly discuss the case $\mu<0$ as well. When $\ell\leq 2\pi/\sqrt{-\mu}$, one finds two classical solutions, a small and a large disk, exactly as in the positive curvature JT theory with the identification $L=1/\sqrt{-\mu}$.  A comparison of the values of the classical action for these two solutions shows that the small disk always dominates over the large one. The computation of the quadratic fluctuations around these saddles, following exactly the same strategy as in Sec.\ \ref{quadratic action:Liouville}, then shows that the dominant saddle, i.e.\ the small disk, is locally stable.\footnote{One finds that the large disk is locally unstable, due to a negative eigenvalue of the operator $\Delta_*+2\mu$ appearing in the bulk action \eqref{SeffLB} for $\mu<0$ \cite{detpaper}.} 

We thus have an explicit example of a case for which the dominant saddle is locally stable,  allowing to define a consistent semiclassical large $|c|$ expansion, but which is globally unstable and do not correspond to a well-defined theory at the non-perturbative level. 

When $\mu > 0$, the bulk action \eqref{SeffLB} is positive-definite thanks to the Breitenlohner-Freedman bound and the boundary action \eqref{tSeffLb} is also manifestly positive-definite (after setting $\xi_{0}=0$ as explained above). These ensure that the saddle point in this case is locally stable. As we have already shown in section \ref{clSec}, the computation of the classical action at the saddle point correctly reproduces the leading term in the expansion of the FZZT result at the $c\rightarrow-\infty$ limit. This means that the saddle that we are considering is indeed the global minimum of the quantum effective action in the aforementioned regime.

\subsubsection{Flat space JT gravity}

For flat-space JT gravity, Eq.\ \eqref{tSeffflat} shows that our classical solution is stable if and only if
\be\label{flatstability} \mu\ell^{2} < 24\pi^{2}\, .\ee
This instablility is easily interpreted: when $\mu$ is too large and positive, the cosmological constant term in the action favours geometries with a branched-polymer structure, that have very small area, over the smooth inflated classical solution. Note that the full quantum theory is expected to be perfectly well-defined for arbitrary positive $\mu$. It is just no longer dominated by a smooth classical geometry. When $\mu<0$, the saddle is always stable. This is not surprising, since we are in a regime where the geometry is semi-classical, and the isoperimetric inequality ensures the classical stability of the model. Note that the existence of the model for any negative $\mu$ in the full quantum theory is a much more subtle issue, which remains to be settled but which can be tackled, in principle, with the methods presented in \cite{Fer1}; see also the discussion in \cite{Fer2}.

\subsubsection{Negative curvature JT}

The situation for negative curvature JT is similar to the flat case. The action \eqref{tSeffcurved} is positive-definite when $\xi_{0}=0$ if and only if $3-r_{0}^{4}-2(\mu L^{2}-2)r_{0}^{2}>0$ which, using the parameters $x$ and $y$ defined in \eqref{xydefJTmin}, is equivalent to
\begin{equation}\label{stabnegcurv}
 y <3+\frac{2}{x^{2}}\bigl(1+2\sqrt{1+x^2}\bigr)\, .
\end{equation}
This inequality gives the domain of stability of the saddle point, see Fig.\ \ref{neg_curv_stability}. For any given $\ell$, there is a strictly positive upper bound for the cosmological constant. The existence of this upper bound, as well as the stability for all negative $\mu$, can be interpreted exactly as in the flat case. The existence of the finite $c$ quantum theory for any negative $\mu$ is strongly believed to be valid but, strictly speaking, and as in the flat case, this remains to be proven \cite{Fer2}.

\begin{figure}
\centerline{\includegraphics[width=5in]{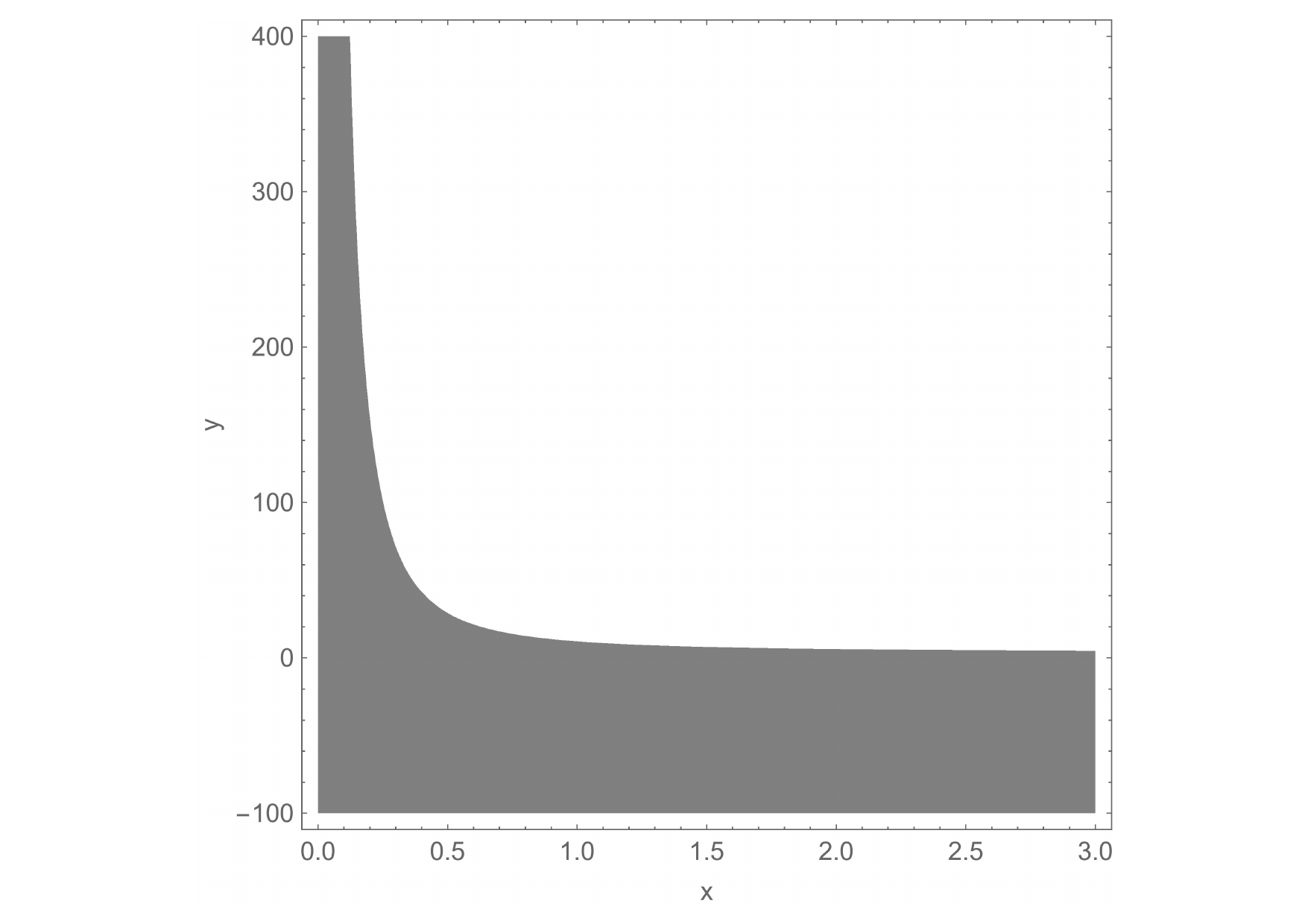}}
\caption{Domain of stability of the saddle point in negative curvature.}
\label{neg_curv_stability}
\end{figure}

\subsubsection{Positive curvature JT}

The case of positive curvature JT is the most intricate. The stability conditions, derived from \eqref{tSeffcurved}, are $3-r_{0}^{4}-2(\mu L^{2}+2)r_{0}^{2}>0$ for the small disk ($r_{0}<1$) and $3-r_{0}^{4}-2(\mu L^{2}+2)r_{0}^{2}<0$ for the large disk ($r_{0}>1$). In terms of the variables $x$ and $y$, this is equivalent to 
\begin{equation}
\label{domain of stability: positive curvature}
\begin{split}
& y<y_{\text s}(x) = -3+\frac{2}{x^{2}}\bigl(1+2\sqrt{1-x^2}\bigr)\quad \text{(small disk stability),}\\
& y>y_{\text l}(x) = -3+\frac{2}{x^{2}}\bigl(1-2\sqrt{1-x^2}\bigr)\quad \text{(large disk stability).}
\end{split}
\end{equation}
The domains of local stability are depicted in Fig.\ \ref{pos_curv_stability}, together with the critical line $y_{\text c}(x)$ separating the regions where the small disk dominates (above $y_{\text c}$) and the large disk dominates (below $y_{\text c}$). For any fixed allowed value of $x$, we find different regimes:

i) If $y>y_{\text s}(x)$, the large disk is the only locally stable solution, but we know that it is actually metastable, since the small disk has a lower action. The small disk itself is locally unstable. In this region of large positive cosmological constant, we expect a branched polymer phase and there is no dominating smooth classical geometry.

ii) If $y_{\text l}(x)<y<y_{\text s}(s)$, both the small and the large disks are locally stable, but the large disk is metastable for $y_{\text c}(x)<y<y_{\text s}(x)$ and the small disk is metastable for $y_{\text l}(x)<y<y_{\text c}(x)$. 

iii) If $y<y_{\text l}(x)$, the small disk is the only locally stable solution, but we know that it is metastable, since the large disk has a lower action. The large disk itself is locally unstable. The instability for $y<y_{\text l}(x)$ is specific to the positive curvature theory. This is consistent with the results in \cite{Fer2}, where it is explained that positive curvature JT gravity is actually unstable for any negative value of the cosmological constant.

So we see that our semiclassical, $c\rightarrow -\infty$ limit, analysis in positive curvature makes sense only in an interval $y\in ]y_{\text l}(x),y_{\text s}(x)[$ of values for the cosmological constant. It is tempting to conjecture that the full, finite $c$, quantum theory will exist only for sufficiently large cosmological constant. Below a certain critical value $\mu_{\text c}\leq 0$, the classical instability of the theory probably gives rise to the inconsistency of the full quantum theory \cite{Fer2}.

\begin{figure}[h!]
\centerline{{\includegraphics[width=5in]{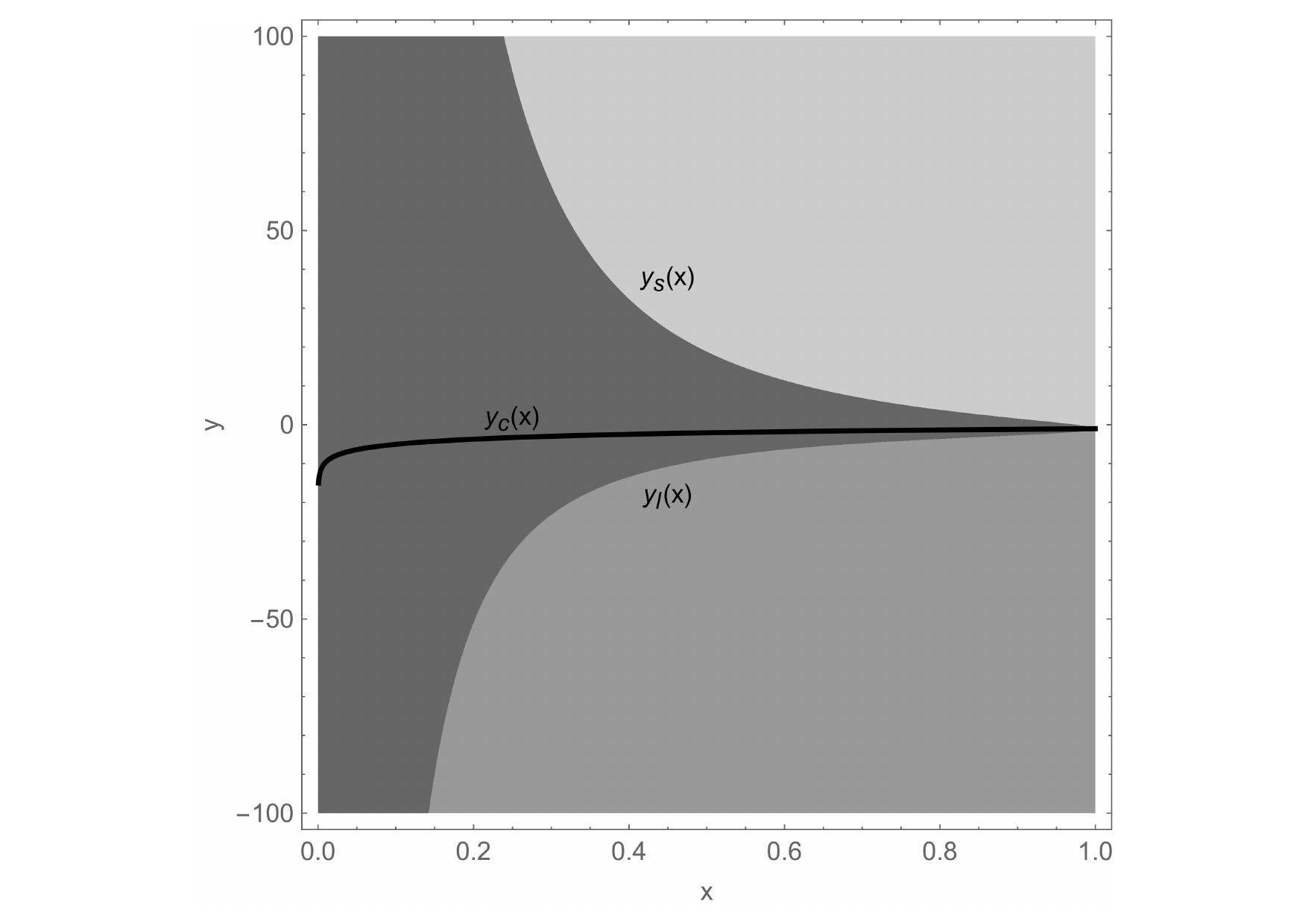}}}
\caption{Domains of stability of the saddle points in positive curvature. The critical line, in black, separates the regions where the small disk dominates (above) and the large disk dominates (below). The lightest shade of gray indicates the domain (upper-right region) in which only the large disk is locally stable. The slightly darker shade of gray indicates the domain (lower-right region) in which only the small disk is locally stable. The darkest shade of gray indicates  the domain in which both disks are locally stable. See the main text for a detailed discussion}
\label{pos_curv_stability}
\end{figure}
\section{\label{oneloopSec}One-loop order: path integrals}
\subsection{One-loop counterterms}

The general form of the counterterms were discussed in Sec.\ \ref{ctSec}. At one-loop, their contribution to the partition function is straightforwardly obtained by evaluating each term on the relevant classical backgrounds. The finite contributions that come from counterterms are fundamental ambiguities in the logarithm of the partition functions, that can be eliminated by redefining the parameters. It is thus useful to write down their explicit form. In the results presented in the next subsections, we shall often simply write ``$+\ \text{counterterms}$'' to indicate that these terms might be added, with undetermined coefficients that can depend on the local couplings but not on $\ell$ and thus not on $r_{0}$ or equivalently on $x$.

In the Liouville theory, the counterterm action \eqref{SLct}, written in terms of the parameter $r_{0}$ or equivalently in terms of the parameter $x$, yields
\be\label{SLctoneloop} S^{\text L}_{\text{c.t., 1-loop}} =  c_{1} +  c_{2}\frac{r_{0}}{1-r_{0}^{2}} + \frac{c_{3}}{1-r_{0}^{2}}= \tilde c_{1} +\tilde c_{2} x + \tilde c_{3}\sqrt{1+x^{2}}\ee
for arbitrary numerical constants $c_{i}$s and $\tilde c_{i}$s.

In flat JT gravity, the counterterms are of the same kind, and evaluating on the classical background yields
\be\label{SJTzeroctoneloop} S^{\text{JT}, 0}_{\text{c.t., 1-loop}} = c_{1} + c_{2} \ell + c_{3}\ell^{2}\ee
for numerical constants $c_{i}$s. 

In curved JT gravity, there are similar counterterms, but also a new contribution that comes from the last term in Eq.\ \eqref{SJTct}. This term can be evaluated by using \eqref{varphiPhirel} and the classical solutions \eqref{JTminsaddlesol2} and \eqref{JTplussaddlesol2},
\be\label{ctJTspecial} \int_{\disk}\d^{2}x\sqrt{g_{*}}\,\varphi_{*}^{\eta} = 4\pi(\mu L^2+\eta) L^{2}\frac{r_{0}^{4}}{1-r_{0}^{4}} \,\cdotp\ee
Overall, we get
\be\label{SJTpmctoneloop} S^{\text{JT}, \eta}_{\text{c.t., 1-loop}} = c_{1} + c_{2}\frac{r_{0}}{1+\eta r_{0}^{2}} + \frac{c_{3}}{1-r_{0}^{2}} + \frac{c_{4}}{1+r_{0}^{2}} =\tilde c_{1} + \tilde c_{2} x + \tilde c_{3}\sqrt{1-\eta x^{2}}+\frac{\tilde c_{4}}{\sqrt{1-\eta x^{2}}} \ee
for dimensionless constants $c_{i}$ or $\tilde c_{i}$ that may depend on $y$.

\subsection{\label{detSec}Bulk determinants}

In Liouville gravity, integrating out the bulk fluctuations $\Xi_{\text B}$ in the one-loop path integral \eqref{Zloop1}, by using the form \eqref{SeffLB} of the action for the quadratic fluctuations, produces a bulk contribution to the partition function of the form
\be\label{BulkLpathint} Z_{\text{L, 1-loop, Bulk}} = \frac{1}{\sqrt{\det_{\text D}(\Delta_{*}+2\mu)}}\, \cvp\ee
where $\Delta_{*}$ is the positive Laplacian on the negative curvature disk $R=-2\mu$ of fixed boundary length $\ell$ and the determinant is computed with Dirichlet boundary conditions.\footnote{This is because the bulk field $\Xi_{\text B}$ satisfies Dirichlet boundary conditions, see Eq.\ \eqref{XiBDirichlet}.} Similarly, the JT gravity one-loop path integral \eqref{ZJTloop1} has a factor
\be\label{BulkJTpathint} Z_{\text{1-loop, Bulk}}^{(\eta)} = \frac{1}{|\det_{\text D}(\Delta_{*}-\frac{2\eta}{L^{2}})|}\, \cvp\ee
where $\Delta_{*}$ is the positive Laplacian on the constant curvature disk $R=2\eta/L^{2}$ of fixed boundary length $\ell$ and the determinant is also computed with Dirichlet boundary conditions. Note that the determinant needed for the Liouville theory is the same as the one needed for negative curvature JT, with the identification $\mu = 1/L^{2}$.

We are thus facing a non-trivial problem: evaluating the determinants $\det_{\text D}(\Delta_{*}-\frac{2\eta}{L^{2}})$ on \emph{finite size} constant curvature disks. As far as we know, this calculation has never been done before. Since the result is in itself new and interesting (in non-zero curvature), and the developments required to derive the result are rather lengthy and of a different spirit from the ideas of the present paper, we have decided to present the derivation in a separate paper \cite{detpaper}. In \cite{detpaper}, we actually discuss more general determinants, of the form $\det_{\text D}(\Delta_{*}+M^{2})$, for arbitrary mass parameter $M$. The result of the calculations, made within the $\zeta$-function regularization scheme, are as follows.

\subsubsection{Flat bulk determinants}

For
\be\label{deltastarflat} \Delta_{*} = -\Bigl(\frac{2\pi}{\ell}\Bigr)^{2}\partial_{a}\partial_{a}\ee
the positive Laplacian on a flat disk of circumference $\ell$, we have
\begin{multline}\label{Bdetflat} \ln{\det}_{\text D}\biggl[\Bigl(\frac{\ell_{\text{IR}}}{2\pi}\Bigr)^{2}\bigl(\Delta_{*}+M^{2}\bigr)\biggr] =
\frac{1}{3}\ln 2 - \frac{1}{2}\ln(2\pi)-\frac{5}{12}-2\zeta_{\text R}'(-1)\\ +\frac{1}{2}\bigl(\gamma -1-\ln 2 \bigr)\Bigl(\frac{M\ell}{2\pi}\Bigr)^{2} + \biggl[-\frac{1}{3} + \frac{(M\ell)^{2}}{8\pi^{2}}\biggr]\ln\frac{\ell}{\ell_{\text{IR}}}\\
+\sum_{n\in\mathbb Z}\ln\biggl[2^{|n|}|n|! \Bigl(\frac{2\pi}{M\ell}\Bigr)^{|n|} I_{|n|}\Bigl(\frac{M\ell}{2\pi}\Bigr) \exp\Bigl(-\frac{M^{2}\ell^{2}}{16\pi^{2}(|n|+1)}\Bigr)\biggr]
\end{multline}
where $\zeta_{\text R}$ is the Riemann $\zeta$-function, $\gamma$ is Euler's constant and $I_{n}$ is the Bessel function of the second kind. We have introduced an arbitrary infrared renormalization scale $\ell_{\text{IR}}$ to make the determinants dimensionless.

For our purposes, we need the case $M=0$, which yields
\be\label{Bdetflat2} \ln{\det}_{\text D}\biggl[\Bigl(\frac{\ell_{\text{IR}}}{2\pi}\Bigr)^{2}\Delta_{*}\biggr] = - \frac{1}{3}\ln\ell + \text{counterterms}.\ee
Of course, this very simple result at $M=0$ can be derived straightforwardly from the conformal anomaly, since $\det_{\text D}\Delta_{*}$ is the partition function of a $c=-2$ conformal field theory. We have included the full formula \eqref{Bdetflat} for completeness but also to compare with the negative and positive curvature cases below, for which the conformal anomaly does not provide the result we need.

\subsubsection{Curved bulk determinants}

For 
\be\label{deltastarneg} \Delta_{*}^{\eta}= -\frac{(1+\eta r_{0}^{2}\rho^{2})^{2}}{4L^{2}r_{0}^{2} }\partial_{a}\partial_{a}\ee
the positive Laplacian on a constant curvature disk, $R=2\eta/L^{2}$, of circumference $\ell$ (see Eq.\ \eqref{ellminusrel} and Eq.\ \eqref{ellplusrel} for the relation between the parameters $r_{0}$ and $\ell$), we have
\begin{multline}\label{Bdetposneg}
\ln{\det}_{\text D}\biggl[\Bigl(\frac{\ell_{\text{IR}}}{2\pi}\Bigr)^{2}\bigl(\Delta_{*}^{\eta}+M^{2}\bigr)\biggr] = - \frac{1}{2}\ln(2\pi)-\frac{5}{12}-2\zeta_{\text R}'(-1)\\ 
+ \frac{\eta A^{\eta}}{3\pi L^{2}}
 +\frac{\gamma -1}{2\pi} A^{\eta}M^{2} +
 \biggl[-\frac{1}{3} + \frac{A^{\eta}M^{2}}{2\pi}\biggr]\ln\frac{2\pi Lr_{0}}{\ell_{\text{IR}}}\\
+\sum_{n\in\mathbb Z}\ln\Biggl[
F\Bigl(\frac{1}{2}+\frac{1}{2}\sqrt{1-4\eta (ML)^{2}},\frac{1}{2}-\frac{1}{2}\sqrt{1-4\eta (ML)^{2}},|n|+1,\frac{\eta A^{\eta}}{4\pi L^{2}}\Bigr)\\
\exp\biggl(-\frac{A^{\eta}M^{2}}{4\pi(|n|+1)}F\Bigl(1,1,|n|+2,\frac{\eta A^{\eta}}{4\pi L^{2}}\Bigr)\biggr)\Biggr]\, ,
\end{multline}
where $F$ is the usual hypergeometric function and $A^{\eta}$ is the area, given in terms of $r_{0}$ in Eq.\ \eqref{Akmin} and \eqref{Akplus}. Note that it is straightforward to check that the flat space limit of \eqref{Bdetposneg}, $L\rightarrow\infty$ at fixed $\ell$,  reproduces the flat space result \eqref{Bdetflat}. In particular, the hypergeometric functions yield the correct Bessel functions, as can be checked easily by analysing the limit of the coefficients in the defining series representations.

For our purposes, we need the special case $M^{2} = -2\eta/L^{2}$. The hypergeometric function  then drastically simplifies to
\begin{multline}\label{hypergeomsimple} F\Bigl(\frac{1}{2}+\frac{1}{2}\sqrt{1-4\eta (ML)^{2}},\frac{1}{2}-\frac{1}{2}\sqrt{1-4\eta (ML)^{2}},|n|+1,z\Bigr) \\= F\bigl(2,-1,|n|+1,z\bigr) = 1-\frac{2 z}{|n|+1}
\end{multline}
and the infinite sum in \eqref{Bdetposneg} can be explicitly evaluated \cite{detpaper},
\begin{multline}\label{infiniteSsum} \sum_{n\in\mathbb Z}\ln\Biggl[\biggl(1-\frac{\eta A^{\eta}}{2\pi L^{2}}
\frac{1}{|n|+1}\biggr)\exp\biggl(\frac{\eta A^{\eta}}{2\pi L^{2}}\frac{1}{|n|+1}F\Bigl(1,1,|n|+2,\frac{\eta A^{\eta}}{4\pi L^{2}}\Bigr)\biggr)\Biggr]\\
= \frac{\gamma\eta A^{\eta}}{\pi L^{2}} + \ln\frac{1-\eta r_{0}^{2}}{1+\eta r_{0}^{2}} - 
\frac{1-\eta r_{0}^{2}}{1+\eta r_{0}^{2}}\ln \bigl(1+\eta r_{0}^{2}\bigr)^{2} - 2\ln\Gamma\Bigl(\frac{2}{1+\eta r_{0}^{2}}\Bigr)\, .
\end{multline}
In the above equation, $\Gamma$ denotes Euler's Gamma function. This yields
\begin{multline}\label{Bdetposneg2}
\ln{\det}_{\text D}\biggl[\Bigl(\frac{\ell_{\text{IR}}}{2\pi}\Bigr)^{2}\bigl(\Delta_{*}^{\eta}-2\eta/L^{2}\bigr)\biggr] =
-\Bigl(\frac{1}{3} + \frac{\eta A^{\eta}}{\pi L^{2}}\Bigr)\ln r_{0}\\
+ \ln \frac{1-\eta r_{0}^{2}}{1+\eta r_{0}^{2}} - \frac{1-\eta r_{0}^{2}}{1+\eta r_{0}^{2}}
\ln \bigl(1+\eta r_{0}^{2}\bigr)^{2} - 2\ln\Gamma\Bigl(\frac{2}{1+\eta r_{0}^{2}}\Bigr) + \text{counterterms.}
\end{multline}
For future reference, it is useful to express this result in terms of the variable $x$, defined by Eq.\ \eqref{defxpara} in Liouville or by Eq.\ \eqref{xydefJTmin} in JT. For Liouville or for negative JT gravity, we get the same expression as a function of $x$,\footnote{For future convenience, we add constants and discard terms proportional to $\sqrt{1+x^{2}}$, which are counterterms.}
\begin{multline}\label{Bdetnegx}
\ln{\det}_{\text D}\biggl[\Bigl(\frac{\ell_{\text{IR}}}{2\pi}\Bigr)^{2}\bigl(\Delta_{*}^{-}+2/L^{2}\bigr)\biggr] =\Bigl(2\sqrt{1+x^{2}}-\frac{7}{3}\Bigr)\ln \frac{\ell}{2\pi}+\frac{7}{3}\ln\Bigl[\frac{1}{2}\bigl(1+\sqrt{1+x^{2}}\bigr)\Bigr]  \\+\frac{1}{2}\ln (1+x^{2})-2\ln\Gamma \bigl(1+\sqrt{1+x^{2}}\bigr) + \text{counterterms.}
\end{multline}
For positive curvature JT, with $\varepsilon=+1$ for the large disk and $\varepsilon=-1$ for the small disk,
\begin{multline}\label{Bdetposx}
\ln\biggl|{\det}_{\text D}\biggl[\Bigl(\frac{\ell_{\text{IR}}}{2\pi}\Bigr)^{2}\bigl(\Delta_{*}^{+}-2/L^{2}\bigr)\biggr]\biggr| =-\Bigl(\frac{7}{3}+2\varepsilon\sqrt{1-x^{2}}\Bigr)\ln\frac{\ell}{2\pi} +\frac{7}{3}\ln\Bigl[\frac{1}{2}\bigl(1-\varepsilon\sqrt{1-x^{2}}\bigr)\Bigr]  \\+\frac{1}{2}\ln (1-x^{2})-2\ln\Gamma \bigl(1-\varepsilon\sqrt{1-x^{2}}\bigr) + \text{counterterms.}
\end{multline}
Note that we have been careful in this case to indicate that we consider the absolute value of the determinant. Indeed, for the large disk solution, $r_{0}>1$, the operator $\Delta_{*}^{+}-2/L^{2}$ has one negative eigenvalue \cite{detpaper}, which is reflected in the negative sign of the argument of the logarithm of the first term on the second line of Eq.\ \eqref{Bdetposneg2}.

\subsection{Gauge fixing}

Let us now treat the gauge-fixing $\delta$-functions that appear in the one-loop path integrals for both Liouville, Eq.\ \eqref{Zloop1}, and JT, Eq.\ \eqref{ZJTloop1}. The analysis is simplified a little by rewriting the product of $\delta$-functions as
\be\label{delgfprod}
\delta\Bigl(\int_{0}^{\frac{\ell}{3}}\xi\, \d s_{*}\Bigr)\delta\Bigl(\int_{\frac{\ell}{3}}^{\frac{2\ell}{3}}\xi\, \d s_{*}\Bigr)\delta\Bigl(\int_{\frac{2\ell}{3}}^{\ell}\xi\,\d s_{*}\Bigr) = \delta\Bigl(\int_{0}^{\ell}\xi\, \d s_{*}\Bigr)\delta\Bigl(\int_{0}^{\frac{\ell}{3}}\xi\, \d s_{*}\Bigr)\delta\Bigl(\int_{\frac{\ell}{3}}^{\frac{2\ell}{3}}\xi\,\d s_{*}\Bigr)\, .\ee
Using the usual mode expansion $\xi=\sum_{n\in\mathbb Z}\xi_{n}e^{in\theta}$, the first $\delta$-function yields
\be\label{gfdelta1} \delta\Bigl(\int_{0}^{\ell}\xi\, \d s_{*}\Bigr) = \delta\bigl(2\pi e^{\sigma_{*}}\xi_{0}\bigr) = \frac{e^{-\sigma_{*}}}{2\pi}\delta (\xi_{0})\ee
and thus sets $\xi_{0}=0$ while producing a factor $e^{-\sigma_{*}}$ (we discard as usual the overall numerical constant, which is a counterterm). The other two $\delta$-functions produce in the same way two factors of $e^{-\sigma_{*}}$ and allow to express $\xi_{\pm 1}$ in terms of the $\xi_{n}$ for $|n|\geq 2$. But the relevant quadratic actions governing the fluctuations around the saddle, given in Eq.\ \eqref{tSeffLb} for Liouville and in Eq.\ \eqref{tSeffflat} and \eqref{tSeffcurved} for JT, do not depend on $\xi_{1}$ or $\xi_{-1}$. These variables can thus be trivially integrated out.

To get all the factors correctly, we must also take into account the form of the formal path integral measure over $\xi$. This measure is derived from the metric indicated in Eq.\ \eqref{oneloopmet}. It reads, modulo global unphysical numerical factors,\footnote{The ambiguities coming from arbitrary numerical factors multiplying the metrics from which the path integral measures are derived turn out, as expected, to be counterterms. This is  discussed below.}
\be\label{pathmeaxi} D_{*}\xi = \prod_{n\in\mathbb Z}\bigl(e^{\frac{\sigma_{*}}{2}}\d\xi_{n}\bigr)\, .\ee
The integration over the modes $\xi_{0}$, $\xi_{1}$ and $\xi_{-1}$ thus produce an overall factor
\be\label{zeromodefactor} e^{-3\sigma_{*}}\times e^{\frac{3\sigma_{*}}{2}} = 
e^{-\frac{3\sigma_{*}}{2}}\, ,\ee
with three factors of $e^{-\sigma_{*}}$ coming from the $\delta$-functions and three factors of $e^{\frac{\sigma_{*}}{2}}$ coming from the measure.

\subsection{\label{LoneSec}Liouville gravity}

\subsubsection{\label{bdpathLSec}Boundary path integral}

The Gaussian integral over the boundary degrees of freedom in Liouville theory can now be performed straightforwardly. From the form of the measure \eqref{pathmeaxi} and of the boundary action \eqref{tSeffLb}, taking into account the factor \eqref{zeromodefactor}, we get the one-loop boundary contribution to the partition function,
\be\label{bdLpathint1} Z_{\text{L, 1-loop, bd.}} = e^{-\frac{3\sigma_{*}}{2}}\prod_{n\geq 2}
\biggl[e^{\sigma_{*}}\frac{1+r_{0}^{2}+n (1-r_{0}^{2})}{(1-r_{0}^{2})(n^{2}-1)}\biggr]\, .\ee
The infinite product can be regularized and computed using standard $\zeta$-function techniques. This is explained in detail in the Appendix. Using the definition \eqref{gendetbdApp} and the result \eqref{BddetApp}, we get
\be\label{bdLpathint2} Z_{\text{L, 1-loop, bd.}} = e^{-\frac{3\sigma_{*}}{2}}D^{-1}\Bigl(1,3;2+\frac{1+r_{0}^{2}}{1-r_{0}^{2}};e^{-\sigma_{*}}\Bigr) = \sqrt{\frac{2}{\pi}}\,\frac{(e^{\sigma_{*}})^{-3+\frac{1+r_{0}^{2}}{1-r_{0}^{2}}}}{\Gamma\Bigl(2+\frac{1+r_{0}^{2}}{1-r_{0}^{2}}\Bigr)}\, \cdotp\ee
In terms of the variable $x=\frac{\ell\sqrt{\mu}}{2\pi}$, this reads
\be\label{bdLpathint3}\ln  Z_{\text{L, 1-loop, bd.}} = \Bigl(-3 + \sqrt{1+x^{2}}\Bigr)\ln\frac{\ell}{2\pi}-\ln\Gamma\bigl(2+\sqrt{1+x^{2}}\bigr) + \text{counterterms.}\ee

\noindent\emph{Remark:} one may wonder what happens if the bulk and boundary metrics in field space, Eq.\ \eqref{oneloopmet}, from which the formal path integral measures are defined, are multiplied by  arbitrary numerical constant $A_{\text B}$ and $A_{\text b}$. Since these numerical constants are not fixed by physics, one expects on general gounds that $A_{\text B}$ and $A_{\text b}$ can be cancelled by counterterms. For the bulk piece, the constant can clearly be absorbed in the IR cut-off $\ell_{\text{IR}}$ introduced in \eqref{Bdetposneg}. This cut-off appears in a contribution to the logarithm of the determinant which is proportional to the area and is thus a counterterm (this also explains why we have not kept it explicitly in Eq.\ \eqref{Bdetposneg2}). For the boundary piece, the constant $A_{\text b}$ can always be absorbed in the factor $z$ entering in the definition of the determinant given in the Appendix, Eq.\ \eqref{Appprop}. It thus yields a contribution proportional to $\zeta(0)$, given by Eq.\ \eqref{ZzeroApp}. In the present case, this is proportional to the area plus a constant and is thus, as required, a counterterm. Let us note as well that overall numerical factors in front of the actions, like the $\frac{1}{24\pi}$ factor sitting in front of \eqref{Zloop1} and \eqref{ZJTloop1}, can be absorbed by rescaling the integration variables. This is equivalent to multiplying the metric in field space by a constant and hence can be absorbed in  counterterms. This is the reason why we usually discard such terms.

\subsubsection{Putting everything together}

From \eqref{Zloop1}, the full one-loop partition function is the sum
\be\label{pulallL1} \ln Z_{\text{L, 1-loop}} = -\frac{13}{12\pi} S_{\text L*} + 2\sigma_{*} + \ln Z_{\text{L, 1-loop, Bulk}} + \ln Z_{\text{L, 1-loop, bd.}}\, .\ee
Using \eqref{SLsadJTmin}, the first piece yields a contribution
\be\label{piece1Lall}-\frac{13}{12\pi} S_{\text L*} + 2\sigma_{*} = -\frac{7}{3}\ln\frac{\ell}{2\pi} + \frac{13}{3}\ln \Bigl[\frac{1}{2}\bigl(1+\sqrt{1+x^{2}}\bigr)\Bigr] + \text{counterterms}.\ee
The bulk contribution is, from \eqref{BulkLpathint}, $-\frac{1}{2}$ times the logarithm of the determinant given in Eq.\ \eqref{Bdetnegx}, for $L^{2}=1/\mu$, and the boundary contribution is given in Eq.\ \eqref{bdLpathint3}. Adding up, recalling the identity $\Gamma(1+u) = u\Gamma(u)$, yields a one-loop term $\ln Z_{\text{L, 1-loop}}=f_{1}$ that precisely match the result given in Eq.\ \eqref{ZLloopcoeff1}, which was derived from the exact FZZT formula \eqref{exactWLioudiskc}.\footnote{Note that $\sqrt{1+x^{2}}-1 = \frac{\mu A^{-}}{2\pi}$ is an area counterterm and therefore the term proportional to it in Eq.\ \eqref{ZLloopcoeff1} can be discarded.} As far as we know, this provides the first non-trivial path integral test of FZZT. Maybe more importantly, it gives us confidence in the robustness of our methods, which we are now going to apply to JT gravity.

\subsection{Flat space JT gravity}

\subsubsection{Boundary path integral}

Similarly to the case of Liouville gravity, but now using the quadratic action \eqref{tSeffflat}, we get the boundary partition function for flat JT gravity,
\be\label{bdJTpathintzero}\begin{split} Z^{0}_{\text{1-loop, bd.}} & = e^{-\frac{3\sigma_{*}}{2}}\prod_{n\geq 2}\biggl[e^{\sigma_{*}}\frac{n+1}{(n-1)(n+1-\frac{\mu\ell^{2}}{8\pi^{2}})}\biggr] \\
& = e^{-\frac{3\sigma_{*}}{2}} D^{-1}\bigl(1,3-\frac{\mu\ell^{2}}{8\pi^{2}};3;e^{-\sigma_{*}}\bigr) = \frac{1}{2\sqrt{2\pi}}(e^{\sigma_{*}})^{-2+\frac{\mu\ell^{2}}{8\pi^{2}}}\, \Gamma\Bigl(3-\frac{\mu\ell^{2}}{8\pi^{2}}\Bigr)\, .\end{split}\ee

\noindent\emph{Remark:} let us check, as we have done at the end of Sec.\ \eqref{bdpathLSec} for the case of Liouville, that the arbitrary numerical constant $A_{\text B}$ and $A_{\text b}$ can be absorbed in counterterms. It works for the bulk constant, for exactly the same reason as in the case of Liouville. For the boundary constant, the $\zeta$-function \eqref{ZzeroApp} being such that $\zeta(0) = -\frac{1}{2}+\frac{\mu\ell^{2}}{8\pi^{2}} = -\frac{1}{2} + \frac{\mu A^{0}}{2\pi}$ is a counterterm, it works as well.

\subsubsection{Putting everything together}

From \eqref{ZJTloop1}, the logarithm of the full one-loop partition function is the sum of $\ln Z^{0}_{\text{1-loop, bd.}}$ and of the additional contribution
\begin{equation}
\begin{split}
 -\frac{13}{12\pi}S_{\text L*}^{0}+ 2\sigma_{*} -\ln{\det}_{D}\Delta_{*} 
&= \Bigl(-\frac{13}{3}+2+\frac{1}{3}\Bigr)\ln\frac{\ell}{2\pi} + \text{counterterms}\\
&= -2\ln\frac{\ell}{2\pi}+ \text{counterterms}
\end{split}
\label{piecezeroJT}
\end{equation}
that we have evaluated using \eqref{SLsadzero} and \eqref{Bdetflat2}. This yields
\be\label{Zzerofinal} \ln Z^{0}_{\text{1-loop}} = f_{1}^{0}= -\Bigl(4-\frac{\mu\ell^{2}}{8\pi^{2}}\Bigr)\ln\frac{\ell}{2\pi} + \ln \Gamma\Bigl(3-\frac{\mu\ell^{2}}{8\pi^{2}}\Bigr) + \text{counterterms.}\ee
Note that the result makes sense as long as $\mu\ell^{2}<24\pi^{2}$, consistently with the stability analysis in Sec.\ \ref{StabSec}, Eq.\ \eqref{flatstability}.

\subsection{Curved space JT gravity}

\subsubsection{Boundary path integral}

The quadratic action \eqref{tSeffcurved} yields
\be\label{bdJTpathintcurved}\begin{split} Z^{\eta}_{\text{1-loop, bd.}} & = e^{-\frac{3\sigma_{*}}{2}}\prod_{n\geq 2}\biggl[e^{\sigma_{*}}\frac{1-\eta r_{0}^{2}}{1+\eta r_{0}^{2}}\frac{\bigl[1-\eta r_{0}^{2} + n (1+\eta r_{0}^{2})\bigr]^{2}}{\bigl[(1-r_{0}^{4})n - 2(\mu L^{2}+2\eta)r_{0}^{2}+1+r_{0}^{4}\bigr](n^{2}-1)}\biggr] \\
& = e^{-\frac{3\sigma_{*}}{2}} D^{-1}\Bigl(\frac{3-2(\mu L^{2}+2\eta)r_{0}^{2}-r_{0}^{4}}{1-r_{0}^{4}},3,1;\frac{3+\eta r_{0}^{2}}{1+\eta r_{0}^{2}}, \frac{3+\eta r_{0}^{2}}{1+\eta r_{0}^{2}};e^{-\sigma_{*}}\Bigr)\\
& =\sqrt{\frac{2}{\pi}}\, (e^{2\sigma_{*}})^{-2+\frac{1+\mu L^{2}r_{0}^{2}}{1-r_{0}^{4}}}\,\Gamma\Bigl(\frac{3-2(\mu L^{2}+2\eta)r_{0}^{2}-r_{0}^{4}}{1-r_{0}^{4}}\Bigr)\biggl[\Gamma\Bigl(\frac{3+\eta r_{0}^{2}}{1+\eta r_{0}^{2}}\Bigr)\biggr]^{-2}\,\cdotp
\end{split}\ee
In terms of the usual variables $x=\frac{\ell}{2\pi L}$ and $y=\mu L^{2}$, this reads
\begin{multline}\label{bdJTpathint3}\ln  Z^{-}_{\text{1-loop, bd.}} = \Bigl(-3 + \sqrt{1+x^{2}} + \frac{(y-1)x^{2}}{2\sqrt{1+x^{2}}}\Bigr)\ln\frac{\ell}{2\pi}\\
+\ln\Gamma\Bigl(2+\sqrt{1+x^{2}} - \frac{(y-1)x^{2}}{2\sqrt{1+x^{2}}}\Bigr)
-2\ln\Gamma\bigl(2+\sqrt{1+x^{2}}\bigr) + \text{counterterms}
\end{multline}
in negative curvature, and
\begin{multline}\label{bdJTpathint4}\ln  Z^{+}_{\text{1-loop, bd.}} = \Bigl(-3 -\varepsilon \sqrt{1-x^{2}} -\varepsilon \frac{(y+1)x^{2}}{2\sqrt{1-x^{2}}}\Bigr)\ln\frac{\ell}{2\pi}\\
+\ln\Gamma\Bigl(2-\varepsilon\sqrt{1-x^{2}} +\varepsilon \frac{(y+1)x^{2}}{2\sqrt{1-x^{2}}}\Bigr)
-2\ln\Gamma\bigl(2-\varepsilon\sqrt{1-x^{2}}\bigr) + \text{counterterms}
\end{multline}
in positive curvature, where $\varepsilon = +1$ and $\varepsilon = -1$ correspond to the large disk and the small disk solutions respectively.

\noindent\emph{Remark:} let us check again that the numerical constants $A_{\text B}$ and $A_{\text b}$ that may multiply the flat space metrics can be absorbed in counterterms. It works for $A_{\text B}$ exactly as in the case of Liouville or flat space JT. As for $A_{\text b}$, it will work if and only if $\zeta(0)$, given by \eqref{ZzeroApp}, is a counterterm. In the present case, we find
\be\label{curvedJTzetazero} \zeta (0) = -\frac{5}{2} + \frac{2}{1-r_{0}^{4}}+ 2\mu L^{2}\frac{r_{0}^{2}}{1-r_{0}^{4}}\, \cdotp\ee
This is indeed a counterterm of the form given in Eq.\ \eqref{SJTpmctoneloop}.
 
\subsubsection{Putting everything together}

As in the flat case, using \eqref{ZJTloop1}, the logarithm of the full one-loop partition function is the sum of $\ln Z^{\eta}_{\text{1-loop, bd.}}$, of $-\frac{13}{12\pi}S_{\text L*}+ 2\sigma_{*}$, which is given by \eqref{piece1Lall} in negative curvature and by
\be\label{pieceSLJT} -\frac{13}{12\pi}S_{\text L*}^{+}+ 2\sigma_{*} = -\frac{7}{3}\ln\frac{\ell}{2\pi} +\frac{13}{3}\ln\Bigl[\frac{1}{2}\bigl(1-\varepsilon\sqrt{1-x^{2}}\bigr)\Bigr]+\text{counterterms}\ee
in positive curvature, and of $-\ln |\det_{\text D}(\Delta_{*}^{\eta}-2\eta/L^{2})|$ given by Eq.\ \eqref{Bdetnegx} and \eqref{Bdetposx}. We thus obtain the one-loop partition functions, $\ln Z_{\text{1-loop}}^{\eta} = f_{1}^{\eta}$,
\begin{align}
\label{Zminusfinal} 
\begin{split}
f_{1}^{-}  & = -\Bigl(3+\sqrt{1+ x^2}- \frac{(y-1) x^2}{2\sqrt{1+ x^2}}\Bigr)\ln\frac{\ell}{2\pi}-\frac{1}{2}\ln (1+x^2) \\& \hskip 3cm
+\ln \Gamma\Bigl(2+\sqrt{1+ x^2}- \frac{(y-1)  x^2}{2\sqrt{1+ x^2}}\Bigr) + \text{counterterms,}
\end{split}
\\ \label{Zplusfinal}
\begin{split}
f_{1}^{+}  & = -\Bigl(3- \varepsilon\sqrt{1- x^2}+ \varepsilon\frac{(y+1)  x^2}{2\sqrt{1-x^2}}\Bigr)\ln\frac{\ell}{2\pi}-\frac{1}{2}\ln (1-x^2) \\ & \hskip 3cm
 +\ln \Gamma\Bigl(2- \varepsilon\sqrt{1- x^2}+\varepsilon\frac{(y+1)  x^2}{2\sqrt{1-x^2}}\Bigr)+ \text{counterterms.}
\end{split}
\end{align}
As usual, $\varepsilon = +1$ for the large disk and $\varepsilon = -1$ for the small disk.

\section{\label{finalSec} Discussion}

\subsection{\label{flatdisSec}Flat space limit}

A simple consistency check amounts to taking the flat space limit 
\be\label{flatlimit} L\rightarrow\infty\, ,\quad \text{$\ell$ and $\mu$ fixed}\ee
of our JT gravity results in negative curvature or in positive curvature when the small disk dominates. In both cases, the tree level contributions, Eqs.\ \eqref{fzeromin} and \eqref{fzeroplus}, and the one-loop contributions, Eqs.\ \eqref{Zminusfinal} and \eqref{Zplusfinal}, have the correct limits yielding the flat space results \eqref{fzerozero} and \eqref{Zzerofinal}.

\subsection{\label{critdisSec}Critical exponents, improvement and all-loop predictions}

From \eqref{fzerozero} and \eqref{Zzerofinal}, the flat space partition function at one loop reads
\be\label{lnZzero} \ln Z^{0} =-\frac{|c|\mu\ell^{2}}{96\pi^{2}} -\Bigl(\frac{|c|}{6} + 4-\frac{\mu\ell^{2}}{8\pi^{2}}\Bigr)\ln\frac{\ell}{2\pi} + \ln \Gamma\Bigl(3-\frac{\mu\ell^{2}}{8\pi^{2}}\Bigr) + O\bigl(|c|^{-1}\bigr) + \text{c.t.,}\ee
where c.t.\ stands for possible arbitrary counterterms.

Let us first make an elementary remark to clarify a potential puzzle. The $\ln\frac{\ell}{2\pi}$ term seems to be inconsistent with dimensional analysis, since $\ell$ is dimensionful. This is a familiar feature of one-loop calculations. One can replace $\ln\frac{\ell}{2\pi}$ by $\ln\frac{\ell}{\ell_{0}}$ for an arbitrary length scale $\ell_{0}$ by adding counterterms, a constant and an area counterterm in the present case. In the $\mu=0$ theory, if one uses a cut-off regularization, the only length scale that can appear before renormalization is the UV cut-off scale, and thus $\ell_{0}$ may be naturally identified with the UV cut-off in this case. This cut-off can be absorbed in counterterms and replaced by an arbitrary renormalization length scale. 

Let us now use Eq.\ \eqref{lnZzero} for $\mu=0$. It implies that $Z^{0}\propto \ell^{-\frac{|c|}{6}-4+O(|c|^{-1})}$. This result can be compared with the exact prediction \eqref{Zzerexact}. Using \eqref{critexponents} and \eqref{varthetaexp}, we find, when $c\rightarrow -\infty$,
\be\label{nuvthetaexp} 2\nu\vartheta + 1 = \frac{|c|}{6} + 4 + O\bigl(|c|^{-1}\bigr)\, .\ee
Recalling that $\beta=\ell$ in our case, Eq.\ \eqref{betaeqell}, we find a perfect match. 

When $\mu\not = 0$, Eq.\ \eqref{lnZzero} can be rewritten as
\be\label{lnZzerobis} \ln Z^{0} =-\frac{|c|\mu}{24}\Bigl(\frac{\ell}{2\pi}\Bigr)^{2\nu} -\Bigl(\frac{|c|}{6} + 4\Bigr)\ln\frac{\ell}{2\pi} + \ln \Gamma\Bigl(3-\frac{\mu\ell^{2}}{8\pi^{2}}\Bigr) + O\bigl(|c|^{-1}\bigr) + \text{c.t.}\ee
for $\smash{\nu = 1-\frac{6}{|c|} + O(|c|^{-2})}$. We thus observe that the $\smash{\frac{\mu\ell^{2}}{8\pi^{2}}\ln\frac{\ell}{2\pi}}$ correction is absorbed in an anomalous dimension for $\ell$, with an exponent matching the critical exponent $\nu$ given in Eq.\ \eqref{nuexp}, at least up to the one-loop order. Combining with the $\ell$-dependence at $\mu=0$ discussed in the previous paragraph, we see that our one-loop calculation precisely reproduces the prediction of the exact formulas for both the exponents $\nu$ and $\vartheta$. Let us also note, for later reference, that the one loop anomalous dimension term in the partition function entirely comes from the boundary determinant contribution, Eq.\ \eqref{bdJTpathintzero}.

Actually, using the fact that the function $g^{0}$ in the exact scaling formula \eqref{Zzerexact} for the partition function can only depend on the dimensionless parameter $\mu(\frac{\ell}{2\pi})^{2\nu}$, we can trust the ``improved'' tree-level term $\smash{-\frac{|c|\mu}{24}(\frac{\ell}{2\pi})^{2\nu}}$ beyond one loop. We claim, following a standard lore in quantum field theory, that it will generate correctly the leading logarithmic contributions in $(\ln\ell)^{k}$ at any $k^{\text{th}}$ loop order. Similarly, the one-loop term involving the $\Gamma$ function can also be improved, following the same logic, to make it consistent with the exact scaling form \eqref{Zzerexact}, by replacing $\frac{\mu\ell^{2}}{4\pi^2}$ by $\mu(\frac{\ell}{2\pi})^{2\nu}$ in its argument. This yields non-trivial predictions for the subleading logarithms, at any loop order greater than or equal to two. For instance, we get terms
\be\label{twoloopspredict} \Biggl[-\frac{3\mu\ell^{2}}{4\pi^{2}}\bigl(\ln\frac{\ell}{2\pi}\bigr)^{2} + \frac{3\mu\ell^{2}}{2\pi^{2}}\biggl[\Psi\Bigl(3-\frac{\mu\ell^{2}}{8\pi^{2}}\Bigr) -\frac{3}{4}\biggr]\ln\frac{\ell}{2\pi}\Biggr]\frac{1}{|c|}\ee
at two loops for $\ln Z^{0}$, the leading logarithm coming from the improved tree-level contribution and the subleading logarithm coming from both the improved tree-level and one-loop contributions, with $\Psi = \Gamma'/\Gamma$ the Euler $\Psi$ function. Overall, our improved one-loop result for the partition function in flat space reads
\be\label{Zzeroimproved} Z^{0} = \ell^{-1-2\nu\vartheta}\Gamma\Bigl(3-\frac{1}{2}\mu\bigl(\frac{\ell}{2\pi}\bigr)^{2\nu}\Bigr)e^{-\frac{|c|\mu}{24}(\frac{\ell}{2\pi})^{2\nu}}\quad\text{(one-loop improved).}\ee
Of course, we do not claim that the above formula is ``exact'' in any sense. In particular, it displays the instability for $\mu(\frac{\ell}{2\pi})^{2\nu}\geq 6$, which is inherent to the $c\rightarrow -\infty$ expansion, but which is not present in the exact finite $c$ theory, where a crossover is expected to occur from a smooth inflated geometry to a branched polymer phase when $\mu$ is very large and positive.

The same logic can be applied in non-zero curvature, with the additional subtlety that we now have a natural IR scale $L$ given by the curvature of space. We focus on the case of negative curvature, for conciseness, since the case of positive curvature can be discussed exactly along the same lines. Improving the tree-level term \eqref{fzeromin} by replacing $x=\ell/(2\pi L)$ by the renormalized parameter
\be\label{renormx} \hat x = \frac{1}{L}\Bigl(\frac{\ell}{2\pi}\Bigr)^{\nu}\ee
to take into account the length anomalous dimension, and expanding at large $|c|$, produces the one-loop logarithmic contribution
\be\label{improve1} -\Bigl(1-\sqrt{1+ x^2}- \frac{(y-1) x^2}{2\sqrt{1+ x^2}}\Bigr)\ln\frac{\ell}{2\pi}\,\cdotp\ee
Note that, as usual, replacing $\ln\frac{\ell}{2\pi}$ by $\ln\frac{\ell}{\ell_{0}}$ for an arbitrary scale $\ell_{0}$ in this expression is immaterial, because the modification can be absorbed in counterterms, whose general forms are given in Eq.\ \eqref{SJTpmctoneloop}. As in zero curvature, this anomalous dimension contribution is generated by the boundary determinant, Eq.\ \eqref{bdJTpathint3}. There is an additional logarithmic piece
\be\label{addimp} 2\Bigl(1-\sqrt{1+ x^2}\Bigr)\ln\frac{\ell}{2\pi}\ee
in the full one-loop result, Eq.\ \eqref{Zminusfinal}, whose origin can be traced back to the contribution of the bulk determinant \eqref{Bdetnegx}. Such $\mu$-independent logarithmic terms can exist in non-zero curvature because there is a scale $L$ in the problem which allows to interpret them as contributions, proportional to $\ln\frac{\ell}{L}$, to the function $g^{-}(0,\beta/L^{1/\nu},c)$ defined in \eqref{Zpmerexact}. Overall, we can thus write an improved version of $\ln Z^{-}$ as
\begin{multline}\label{lnZminimpr} \ln Z^{-} = -\Bigl(\frac{|c|}{6}+4\Bigr)\ln\frac{\ell}{2\pi} +\frac{|c|}{6}\ln\frac{1+\sqrt{1+\hat x^{2}}}{2}- \frac{|c|}{12}(1+y)\bigl(\sqrt{1+\hat x^{2}}-1\bigr)\\
 + 2\Bigl(1-\sqrt{1+ \hat x^2}\Bigr)\ln\hat x-\frac{1}{2}\ln (1+\hat x^2) 
+\ln \Gamma\Bigl(2+\sqrt{1+ \hat x^2}- \frac{(y-1)  \hat x^2}{2\sqrt{1+ \hat x^2}}\Bigr) \\\text{(one-loop improved).}\end{multline}
The first term, proportional to $\ln\frac{\ell}{2\pi}$, gives the overall power of $\ell$ in the partition function, with the same critical exponent as in the flat case, see Eq.\ \eqref{Zzerexact}, \eqref{Zpmerexact} and \eqref{nuvthetaexp}, whereas the other terms contribute to $\ln g^{-}$, where $g^{-}$ is the dimensionless function defined in \eqref{Zpmerexact}.

\subsection{\label{semidisSec}Naive semiclassical limit and expected area}

At the classical level, and after integrating out the dilaton field $\Phi$ that imposes the constraint of constant curvature, the action for JT gravity reduces to the cosmological constant term given in Eq.\ \eqref{Scosmodef}. The limit $\La\rightarrow -\infty$ at fixed boundary length $\ell$ may therefore seem to correspond, at least superficially, to a semi-classical limit, with an expansion parameter $1/|\La|$ playing the role of $\hbar$. At leading order, the configurations that dominate must maximise the area for a given $\ell$. By the standard isoperimetric inequalities, for negative and zero curvature, this is achieved precisely by round disks embedded in hyperbolic space or in Euclidean space, which are the saddle point solutions that we have found in Section \ref{clsolJTSec}, Eqs.\ \eqref{JTminsaddlesol1} and \eqref{JTzerosaddlesol}. In positive curvature, the area is not bounded above, even at fixed $\ell$ \cite{Fer2}. For this reason, we restrict the discussion to the cases of zero and negative curvature in this subsection. 

We are going to show that our results strongly suggest that the limit $\La\rightarrow -\infty$ at fixed boundary length $\ell$ is actually \emph{not} a semi-classical limit in JT gravity. This confirms explicitly the general arguments and physical picture presented in \cite{Fer2}.

In the framework of the present paper, we take the $c\rightarrow -\infty$ limit first. The naive semi-classical limit (not to be confused with the $c\rightarrow -\infty$ limit!) is then
\be\label{naivesemiclassical} \mu\rightarrow - \infty\, ,\quad \ell\quad \text{fixed.}\ee
One could study directly the partition functions in this limit, but to better understand what is going on it is more physical and instructive to look at the expectation value of the area
\be\label{areavev} \langle A\rangle^{\eta} = -\frac{24\pi}{|c|}\frac{\partial\ln Z^{\eta}}{\partial\mu} = -24\pi\biggl( \frac{\partial f_{0}^{\eta}}{\partial\mu} + \frac{1}{|c|}\frac{\partial f_{1}^{\eta}}{\partial\mu} + O\bigl(|c|^{-2}\bigr)\biggr)\, .\ee

Let us first focus on the theory in flat space. From \eqref{fzerozero} and \eqref{Zzerofinal}, we get
\be\label{areaflat1} \langle A\rangle^{0} = \frac{\ell^{2}}{4\pi}-\frac{3\ell^{2}}{\pi|c|}\ln\frac{\ell}{2\pi} + \frac{3\ell^{2}}{\pi |c|}\Psi\Bigl(3-\frac{\mu\ell^{2}}{8\pi^{2}}\Bigr) + O\bigl(|c|^{-2}\bigr)\, .\ee
In the limit \eqref{naivesemiclassical}, this yields
\be\label{areaflat2} \langle A\rangle^{0} =\frac{\ell^{2}}{4\pi}-\frac{3\ell^{2}}{\pi|c|}\ln\frac{\ell}{2\pi}+ \frac{3\ell^{2}}{\pi|c|}\ln \frac{|\mu|\ell^{2}}{8\pi^{2}} +
O\bigl(|c|^{-2}\bigr) + O\bigl(|\mu|^{-1}\bigr)\, .\ee
The first logarithm on the right-hand side of this equation is interpreted, as in the previous subsection, as coming from the anomalous dimension of $\ell$. It is thus absorbed by replacing $(\frac{\ell}{2\pi})^{2}$ by $(\frac{\ell}{2\pi})^{2\nu}$ in the leading term $\ell^{2}/(4\pi)=\pi (\frac{\ell}{2\pi})^{2}$. The second logarithm, on the other hand, points to a divergence of the area when $|\mu|\rightarrow\infty$. Of course, strictly speaking, our one-loop calculation can be trusted only if the one-loop correction is much smaller than the leading correction. But the leading logarithms can be resummed, along the lines of the previous subsection, to yield an improved area expectation value that we may write in the following form,
\be\label{areaflatimproved} \langle A\rangle^{0} \underset{\mu\rightarrow -\infty}{\sim}\pi \Bigl(\frac{\ell}{2\pi}\Bigr)^{2\nu}\biggl[\frac{|\mu|}{2}\Bigl(\frac{\ell}{2\pi}\Bigr)^{2\nu}\biggr]^{2\xi}\, ,\ee
for a new critical exponent $\xi$ that we predict to be such that
\be\label{xiexpexp} \xi = \frac{6}{|c|} + O\bigl(|c|^{-2}\bigr)\, .\ee
One may notice that $\xi = 1-\nu$ at this order, but we have no rationale to believe that such a simple relation between $\nu$ and $\xi$ has to be valid to all orders in the $1/|c|$ expansion. Actually, one may conjecture the relation \eqref{xinurelconj} between $\xi$ and $\nu$, see \cite{Fer2} and the discussion below.

The divergence of the expected area in the limit \eqref{naivesemiclassical}, predicted by Eq.\ \eqref{areaflatimproved}, immediately implies that this limit is not dominated by a given smooth background. Thus, we do not have a semiclassical expansion in the limit \eqref{naivesemiclassical} in JT gravity \cite{Fer2}. A crucial physical feature at the origin of this effect is the fact that the boundary is a fractal in the finite $|c|$ theory. In particular, its geometrical length is ill-defined (infinite) and thus the isoperimetric inequality does not prevent the divergence of the enclosed area, even at fixed finite quantum length $\ell$.

We can repeat the above analysis in the negative curvature model. Using \eqref{fzeromin} and \eqref{Zminusfinal}, we find the large $|c|$ expansion
\begin{multline}\label{areacurv1} \langle A\rangle^{-} = 2\pi L^{2}\bigl(\sqrt{1+{x^{2}}}-1\bigr) \\ - \frac{3\ell^{2}}{\pi |c|}\frac{1}{\sqrt{1+x^{2}}}\biggl[\ln\frac{\ell}{2\pi} - \Psi\Bigl(2+\sqrt{1+x^{2}}+\frac{(1+|\mu|L^{2})x^{2}}{\sqrt{1+x^{2}}}\Bigr)\biggr] + O\bigl(|c|^{-2}\bigr)\, ,
\end{multline}
generalising \eqref{areaflat1}. Taking $\mu$ to be large and negative, without making any asumption on $x$ or on its renormalized version $\hat x$ defined in Eq.\ \eqref{renormx}, we can find an improved form of the expectation value,
\be\label{areacurvedimp} \langle A\rangle^{-} = 2\pi L^{2}\Biggl[ \biggl( 1 + \hat x^{2}\Bigl(\frac{|\mu|(\frac{\ell}{2\pi})^{2\nu}}{\sqrt{1+\hat x^{2}}}\Bigr)^{2\xi}\biggr)^{1/2}-1\Biggr]\quad \text{for $y= \mu L^{2}\ll -1$,} \ee
generalising \eqref{areaflatimproved}. This formula shows that the model can be in two qualitatively different regimes when $\mu L^{2}$ is large and negative. If $\hat x\ll 1$, the right-hand side of Eq.\ \eqref{areacurvedimp} goes to the flat space regime described by \eqref{areaflatimproved}. On the other hand, if 
\be\label{Schcross} \hat x\biggl(|\mu|\Bigl(\frac{\ell}{2\pi}\Bigr)^{2\nu}\biggr)^{\xi}\gg 1\, ,\ee
one gets
\be\label{areacurvedimpSch} \langle A\rangle^{-} \sim 2\pi L\Big(\frac{\ell}{2\pi}\Big)^{\nu} \biggl(\frac{|\mu|(\frac{\ell}{2\pi})^{2\nu}}{\sqrt{1+\hat x^{2}}}\biggr)^{\xi}\, .\ee
As in the flat space case, the expected area thus diverges when $\mu\rightarrow -\infty$, showing that the limit is not semi-classical in the negative curvature theory either. But the asymptotic regime is of a very different nature than in flat space, because the typical geometries are very large compared to the curvature length scale. If one assumes that $\hat x\gg 1$ on top of \eqref{Schcross}, the formula further simplifies to
\be\label{areaSchlim} \langle A\rangle^{-} \sim 2\pi L\Bigl(\frac{\ell}{2\pi}\Bigr)^{\nu} \biggl(|\mu|\Bigl(\frac{\ell}{2\pi}\Bigr)^{\nu}L\biggr)^{\xi}\, .\ee
The regime in which \eqref{areaSchlim} is valid is conjectured to match with the regime where an effective Schwarzian description of the model is possible \cite{Fer2}. In this regime, the curvature length scale $L$ actually plays the role of an ultraviolet cut-off!

The asymptotic formula \eqref{areaSchlim} allows to identify the macroscopic smooth length parameter $\ell_{\text S}$ \cite{Fer2}, that is traditionally used in the Schwarzian description of JT gravity, see Eq.\ \eqref{Schlimit} and below. The idea \cite{Fer2} is that for a large, near-hyperbolic effective smooth geometry, one has
\be\label{smoothmacA} \langle A\rangle^{-} \sim L\ell_{\text S}\, .\ee
Comparing with \eqref{areaSchlim} yields
\be\label{ellSmic} \ell_{\text S} =  2\pi \Bigl(\frac{\ell}{2\pi}\Bigr)^{\nu} \biggl(|\mu|\Bigl(\frac{\ell}{2\pi}\Bigr)^{\nu}L\biggr)^{\xi}\, .\ee
It is interesting to note that the overall power of $\ell$ in the right-hand side of this equation is one, at the order we have computed, since $(1+\xi)\nu = 1+O(|c|^{2})$. It is natural to conjecture \cite{Fer2} that this property will remain valid at finite $c$ and thus that
\be\label{xinurelconj} \xi = \frac{1}{\nu} -1\, .\ee
A simple argument in favour of this identity is that boundary cosmological constant counterterms must be proportional to the quantum length parameter $\ell$ in the microscopic theory \cite{Fer2}. Consistency with the treatment of such counterterms in the effective description then implies that $\ell_{\text S}$ must be proportional to $\ell$. More evidence for this relation is given in \cite{Fer2}.

Let us note that the above discussion follows closely and is perfectly consistent with the general discussion in \cite{Fer2}, which does not rely on the $c\rightarrow -\infty$ limit. To close this paper, we are now going to discuss further the Schwarzian limit of the negative curvature partition function in our framework and find nice confirmations of the interpretations we have just given.

\subsection{The Schwarzian limit}

The Schwarzian limit, or near-hyperbolic limit, or near-AdS limit, is a particular asymptotic regime of the negative curvature theory defined by \eqref{Schlimit}. In the limit, the disk geometry is usually assumed to be close to the geometry of the full hyperbolic space, with a smooth and gently wiggling boundary embedded in hyperbolic space, described by the so-called reparameterization ansatz and governed by the Schwarzian action. As emphasized in \cite{Fer1,Fer2}, and as should be clear from the discussion of the present paper, the Schwarzian description is nothing more than an \emph{effective} description of JT gravity, valid on length scales much greater than $L$. In particular, the parameter $\ell_{\text S}$ appearing in the very definition \eqref{Schlimit} of the limit, which is usually presented as being the length of the disk boundary, is actually an effective parameter itself. We have outlined, in the previous subsection, how the Schwarzian ``macroscopic'' regime can emerge, in the framework of the $c\rightarrow -\infty$ limit, with an effective length parameter given by \eqref{ellSmic}; see also the detailed discussion in \cite{Fer2}.

Our goal in this last subsection is to work out some more details on the limit, emphasizing the crucial difference between the microscopic description of the theory and the effective, long-wavelength description associated with the Schwarzian, and giving more evidence for the emergence of this description.

Let us first note that, taking into account the rescaling of the cosmological constant, Eq.\ \eqref{Zamolodlimit}, the Schwarzian limit corresponds in our case to 
\be\label{Schclim} y=\mu L^{2}\rightarrow -\infty\, ,\quad \ell_{\text S}\rightarrow\infty\, ,\quad \frac{3\ell_{\text S}}{|\mu| L^{3}} = b_{\text S} = \text{constant,}\ee
the fixed parameter $b_{\text S}$ being related to the usual parameter $\beta_{\text S}$ by
\be\label{bbetarel} b_{\text S} = |c|\beta_{\text S}\, .\ee
Let us emphasize that the scaling \eqref{Schclim} used to define the Schwarzian limit is a renormalized version of the naive scaling that would involve $\ell$ instead of $\ell_{\text S}$. Using this renormalized version is natural from the discussion of the previous subsection and from the general discussion in \cite{Fer2}, but the statement that this is the correct way to define the limit is a non-trivial claim. The naive ``bare'' parameter, defined by
\be\label{bbare} \tilde b_{\text S} = \frac{3\ell}{|\mu|L^{3}}\, \cvp\ee
is related to the ``renormalized'' parameter $b_{\text S}$ by
\be\label{bbref} b_{\text S} =\tilde b_{\text S}(|\mu| L)^{\xi}= \tilde b_{\text S}\Bigl(1 + \frac{6}{|c|}\ln\bigl(L^{-1}|y|\bigr)+ O\bigl(|c|^{-2}\bigr)\Bigr)\, .\ee
This is found by using \eqref{ellSmic}, \eqref{nuexp} and \eqref{xiexpexp}. Clearly, keeping  $\tilde b_{\text S}$ or $b_{\text S}$ fixed in the limit are two inequivalent prescriptions. We are going to check explicitly below that it is indeed the renormalized scaling, for which $b_{\text S}$ is kept fixed (and thus $\tilde b_{\text S}\rightarrow 0$), that yields a well-defined limit. For future reference, let us express the Schwarzian scaling in terms of our usual variables $x$ and $y$,
\be\label{Schxybrel} x\rightarrow \infty\, ,\quad y\rightarrow -\infty\, ,\quad b_{\text S} = \tilde b_{\text S}\bigl(L^{-1}|y|\bigr)^{\xi} = \frac{6\pi x}{|y|} \bigl(L^{-1}|y|\bigr)^{\xi} \quad\text{fixed.}\ee

\subsubsection{Leading order} The leading order partition function \eqref{fzeromin} reduces to
\be\label{f0Sch} f_{0}^{-} = -\frac{1}{12}(1-|y|)(x-1) -\frac{1}{6}\ln (2 L) +\frac{|y|}{24 x}+ o(1)\, ,\ee
where $o(1)$ represents terms that go to zero in the Schwarzian limit. The $x$-independent (or, equivalently, $\ell$-independent) terms, and the terms proportional to $x$ (or equivalently, proportional to $\ell$ or to $\ell_{\text S}$), are counterterms, interpreted as renormalizing the zero-temperature entropy and ground state energy of the holographic dual, respectively (we are using this language here, because it is traditional in the Schwarzian limit). We shall no longer indicate these terms explicitly. The physically relevant piece is the term $\frac{|y|}{24x}$. At leading order in the $c\rightarrow -\infty$ expansion, the background is smooth and classical, $\ell_{\text S}$ and $\ell$ are identified and there is no distinction between the naive and the renormalized scaling. This term then yields the familiar
\be\label{Schtreelevel} \frac{\pi |c|}{4 b_{\text S}} = \frac{\pi}{4\beta_{\text S}}\ee
tree-level contribution to the logarithm of the Schwarzian partition function. For future reference, let us also note that, taking into account the form of the renormalized scaling \eqref{Schxybrel}, the $\frac{|y|}{24x}$ contribution also produces a term
\be\label{renscaoneloop} \frac{3\pi}{2 b_{\text S}}\ln\bigl(L^{-1}|y|\bigr)
\ee
at order $|c|^{0}$.

\subsubsection{The Schwarzian quadratic action}

At one loop, it is useful, to help better understand the physics, to look at the Schwarzian limit of each individual contributions to the partition function. The Fadeev-Popov determinant for the $\PslR$ gauge fixing (prefactor in Eq.\ \eqref{Zloop1}) and the contribution from the zero modes, Eq.\ \eqref{zeromodefactor}, yield together a $\frac{1}{2}\ln\frac{\ell}{2\pi}$ term. The contribution of the on-shell Liouville action, given in Eq.\ \eqref{SLsadJTmin}, yields a $\ell$-independent term in the Schwarzian limit, as already mentioned. Similarly, the Schwarzian limit of the bulk determinant, Eq.\ \eqref{Bdetnegx}, yields $\ell$-independent terms or terms proportional to $\ell$. On top of the $\frac{1}{2}\ln\frac{\ell}{2\pi}$ mode, the only non-trivial contribution in the Schwarzian limit is thus coming from the integral over the boundary modes.

It is very instructive to look back at the action for the boundary quadratic fluctuations, Eq.\ \eqref{tSeffcurved}, and first implement the Schwarzian limit directy in this action. Discarding the three modes $\xi_{0}$ and $\xi_{\pm 1}$ that have already been integrated out, and taking the limit \eqref{Schxybrel}, we get the following action,
\be\label{SqSchmic} S^{(2)}_{\text{S},\,\text{mic}} = 4\pi\sum_{n\geq 2}
\frac{n+\frac{3\pi x^{2}}{b_{\text S}} + \frac{3}{2}x - \frac{3\pi}{2 b_{\text S}}}{(n+x)^{2}} (n^{2}-1)|\xi_{n}|^{2}\, .\ee
Note that since we are discussing here the one-loop contribution, and since we restrict our discussion to the one-loop order, we can use the naive form of the scaling to derive this action, with $b_{\text S}=6\pi x/|y|$. Corrections will only contribute to order $|c|^{-1}$ in the logarithm of the partition function.

The subscript ``S'' in $S^{(2)}_{\text{S},\,\text{mic}}$ indicates that we are in the Schwarzian limit, and ``mic'' that we have made no assumption on the ratio between the mode number $n$ and $x$ in deriving the action. In particular, this action has the correct microscopic properties, with a leading UV limit in $n|\xi_{n}|^{2}$, which predicts that $|\xi_{n}|\sim 1/\sqrt{n}$ at large $n$. This behaviour is a landmark feature of the microscopic definition of JT \cite{Fer1}. It is at the origin of the fractal nature of the boundary. 

Let us emphasize that the leading UV term $n|\xi_{n}|^{2}$ comes entirely from the Liouville piece in the action. The area term does not play any role, even though it is this area term, and only this area term, that is taken into account in the usual approach to JT, see the discussion below.

\subsubsection{The effective theory} If we take the Schwarzian $x\rightarrow\infty$ limit in each term of the sum in Eq.\ \eqref{SqSchmic}, we obtain a much simpler effective action
\be\label{SqSchmac} S^{(2)}_{\text{S},\,\text{eff}} = \frac{12\pi^{2}}{b_{\text S}}\sum_{n\geq 2} (n^{2}-1)|\xi_{n}|^{2}\, .\ee
Several important comments are here in order.

\noindent i) The procedure of taking the $x\rightarrow\infty$ limit term-by-term is of course not justified mathematically. For any $x$, even very large, there will always be UV modes for which the action \eqref{SqSchmac} is not appropriate. Explicitly, this action predicts a UV behaviour in $n^{2}|\xi_{n}|^{2}$, whereas the correct UV behaviour is in $n|\xi_{n}|^{2}$, as mentioned above. The simplified action is correct for modes that are such that $n\ll x$, that is to say, for modes having a wavelength $\ell/n$ much larger than the curvature scale $L$. In other words, it is a long-distance effective description for which $L$ plays the role of a UV cut-off scale. 

\noindent ii) The action $S^{(2)}_{\text{S},\,\text{eff}}$ comes entirely from the area term (it is proportional to $1/b_{\text S}$ which is itself proportional to the cosmological constant). The Liouville piece does not play any role in it. Even if it comes from the area term only, it is not the exact area term, that has a UV behaviour independent of $n$ (this can be read off for instance by looking at the large $n$ limit of the term proportional to $\mu$ in \eqref{tSeffcurved}). It corresponds to a long wavelength approximation to it. 

\noindent iii) The effective action \eqref{SqSchmac} matches precisely the usual Schwarzian action in the quadratic approximation (which is known to be exact for the Schwarzian theory, see e.g.\ the second reference in \cite{JTappli1}). This can be checked directly and rather straightforwardly by relating the boundary Liouville field to the circle diffeomorphism in the Schwarzian limit \cite{Fer3}, but we shall not try to explain this here since the ideas required to present the argument are too far removed from the ideas developed in the present paper; we refer to \cite{Fer3} for a detailed discussion. Instead, we directly check  that we indeed get the usual Schwarzian partition function by using the action \eqref{SqSchmac}. Indeed, this action produces the contribution
\be\label{Scheffdet} \prod_{n\geq 2}\biggl[\frac{b_{\text S}e^{\sigma_{*}}}{12\pi^{2}}\frac{1}{n^{2}-1}\biggr] = D^{-1}\Bigl(1,3;;\frac{12\pi^{2}}{b_{\text S}}e^{-\sigma_{*}}\Bigr) = 8\sqrt{27}\pi^{2}\bigl(e^{\sigma_{*}}b_{\text S}\bigr)^{-3/2}\ee
to the partition function. If we recall that $\ell\propto b_{\text S}$ and take into account the $\frac{1}{2}\ln\frac{\ell}{2\pi}$ term in the logarithm of the partition function coming from Faddeev-Popov and the integration over $\xi_{0}$ and $\xi_{\pm 1}$, together with the tree-level contribution given in \eqref{f0Sch},  we get, up to the usual counterterm ambiguities (here a multiplicative factor that does not depend on $\ell$, or on $b_{\text S}$),
\be\label{ZScheff} Z_{\text{S},\,\text{eff}} \propto b_{\text S}^{\frac{1}{2}-3}e^{\frac{\pi |c|}{4 b_{\text S}}}=b_{\text S}^{-\frac{5}{2}}e^{\frac{\pi}{4 \beta_{\text S}}}\, .\ee
This is the usual Schwarzian partition function, modulo an additional $1/b_{\text S}$ (or equivalently $1/\ell$) factor. This $1/\ell$ discrepancy comes from the fact that the partition functions considered in the present paper are the usual disk quantum gravity partition functions with no marked point on the boundary, whereas the Schwarzian partition function in its traditional form actually corresponds to having a marked point on the boundary.\footnote{This is explained in \cite{Fer2,Fer3}. Note that the precise meaning of the partition function, as the continuum limit of a precise enumeration problem, is known only if a microscopic definition of the model is given. Without such a precise definition, it is meaningless to ask whether the effective Schwarzian partition function corresponds to a counting with a marked point on the boundary or not. It turns out that the usual formula, with the partition function proportional to $\smash{b_{\text S}^{-3/2}}$, corresponds to a counting with a marked point on the boundary, and therefore has an additional factor $\ell\propto b_{\text S}$ compared with the usual quantum gravity partition function considered in the present work. This was also found to be the case in early attemps to provide a microscopic definition of the models, based on a Brownian boundary \cite{KitaevSuhmic} or on a self-avoiding loop boundary \cite{StanfordYangmic}.}

\subsubsection{The microscopic theory}

Let us now study the Schwarzian limit by including the one-loop corrections. One can equivalently use \eqref{SqSchmic} or study directly the limit on the exact one-loop result, starting either from \eqref{Zminusfinal} or from \eqref{lnZminimpr}. We found it slightly more convenient to start from Eq.\ \eqref{lnZminimpr}, in which some of the one-loop terms have been absorbed in the tree-level contribution by renormalizing $x$ to $\hat x$. Performing the expansion in the limit \eqref{Schxybrel}, one finds several diverging terms that, superficially, seem to indicate that the limit does not exist. As we now explain, remarkably, all these terms can be dealt with.

For instance, the most diverging terms in $\ln Z^{-}$, which, when $b_{\text S}$ is fixed, are of order $x^{2}\ln x$ and $x^{2}$, read
\be\label{Schdiv1} \frac{|c|}{12}|y|\hat x + \frac{3\pi x^{2}}{b_{\text S}}\biggl(\ln \frac{3\pi x^{2}}{b_{\text S}}-1\biggr)\,\cdotp\ee
These terms can be rewritten as
\be\label{Schdiv2} \frac{|c|}{12}|y|\hat x\Bigl(\frac{1}{2}|y|\hat x\Bigr)^{\xi}
-\frac{1}{2}|y|\hat x + O\bigl(|c|^{-1}\bigr) = \Biggl[\frac{|c||y|}{12 L}\biggl(\frac{|y|}{2L}\biggr)^{\xi} - \frac{|y|}{2L} \Biggr] \frac{\ell}{2\pi} + O\bigl(|c|^{-1}\bigr) \ee
and are thus proportional to $\ell$. They renormalize the ground state energy and can be discarded. The next-to-leading diverging terms are found to be in $x\ln x$ and $x$. They read
\be\label{Schdiv3} -\frac{|c|}{12}\bigl(\hat x+|y|\bigr) -2 x\ln x +\frac{3}{2}x\ln\frac{3\pi x^{2}}{b_{\text S}} = -\frac{|c|}{12}|y| -\frac{|c|}{12 L^{1+\xi}}\Bigl(\frac{|y|}{2}\Bigr)^{-3\xi}\frac{\ell}{2\pi}\ee
and can thus also be discarded, renormalizing both the ground state energy and the zero temperature entropy. One also find finite terms that are $\ell$-independent or proportional to $\ell$, that we can of course discard as well. There remains non-trivial constant terms and diverging $\ln (x|y|)$ terms,
\be\label{lnZmSf1} \ln Z^{-}\equiv -\frac{5}{2}\ln x + \frac{|c|}{24}\frac{|y|}{\hat x} - \frac{3\pi}{2 b_{\text S}}\ln\bigl(\frac{1}{2}x|y|\bigr) + o(1) + O\bigl(|c|^{-1}\bigr)\, .\ee
As before, $o(1)$ stands for terms that go to zero in the Schwarzian limit, and $\equiv$ means equality modulo terms that renormalize the zero temperature entropy or the ground state energy. 

The $-\frac{5}{2}\ln x$ contribution yields the correct prefactor $1/b_{\text S}^{5/2}$ in front of the partition function, with the additional $1/b_{\text S}$ with respect to the standard convention, consistently with Eq.\ \eqref{ZScheff} and the discussion below this equation. However, the diverging term in $\ln(x|y|)$ may seem problematic for the existence of the limit. If we were using the naive scaling, with $\tilde b_{\text S}$ defined in \eqref{bbare} fixed, we would indeed find an unwanted diverging piece $\smash{-\frac{3\pi}{2b_{\text S}}\ln|y|}$. But this piece is precisely cancelled when one uses the correct renormalized scaling \eqref{Schxybrel}, since it produces the additional contribution given in Eq.\ \eqref{renscaoneloop}. Overall, we thus obtain
\be\label{Schdiv4} \ln Z^{-}\equiv -\frac{5}{2}\ln b_{\text S} + \frac{2^{\xi}\pi |c|}{4 b_{\text S}} + O\bigl(|c|^{-1}\bigr)\, .\ee
The factor $2^{\xi}$ can of course be absorbed by redefining $b_{\text S}$. This result is a strong direct evidence that the usual effective Schwarzian theory emerges from the microcopic description of JT gravity. In our case, the UV effects coming from the action \eqref{SqSchmic} are taken into account by renormalizing the scaling limit according to \eqref{Schxybrel}. This is as expected from our previous discussion and consistent with the general ideas presented in \cite{Fer2}.

\section*{Acknowledgements}

This work is partially supported by the International Solvay Institutes and the Belgian Fonds  de la Recherche Scientifique F.R.S.-FNRS (convention IISN 4.4503.15). The work of S.C.\ is supported by a postdoctoral research fellowship of the Belgian F.R.S.-FNRS. F.F.\ would like to thank Junggi Yoon and the APCTP in Pohang, South Korea for providing him with an exceptional scientific and working environment during which part of this work was done.

\appendix\clearpage

\section{\label{bddetApp}Boundary determinants}

The boundary determinants encountered in the main text are of the following general form. Let $\hat N$ be the operator whose eigenvalues, which are non-degenerate, are the natural integers $n\in\mathbb N$. Denoting $\{a\}_{p} = (a_{1},\ldots,a_{p})$ and $\{b\}_{q} = (b_{1},\ldots,b_{q})$, we want to compute
\be\label{gendetbdApp} D\bigl(\{a\}_{p};\{b\}_{q};z\bigr) = 
\det\Biggl[z \frac{\prod_{i=1}^{p}\bigl( \hat N + a_{i} \bigr)}{
\prod_{i=1}^{q}\bigl( \hat N + b_{i} \bigr)}\Biggr]\, ,\ee
for $p > q$ and $z$, the $a_{i}$s and the $b_{i}$s are strictly positive real numbers. The determinant is defined in the $\zeta$-function scheme. Explicitly, the $\zeta$-function relevant for computing this determinant is $\tilde\zeta(s) = z^{-s}\zeta(s)$, where
\be\label{zetAppprod} \zeta (s) = \sum_{n\geq 0}\biggl[\frac{\prod_{i=1}^{q}(n+b_{i})}{\prod_{i=1}^{p}(n+a_{i})}\biggr]^{s}\, .\ee
Since $\tilde\zeta'(0) = -(\ln z)\zeta(0) + \zeta'(0)$, we have
\be\label{DdetdefApp} D\bigl(\{a\}_{p};\{b\}_{q};z\bigr) = e^{-\tilde\zeta'(0)} = z^{\zeta(0)}e^{-\zeta'(0)}\, .\ee

Let us start with the following special case.
\begin{proposition} 
\be\label{Dspe1} D(a;;z) = z^{\frac{1}{2}-a} D(a;;1) = z^{\frac{1}{2}-a}\frac{\sqrt{2\pi}}{\Gamma (a)}\,\cdotp\ee
\end{proposition}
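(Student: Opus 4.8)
The plan is to observe that the case at hand is the specialization $p=1$, $q=0$ of the general determinant \eqref{gendetbdApp}, for which the defining $\zeta$-function \eqref{zetAppprod} collapses to the Hurwitz $\zeta$-function. Indeed, with an empty numerator product and the single factor $n+a$ in the denominator, $\zeta(s) = \sum_{n\geq 0}(n+a)^{-s} = \zeta_{\text H}(s,a)$. The master formula \eqref{DdetdefApp}, namely $D(a;;z) = z^{\zeta(0)}e^{-\zeta'(0)}$, then immediately factorizes the $z$-dependence: the factor $e^{-\zeta'(0)}$ is manifestly $z$-independent and therefore equals $D(a;;1)$, so the first equality in \eqref{Dspe1} reduces to the statement $\zeta(0) = \zeta_{\text H}(0,a) = \frac{1}{2}-a$.

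The two classical special values of the Hurwitz $\zeta$-function at $s=0$ then finish the computation. The first, $\zeta_{\text H}(0,a) = \frac{1}{2}-a$, supplies the exponent $z^{\frac{1}{2}-a}$; it follows at once from $\zeta_{\text H}(-n,a) = -B_{n+1}(a)/(n+1)$ at $n=0$, since $B_{1}(a) = a - \tfrac{1}{2}$. The second is Lerch's formula $\zeta_{\text H}'(0,a) = \ln\Gamma(a) - \frac{1}{2}\ln(2\pi)$, the prime denoting $\partial_s$. Substituting both into \eqref{DdetdefApp} gives
\be D(a;;z) = z^{\frac{1}{2}-a}\exp\Bigl(-\ln\Gamma(a) + \tfrac{1}{2}\ln(2\pi)\Bigr) = z^{\frac{1}{2}-a}\frac{\sqrt{2\pi}}{\Gamma(a)}\, ,\ee
which is precisely the claimed result, and $D(a;;1) = \sqrt{2\pi}/\Gamma(a)$ drops out as the special value $z=1$.

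The only non-elementary ingredient, and hence the main obstacle, is Lerch's formula for $\zeta_{\text H}'(0,a)$. If one prefers not to quote it, I would derive it from the identity $\partial_a\zeta_{\text H}(s,a) = -s\,\zeta_{\text H}(s+1,a)$: differentiating in $s$ at $s=0$ and using that $\zeta_{\text H}(1+s,a)$ has a simple pole of residue $1$ with finite part $-\psi(a)$, where $\psi = \Gamma'/\Gamma$, one obtains $\partial_a\zeta_{\text H}'(0,a) = \psi(a) = \partial_a\ln\Gamma(a)$. Hence $\zeta_{\text H}'(0,a) = \ln\Gamma(a) + C$, and the $a$-independent constant is pinned to $C = -\frac{1}{2}\ln(2\pi)$ by comparing the large-$a$ asymptotics of the two sides via Stirling's formula. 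With these two inputs in hand the remaining algebra is purely mechanical.
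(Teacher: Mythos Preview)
Your proof is correct and follows exactly the same route as the paper: identify the $\zeta$-function in \eqref{zetAppprod} as the Hurwitz $\zeta$-function and then invoke the two standard values $\zeta_{\text H}(0,a)=\tfrac12-a$ and $\zeta_{\text H}'(0,a)=\ln\bigl(\Gamma(a)/\sqrt{2\pi}\bigr)$ in the master formula \eqref{DdetdefApp}. The only difference is that you supply an outline of how to derive Lerch's formula, which the paper simply quotes.
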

\begin{proof}
The $\zeta$-function \eqref{zetAppprod} is simply the Hurwitz $\zeta$-function
\be\label{zetApp1def} \zeta(s) = \sum_{n\geq 0}\frac{1}{(n+a)^{s}} = \zeta_{\text H}(s;a)\ee
in this case. Eq.\ \eqref{Dspe1} then follows from \eqref{DdetdefApp} and
\be\label{Hurwitzid} \zeta_{\text H}(0;a) = \frac{1}{2}-a\, ,\quad \zeta_{\text H}'(0;a) = \ln\frac{\Gamma (a)}{\sqrt{2\pi}}\,\cdotp\ee
\end{proof}

To deal with the more general case, we have to take into account the fact that, in the $\zeta$-function regularization scheme, the determinant of a product of operators is not necessarily equal to the product of the determinants of the operators. This phenomenon is called the ``multiplicative anomaly'', see \cite{multanomaly}. We are going to prove the following result.\footnote{This is similar to a result relevant for shifted Laplace operators on compact Riemann surfaces proven in the Appendix A of the first reference in \cite{FerK2}.}
\begin{proposition}
For $p>q$,
\be\label{Appprop} 
\det\biggl[z \frac{\prod_{i=1}^{p}\bigl( \hat N + a_{i} \bigr)}{
\prod_{i=1}^{q}\bigl( \hat N + b_{i} \bigr)}\biggr] =z^{\frac{1}{2}-\frac{1}{p-q}(\sum_{i=1}^{p}a_{i}-\sum_{i=1}^{q}b_{i})} \frac{\prod_{i=1}^{p}\det (\hat N + a_{i})}{\prod_{i=1}^{q}\det (\hat N + b_{i})}\, \cdotp\ee
\end{proposition}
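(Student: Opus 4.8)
The plan is to reduce the claim to two scalar identities for the single spectral zeta function
$\zeta(s)=\sum_{n\geq 0}f(n)^{s}$ with $f(n)=\prod_{i=1}^{q}(n+b_{i})\big/\prod_{i=1}^{p}(n+a_{i})$, which is the zeta function underlying the left-hand side (so that $\tilde\zeta(s)=z^{-s}\zeta(s)$). By \eqref{DdetdefApp} the left-hand side equals $z^{\zeta(0)}e^{-\zeta'(0)}$, while by the previous Proposition the right-hand side equals $z^{1/2-S/(p-q)}\prod_{i}e^{-\zeta_{\text H}'(0;a_{i})}\big/\prod_{i}e^{-\zeta_{\text H}'(0;b_{i})}$, where $S=\sum_{i=1}^{p}a_{i}-\sum_{i=1}^{q}b_{i}$. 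Since these must agree for all $z>0$, matching the powers of $z$ and the $z$-independent factors shows the proposition is equivalent to the two statements (a) $\zeta(0)=\tfrac12-\tfrac{S}{p-q}$ and (b) $\zeta'(0)=\sum_{i}\zeta_{\text H}'(0;a_{i})-\sum_{i}\zeta_{\text H}'(0;b_{i})$. In words: the whole multiplicative anomaly must reside in the power of $z$, i.e.\ in $\zeta(0)$, whereas $\zeta'(0)$ must be anomaly-free and reduce to the naive sum of single-factor contributions.

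Both (a) and (b) will follow from one asymptotic computation. I separate the $n=0$ term, which is entire in $s$ and contributes $1$ to $\zeta(0)$ and $\ln f(0)=\sum_{i}\ln b_{i}-\sum_{i}\ln a_{i}$ to $\zeta'(0)$, and for $n\geq 1$ I write $f(n)^{s}=n^{-(p-q)s}\exp\!\big(s\,\tilde\ell(n)\big)$ with $\tilde\ell(n)=\sum_{k\geq 1}\tfrac{(-1)^{k-1}}{k}\tfrac{P_{k}}{n^{k}}$ and $P_{k}=\sum_{i}b_{i}^{k}-\sum_{i}a_{i}^{k}$, so that $P_{1}=-S$. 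Expanding $\exp(s\,\tilde\ell(n))$ in powers of $s$ and resumming over $n$ produces Riemann zeta functions $\zeta_{\text R}\big((p-q)s+k\big)$. The point is that near $s=0$ only three sources survive into $\zeta(0)$ and $\zeta'(0)$: the $s^{0}$ piece $\zeta_{\text R}\big((p-q)s\big)$, the single pole piece $s\,P_{1}\,\zeta_{\text R}\big((p-q)s+1\big)$ whose pole cancels the prefactor $s$, and the regular $s$-linear pieces $s\,\tfrac{(-1)^{k-1}P_{k}}{k}\zeta_{\text R}\big((p-q)s+k\big)$ for $k\geq 2$; every contribution with two or more factors of $\tilde\ell$ carries $s^{2}$ and drops out. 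Using $\zeta_{\text R}(0)=-\tfrac12$ and $\zeta_{\text R}(1+w)=\tfrac1w+\gamma+O(w)$ gives (a) at once, $\zeta(0)=1-\tfrac12-\tfrac{S}{p-q}$, and also pins the residue structure needed below.

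For (b) I collect the $s$-linear pieces into $\zeta'(0)=\ln f(0)-\tfrac{p-q}{2}\ln(2\pi)-S\gamma+\sum_{k\geq 2}\tfrac{(-1)^{k-1}P_{k}}{k}\zeta_{\text R}(k)$, where $-\tfrac{p-q}{2}\ln(2\pi)$ comes from $\zeta_{\text R}'(0)=-\tfrac12\ln(2\pi)$ and $-S\gamma$ is the finite part of the pole term. Regrouping per parameter and invoking the Taylor expansion $\ln\Gamma(1+a)=-\gamma a+\sum_{k\geq 2}\tfrac{(-1)^{k}}{k}\zeta_{\text R}(k)a^{k}$ together with $\Gamma(1+a)=a\Gamma(a)$, the bundle of terms attached to each $a_{i}$ assembles exactly into $\ln\Gamma(a_{i})$ and each $b_{i}$ into $-\ln\Gamma(b_{i})$, so $\zeta'(0)=\sum_{i}\ln\Gamma(a_{i})-\sum_{i}\ln\Gamma(b_{i})-\tfrac{p-q}{2}\ln(2\pi)$. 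Recalling from \eqref{Hurwitzid} that $\zeta_{\text H}'(0;c)=\ln\big(\Gamma(c)/\sqrt{2\pi}\big)$, this is precisely statement (b). The real content here is that the anomaly-free identity (b) is nothing but the Weierstrass series for $\Gamma$ in disguise, the $\sqrt{2\pi}$ bookkeeping coming entirely from the $\zeta_{\text R}'(0)$ term.

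The main obstacle is convergence and the legitimacy of interchanging $\sum_{n}$ with $\sum_{k}$: the $1/n$-expansion of $\ln(n+c)$ and the resulting $k$-sum converge only when $\max_{i}\{a_{i},b_{i}\}<1$. I would dispose of this by splitting $\sum_{n\geq 1}=\sum_{1\leq n\leq N}+\sum_{n>N}$ with $N>\max_{i}\{a_{i},b_{i}\}$, so that the finite part is entire in $s$ and contributes transparently while the asymptotic machinery acts only on the tail; the mismatch is then reabsorbed using $\Gamma(c+1)=c\,\Gamma(c)$ (equivalently, establish the identity on the small-parameter domain as above and extend by analyticity of $\zeta'(0)$ in the $a_{i},b_{i}$). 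As a robustness check that is uniform in the parameters, (a) can be obtained independently by comparing $\zeta(s)$ with the Hurwitz zeta $\zeta_{\text H}\big((p-q)s;\alpha\big)$ at $\alpha=S/(p-q)$ (shifted by an integer should $\alpha\leq 0$): the two leading orders of the large-$n$ expansions coincide by construction, so the difference converges in a neighbourhood of $s=0$ and vanishes there, yielding $\zeta(0)=\zeta_{\text H}(0;\alpha)=\tfrac12-\tfrac{S}{p-q}$.
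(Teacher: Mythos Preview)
Your argument is correct and reduces to exactly the same two identities as the paper, namely $\zeta(0)=\tfrac12-\tfrac{S}{p-q}$ and $\zeta'(0)=\sum_i\zeta_{\text H}'(0;a_i)-\sum_i\zeta_{\text H}'(0;b_i)$, but you reach them by a genuinely different route. You expand $f(n)^{s}$ in inverse powers of $n$, resum into Riemann zetas, and then reassemble the pieces per parameter via the Weierstrass series $\ln\Gamma(1+a)=-\gamma a+\sum_{k\geq 2}\tfrac{(-1)^{k}}{k}\zeta_{\text R}(k)a^{k}$. The paper instead forms directly the single function
\[
Z(s)=\zeta\!\Bigl(\tfrac{s}{p-q}\Bigr)-\tfrac{1}{p-q}\Bigl(\sum_{i}\zeta_{\text H}(s;a_{i})-\sum_{i}\zeta_{\text H}(s;b_{i})\Bigr),
\]
observes that the termwise coefficients $Z_{n}(s)$ satisfy $Z_{n}(s)=O(n^{-s-2})$ (the two leading large-$n$ orders cancel), so the series converges for $\re s>-1$, and then simply notes $Z_{n}(0)=0$, $Z_{n}'(0)=0$ for every $n$. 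This gives both identities in one stroke, uniformly in the parameters, with no need for the Weierstrass product or for the small-parameter domain and its analytic-continuation workaround that you flag. Your final cross-check for $\zeta(0)$, comparing to a single Hurwitz zeta with matching leading asymptotics, is in fact precisely the paper's idea specialised to that one value; pushing that same comparison trick to $\zeta'(0)$ with the full combination of Hurwitz zetas would have given you the paper's proof. The trade-off: your approach makes the mechanism (why the anomaly sits only in the power of $z$) visible through the explicit $\Gamma$-function resummation, while the paper's is shorter and avoids any convergence caveats.
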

In particular, there is no multiplicative anomaly when $z=1$.
\begin{proof}
Consider
\be\label{ZdefApp} Z(s) = \zeta\Bigl(\frac{s}{p-q}\Bigr) - \frac{1}{p-q}\Bigl(\sum_{i=1}^{p}\zeta_{\text H}(s;a_{i}) - \sum_{i=1}^{q}\zeta_{\text H}(s;b_{i})\Bigr)\, .\ee
This function admits a series representation
\be\label{ZseriesApp} Z(s) = \sum_{n\geq 0} Z_{n}(s)\ee
for coefficients given by
\be\label{ZndefApp} Z_{n}(s) = \biggl[\frac{\prod_{i=1}^{q}(n+b_{i})}{\prod_{i=1}^{p}(n+a_{i})}\biggr]^{\frac{s}{p-q}} - \frac{1}{p-q}\biggl[\sum_{i=1}^{p}\frac{1}{(n+a_{i})^{s}}-\sum_{i=1}^{q}\frac{1}{(n+b_{i})^{s}} \biggr]\, .\ee
A straightforward calculation shows that $Z_{n}(s)= O(1/n^{s+2})$ when $n\rightarrow\infty$. This implies that the series representation \eqref{ZseriesApp} converges normally as soon as $\re s>-1$. This series representation can thus be used to evaluate $Z(0)$ and $Z'(0)$; since  $Z_{n}(0) = 0$ and $Z_{n}'(0) = 0$, we get $Z(0)=Z'(0)=0$. From \eqref{ZdefApp} we thus obtain
\begin{align}\label{ZzeroApp} & \zeta(0) = \frac{1}{p-q}\Bigl(\sum_{i=1}^{p}\zeta_{\text H}(0;a_{i}) - \sum_{i=1}^{q}\zeta_{\text H}(0;b_{i})\Bigr) = \frac{1}{2} - \frac{1}{p-q}\Bigl(\sum_{i=1}^{p} a_{i} - \sum_{i=1}^{q} b_{i}\Bigr)\, , \\
\label{ZpzeroApp} & \zeta'(0) = \sum_{i=1}^{p}\zeta_{\text H}'(0;a_{i}) - \sum_{i=1}^{q}\zeta_{\text H}'(0;b_{i})\, .
\end{align}
Together with Eq.\ \eqref{DdetdefApp}, this yields \eqref{Appprop}.
\end{proof}

Overall, we have thus obtained
\be\label{BddetApp} D\bigl(\{a\}_{p};\{b\}_{q};z\bigr) =(2\pi)^{\frac{p-q}{2}}z^{\frac{1}{2}-\frac{1}{p-q}(\sum_{i=1}^{p}a_{i}-\sum_{i=1}^{q}b_{i})}\frac{\prod_{i=1}^{q}\Gamma(b_{i})}{\prod_{i=1}^{p}\Gamma(a_{i})}\,\cvp\ee
a formula used in the main text, Section \ref{oneloopSec}, to evaluate the boundary path integrals in Liouville and JT gravity.


\begin{thebibliography}{99}
%

%
\bibitem{Fer1}{F.~Ferrari, ``Jackiw-Teitelboim Gravity, Random Disks of Constant Curvature, Self-Overlapping Curves and Liouville $\text{CFT}_{1}$,'' arXiv:2402.08052 [hep-th].}
%
\bibitem{Fer2}{F.~Ferrari, ``Random Disks of Constant Curvature: the Lattice Story,''  arXiv:2406.06875 [hep-th].}
%
\bibitem{Fer3}{F.~Ferrari, ``Random Disks of Constant Curvature: the Conformal Gauge Story,'' to appear.}
%
\bibitem{JTappli1}{A.~Almheiri and J.~Polchinski, ``Models of AdS$_{2}$ backreaction and holography,'' \jhep{11}{2015}{014}, arXiv:1402.6334,\\
D.~Stanford and E.~Witten, ``Fermionic Localization of the Schwarzian Theory,'' \jhep{10}{2017}{008}, arXiv:1703.04612 [hep-th],\\
K.~Jensen, ``Chaos in AdS$_2$ Holography,'' \prl{117}{2016}{111601}, arXiv:1605.06098,\\
J.~Maldacena, D.~Stanford and Z.~Yang, ``Conformal symmetry and its breaking in two dimensional Nearly Anti-de-Sitter space,'' \emph{PTEP} {\bf 12} (2016) 12C104 arXiv:1606.01857 [hep-th],\\
J.~Engels\"oy, T.~G.~Mertens and H.~Verlinde,``An investigation of AdS$_{2}$ backreaction and holography,'' \jhep{07}{2016}{139}, arXiv:1606.03438 [hep-th],\\
P.~Saad, S.~H.~Shenker and D.~Stanford, ``JT gravity as a matrix integral,'' arXiv:1903.11115.}
%
\bibitem{JTappli2}{S.~Sachdev and J.~Ye, ``Gapless Spin-Fluid Ground State in a Random Quantum Heisenberg Magnet,'' \prl{70}{1993}{3339}, cond-mat/9212030,\\
A.~Kitaev, ``A Simple Model of Quantum Holography,'' KITP Program \emph{Entanglement in Strongly-Correlated Quantum Matter}, unpublished, see http://online.kitp.ucsb.edu/online/entangled15/,\\
R.~Gurau, ``The $1/N$ expansion of colored tensor models,'' \emph{Annales Henri Poincar\'e} \textbf{12} (2011) 829, arXiv:1011.2726,\\
E.~Witten, ``An SYK-Like Model Without Disorder,'' \jpa{52}{2019}{47}, arXiv:1610.09758,\\
I.~R.~Klebanov and G.~Tarnopolsky, ``Uncolored random tensors, melon diagrams, and the Sachdev-Ye-Kitaev models,'' \prd{95}{2017}{046004}, arXiv:1611.08915,\\
F.~Ferrari, ``The Large $D$ Limit of Planar Diagrams,'' \emph{Ann.\ Inst.\ Henri Poincar\'e Comb.\ Phys.\ Interact.}\ \textbf{D6} (2019) 427, arXiv:1701.01171.}
%
\bibitem{Zamolod}{A.B.~Zamolodchikov, ``On the Entropy of Random Surfaces,'' \plb{117}{1982}{87}.}
%
\bibitem{WittLiousemicl}{D.~Harlow, J.~Maltz and E.~Witten, ``Analytic Continuation of Liouville Theory,'' \jhep{12}{2011}{071}, arXiv:1108.4417 [hep-th].}

\bibitem{Muhlmann}{D.~Anninos, T.~Bautista and B.~M\"uhlmann, ``The two-sphere partition function in two-dimensional quantum gravity,'' \jhep{09}{2021}{116}, arXiv:2106.01665 [hep-th],\\
B.~M\"uhlmann, ``The two-sphere partition function in two-dimensional quantum gravity at fixed area,'' \jhep{09}{2021}{189}, arXiv:2106.04532 [hep-th],\\
D.~Anninos and B.~M\"uhlmann, ``The semiclassical gravitational path integral and random matrices (toward a microscopic picture of a dS$_{2}$ universe),'' \jhep{12}{2021}{206}, arXiv:2111.05344 [hep-th],\\
B.~M\"uhlmann, ``The two-sphere partition function from timelike Liouville theory at three-loop order,'' \jhep{05}{2022}{057}, arXiv:2202.04549 [hep-th].}
%
\bibitem{StanLiousecl}{R.~Mahajan, D.~Stanford and C.~Yan, ``Sphere and disk partition functions in Liouville and in matrix integrals,'' \jhep{07}{2022}{132}, arXiv:2107.01172 [hep-th].}
%
\bibitem{FerK1}{F.~Ferrari, S.~Klevtsov and S.~Zelditch, ``Random geometry, quantum gravity and the K\"ahler potential,'' \plb{705}{2011}{375}, arXiv:1107.4022 [hep-th],\\
F.~Ferrari, S.~Klevtsov and S.~Zelditch, ``Gravitational Actions in Two Dimensions and the Mabuchi Functional,'' \npb{859}{2012}{341}, arXiv:1112.1352 [hep-th],\\
H.~Lacoin, R.~Rhodes and V.~Vargas, ``Path integral for quantum Mabuchi K-energy,'' \emph{Duke Math.\ J.\ }\textbf{171} (2022) no.3, 483, arXiv:1807.01758 [math-ph].}
%
\bibitem{FerK2}{A.~Bilal, F.~Ferrari and S.~Klevtsov, ``2D Quantum Gravity at One Loop with Liouville and Mabuchi Actions,'' \npb{880}{2014}{203}, arXiv:1310.1951 [hep-th],\\
A.~Bilal and L.~Leduc, ``2D quantum gravity on compact Riemann surfaces and two-loop partition function: A first principles approach,'' \npb{896}{2015}{360}, arXiv:1412.5189 [hep-th],\\
L.~Leduc and A.~Bilal, ``2D quantum gravity at three loops: a counterterm investigation,'' \npb{903}{2016}{226}, arXiv:1504.01738 [hep-th].}
%
\bibitem{FZZT}{V.~Fateev, A.~B.~Zamolodchikov and A.~B.~Zamolodchikov, ``Boundary Liouville field theory. 1. Boundary state and boundary two point function,''
arXiv:hep-th/0001012 [hep-th],\\
J.~Teschner, ``Remarks on Liouville theory with boundary,'' PoS \textbf{tmr2000} (2000) 041, 
arXiv:hep-th/0009138 [hep-th].}
%
\bibitem{DOZZ}{H.~Dorn and H.J.~Otto, ``On Correlation Functions for Non-critical Strings with $c \leq 1$ but $d\geq 1$,'' \plb{291}{1992}{39}, hep-th/9206053,\\
H.~Dorn and H.J.~Otto, ``Two and three point functions in Liouville theory,'' \npb{1994}{429}{375}, hep-th/9403141,\\
A.~B.~Zamolodchikov and A.~B.~Zamolodchikov, ``Structure constants and conformal bootstrap in Liouville field theory,'' \npb{477}{1996}{577}, arXiv:hep-th/9506136 [hep-th],\\
A.~Kupiainen, R.~Rhodes and V.~Vargas,``The DOZZ Formula from the Path Integral,'' \jhep{05}{2018}{094}, arXiv:1803.05418 [hep-th]].}
%
\bibitem{bcJTpapers}{A.~Goel, L.~V.~Iliesiu, J.~Kruthoff and Z.~Yang,
``Classifying boundary conditions in JT gravity: from energy-branes to $\alpha$-branes,'' \jhep{04}{2021}{069}, arXiv:2010.12592 [hep-th],\\
F.~Ferrari, ``Gauge Theory Formulation of Hyperbolic Gravity,'' \jhep{03}{2021}{046}, arXiv:2011.02108 [hep-th].}
%
\bibitem{Lioumath}{D.~Kraus and O.~Roth, ``Conformal Metrics,'' arXiv:0805.2235 [math.CV].}
%
\bibitem{detpaper}{S.~Chaudhuri and F.~Ferrari, ``Dirichlet Scalar Determinants On Two-Dimensional Constant Curvature Disks,'' arXiv:2405.14958 [hep-th].}
%
\bibitem{KitaevSuhmic}{A.~Kitaev and S.~J.~Suh, ``Statistical mechanics of a two-dimensional black hole,'' \jhep{05}{2019}{198}, arXiv:1808.07032 [hep-th].}
%
\bibitem{StanfordYangmic}{D.~Stanford and Z.~Yang, ``Finite-cutoff JT gravity and self-avoiding loops,'' arXiv:2004.08005 [hep-th].}
%
\bibitem{multanomaly}{C.~Kassel, \emph{Ast\'erisque} \textbf{177} (1989) 199 (s\'eminaire Bourbaki);\\
M.~Kontsevitch and S.~Vishik, Functional Analysis on the Eve of the 21st Century, vol.\ 1, pages 173--197 (1993),\\
E.~Elizalde, L.~Vanzo and S.~Zerbini, \emph{Zeta function regularization, the multiplicative anomaly and the Wodzicki residue}, \cmp{194}{1998}{613}, [hep-th/9701060];\\
E.~Elizalde, \emph{Applications in physics of the multiplicative anomaly formula involving some basic differential operators}, \npb{532}{1998}{407}, [hep-th/9804118];\\
E.~Elizalde and M.~Tierz, \emph{Multiplicative anomaly and zeta factorization}, \jmp{45}{2004}{1168}, [hep-th/0402186];\\
M.~Wodzicki, Non-Commutative Residues, Chapter 1, Lecture Notes in Mathematics, Springer-Verlag, Berlin (1987).}
%

\end{thebibliography}
\end{document}